\def\BibTeX{{\rm B\kern-.05em{\sc i\kern-.025em b}\kern-.08em
    T\kern-.1667em\lower.7ex\hbox{E}\kern-.125emX}}
\newtheorem{proposition}{Proposition}
\newtheorem{proof}{Proof}
\newtheorem{theorem}{Theorem}
\newtheorem{example}{Example}
\newcommand\scalemath[2]{\scalebox{#1}{\mbox{\ensuremath{\displaystyle #2}}}}
\newcommand{\at}{\mathit{at}}
\newcommand{\DNK}{\textnormal{DyNetKAT}}
\newcommand{\tool}{DyNetiKAT}
\newcommand{\DNA}{\DNK}
\newcommand{\ENK}{E_{\textit{NK}}}
\newcommand{\EDNK}{E_{\textit{DNK}}}
\newcommand{\EDNKtr}{E_{\textit{DNK}}^{\textit{tr}}}
\newcommand{\EDNKnoAIP}{E_{\DNK \setminus AIP}}
\newcommand{\NetKATnoDup}{\textnormal{NetKAT}^{-{\bf{dup}}}}
\newcommand{\drop}{\mathbf{0}}
\newcommand{\hnf}[1]{{\textit{hnf}(#1)}}
\newcommand{\cpolX}{cpol_{X}}
\newcommand{\cpolseqsucc}{cpol_{\_\Seq}^{\checkmark}}
\newcommand{\cpoloplusl}{cpol_{\_\oplus}}
\newcommand{\cpoloplusr}{cpol_{\oplus\_}}
\newcommand{\rec}[1]{\mathbf{rcfg(#1)}}
\newcommand{\recp}[1]{\mathbf{rcfg}_{#1}}
\newcommand{\cpolmsgsend}{cpol_!}
\newcommand{\cpolmsgrec}{cpol_?}
\newcommand{\cpolmsg}{cpol_{\bullet}}
\newcommand{\reconfig}{cpol_{\clubsuit \spadesuit}}
\newcommand{\reconfigsr}{cpol_{!?}}
\newcommand{\reconfigrs}{cpol_{?!}}
\newcommand{\intl}{cpol_{\_||}}
\newcommand{\intr}{cpol_{||\_}}
\newcommand{\packcopy}{\mathbf{1}}
\newcommand{\zeroq}{\bot}
\newcommand{\Par}{\mathop{||}}
\newcommand{\Seq}{\mathop{;}}
\newcommand{\xgtrans}[2]{\,\,{{\xrightarrow{#2}}_{#1}}\,\,}
\newcommand{\trans}[1]{\xgtrans{}{#1}}
\newcommand{\sosrule}[2]{\frac{\raisebox{.7ex}{\satsize{$#1$}}}
                        {\raisebox{-1.0ex}{\satsize{$#2$}}}}
\newcommand{\satsize}{\normalsize}
\newcommand{\dedr}[1]{\ensuremath{\mathbf{(#1)}}}
\newenvironment{todo}{\bigskip\hrule\medskip\noindent}{\medskip\hrule\bigskip}
\newcommand{\hctobs}[1]{\begin{todo}{{\textcolor{red}{Hunkar:}} #1}\end{todo}}
\newenvironment{caseof}{}{\vskip.5\baselineskip}
\newcommand{\case}[3]{\vskip.5\baselineskip\par\noindent {\bfseries Case (#1):} #2\\#3}
\newcommand{\myquad}[1][1]{\hspace*{#1em}\ignorespaces}
\newcommand{\pushright}[1]{\ifmeasuring@#1\else\omit\hfill$\displaystyle#1$\fi\ignorespaces}
\newcommand{\pushleft}[1]{\ifmeasuring@#1\else\omit$\displaystyle#1$\hfill\fi\ignorespaces}
\newtheorem{defn}{Definition}
\newtheorem{lem}[defn]{Lemma}
\begin{document}
\begin{ARXIV}
\onecolumn
\end{ARXIV}

\title{DyNetKAT: An Algebra of Dynamic Networks
\thanks{The work of Georgiana Caltais and H\"unkar Can Tun\c c was supported by the DFG project “CRENKAT”, proj. no. 398056821.
The work of Mohammad Reza Mousavi was supported by the UKRI Trustworthy Autonomous Systems Node in Verifiability, Grant Award Reference EP/V026801/1.}
}

\author{\IEEEauthorblockN{Georgiana Caltais, 
H\"unkar Can Tun\c c}
\IEEEauthorblockA{\textit{University of Konstanz},
Germany \\
$\{$georgiana.caltais, huenkar.tunc$\}$@uni-konstanz.de
}
\and
\IEEEauthorblockN{Hossein Hojjat}
\IEEEauthorblockA{\textit{Tehran Inst. for Advanced Studies}, Iran \\
h.hojjat@teias.institute}
\and
\IEEEauthorblockN{Mohammad Reza Mousavi}
\IEEEauthorblockA{\textit{King's College London}, UK \\
mohammad.mousavi@kcl.ac.uk}
}

\maketitle

\begin{abstract}
We introduce a formal language for specifying dynamic updates for Software Defined Networks. Our language builds upon Network Kleene Algebra with Tests (NetKAT) and adds constructs for synchronisations and multi-packet behaviour to capture the interaction between the control- and data-plane in dynamic updates. We provide a sound and ground-complete axiomatization of our language. We exploit the equational theory to provide an efficient reasoning method about safety properties for dynamic networks. We implement our equational theory in \tool\ -- a tool prototype, based on the Maude Rewriting Logic and the NetKAT tool, and apply it to a case study. We show that we can analyse the case study for networks with hundreds of switches using our initial tool prototype.  
\end{abstract}

\begin{IEEEkeywords}
Software Defined Networks, Dynamic Updates, Dynamic Network Reconfiguration, NetKAT, Process Algebra, Equational Reasoning
\end{IEEEkeywords}

\section{Introduction}\label{sec:introduction}


\begin{ARXIV}
Software defined networking (SDN) has gained immense popularity due to simplicity in network management and offering network programmability. 
Many programming languages have been designed for programming SDNs~\cite{TroisFBM16,LatifSLKBW20}. They range from industrial-scale, hardware-oriented and low-level programming languages such as OpenFlow~\cite{openflow} to domain-specific, high-level and programmer-centric  languages such as Frenetic~\cite{frenetic}.
In recent years, there has been a growing interest in analysable languages based on  mathematical foundations which provide a solid reasoning framework to prove correctness properties in SDNs (e.g., safety). 
\end{ARXIV}

There is a spectrum of mathematically inspired network programming languages that varies between those with a small number of language constructs and those with  expressive language design which allow them to  support more networking features.
\begin{ICALP}
Flowlog~\cite{flowlog} and Kinetic~\cite{kinetic}  are points on the more expressive side of the spectrum, which provide support for formal reasoning based on SAT-solving and model checking, respectively. 
\end{ICALP}
\begin{ARXIV}
On the more expressive side of the spectrum, Flowlog~\cite{flowlog} is an example of a language that uses a powerful formalism (first-order Horn clause logic) to program a Software Defined Network (SDN). 
In order to keep the language decidable, Flowlog disallows recursion in the clauses. For the purpose of formal analysis of a Flowlog program, the authors of ~\cite{flowlog} provide a translator to the Alloy tool.  
As another example of an expressive language, Kinetic~\cite{kinetic} is a language based on finite state machines that is mostly geared towards dynamic feature of SDNs. Model checking is used to formally analyse the Kinetic programs.
\end{ARXIV}
NetKAT~\cite{netkat,netkat-decision} is an example of a minimalist language based on Kleene algebra with tests that has a sound and complete equational theory.
While the core of the language is very simple with a few number of operators, the language has been extended in various ways to support different aspects of networking such as congestion control~\cite{prob-netkat}, history-based routing~\cite{temporal-netkat} and higher-order functions~\cite{fun-temporal-netkat}.

Our starting point is NetKAT, because it provides a clean and analysable framework for specifying SDNs. 
The minimalist design of NetKAT does not cater for  some common (failure) patterns in SDNs, particularly those arising from dynamic reconfiguration and the interaction between the data- and control-plane flows. In~\cite{McClurgHFC16}, the authors have proposed an extension to NetKAT to support stateful network updates. The extension embraces the notion of mutable state in the language which is in contrast to its pure functional nature.
The purpose of this paper is to propose an extension to NetKAT to support dynamic and stateful behaviours.
{On the one hand, we preserve the big-step denotational semantics of NetKAT-specific constructs enabling, for instance, handling flow table updates atomically, in the spirit of~\cite{DBLP:conf/sigcomm/ReitblattFRSW12}. On the other hand, we extend NetKAT in a modular fashion, to integrate concurrent SDN behaviours such as dynamic updates, defined via a small-step operational semantics.}
To this end, we pledge to keep the minimalist design of NetKAT by adding only a few new operators. Furthermore, our extension does not contradict the nature of the language.  

A number of concurrent extensions of NetKAT have been introduced to date \cite{Silva20,Wagemaker20,Kappe20}. 
These extensions followed different design decisions than the present paper and a  comparison of their approaches with ours is provided in Section \ref{sec:lang-design}; 
however, the most important difference lies in the fact that inspired by earlier abstractions in this domain \cite{DBLP:conf/sigcomm/ReitblattFRSW12}, we were committed to 
create different layers for data-plane flows and 
dynamic updates such that every data-plane packet observes a single set of flow tables through its flight through the network.  
This allowed us, unlike the earlier approaches,  to build a layer on top of NetKAT without modifying its semantics. 
Although our presentation in this paper is based on NetKAT, we envisage that our concurrency layer can be modularly (in the sense of Modular SOS \cite{Mosses04a}) used for other network programming languages in the above-mentioned spectrum. We leave a more careful investigation of the modularity on other network  languages  for future work.  


\begin{ARXIV}
\subsection{Running Examples}\label{sec:examples}
Throughout the paper, we focus on modelling with {\DNK} two examples that involve dynamically updating the network configuration. In the first example, stateful firewall, the data-plane initiates the update by allowing a disallowed path in the network as a result of requests received from the trusted intranet. 
In the second, distributed controller, the control-plane initiates the update by modifying the forwarding route of a packet in a multi-controller setting.
\end{ARXIV}

\begin{ICALP} 
\noindent {\textbf{Running Example.}}\label{sec:examples}
To illustrate our language concepts, we focus on modelling with {\DNK} an example of a stateful firewall that involves dynamically updating the flow table.
{The example is overly simplified for the purpose of presentation.} Towards the end of this paper and also in the extended version \cite{arxiv}, we treat more complex and larger-scale case studies to evaluate the applicability and analysability of our language.  
\end{ICALP}

\begin{example}\label{eg:firewall}
A firewall is supposed to protect the intranet of an organization from unauthorised access from the Internet. However, due to certain requests from the intranet, it should be able to open up connections from the Internet to intranet.  An example is when a user within the intranet requests a secure connection to a node on the Internet; in that case, the response from the node should be allowed to enter the intranet. 
The behaviour of updating the flow tables with respects to some events in the network such as receiving a specific packet is a challenging phenomenon for languages such as NetKAT. 

Figure~\ref{fig:stateful1} shows a simplified version of the stateful firewall network.
{
Note that we are not interested in the flow of packets but interested in the flow update.}
In this version, the $\it{Switch}$ does not allow any packet from the port $\it{ext}$ to $\it{int}$ at the beginning.
When the $\it{Host}$ sends a request to the $\it{Switch}$ it opens up the connection. 

\begin{figure}[ht]
\begin{center}
\begin{tikzpicture}[auto, node distance=2cm,>=latex']

    \node [draw,circle] (sw) {$\mathit{Switch}$};
    \node [draw,rounded rectangle,left=2.5cm of sw] (h) {$\mathit{Host}$};
    \node [draw,left=0cm of sw] (int) {$\mathit{int}$};
    \node [draw,right=0cm of sw] (ext) {$\mathit{ext}$};

    \draw [->] (h) -- (int);

\end{tikzpicture}

\end{center}
\caption{Stateful Firewall\label{fig:stateful1}}
\end{figure}
\end{example}

\begin{ARXIV}

\begin{example}\label{eg:controller}
Another running example concerns a well-known challenge in SDNs, namely, race conditions resulting from dynamic updates of flow-tables and in-flight packets \cite{McClurgHC17,Koponen14}. Below we specify a typical scenario for such race conditions; similar scenarios concerning actual bugs are abundant in the literature \cite{Koponen14,El-Hassany16,Kuzniar14}.    

Consider the network topology depicted in Figure \ref{fig:distributedController1}.  
The controller $C1$ controls the top part of the network (switches $S1$, $S3$ and $S5$) and the controller $C2$ is responsible for the bottom part.
Initially, the packets from $H1$, which enter the network through switch $S1$ (port $2$), should be routed through switches $S5$ (through ports $6$ and $7$), $S6$ (through ports $8$ and $10$) and finally port $12$ of switch $S2$, to reach $H2$. 
Due to an event, the controllers have to take down the previous route, and to install a new route in the network that routes packets from $H3$ through $S3$ (ports $1$ to $3$), $S5$ (through ports $5$ to $7$), $S6$ (ports $8$ to $9$) to switch $S4$ (port $11$) to finally reach $H4$. 
It is an important security property that the traffic in these two routes should not mix. In particular, it will be serious breach if packets from $H1$ arrive at $H4$ or vice versa, packets from $H3$ arrive at $H2$. 


\begin{figure}[h]
\centering
\begin{tikzpicture}
\tikzset{switch/.style = {shape=circle,draw,minimum size=1.5em}}
\tikzset{host/.style = {rounded rectangle=0pt,draw,minimum size=1.5em]}}
\tikzset{edge/.style = {->,> = latex}}

\draw[dashed] (-4, -2) to (5, -2);

\node[switch, 
label={[red,anchor=east, inner sep=9pt]above:1}, 
label={[red,anchor=north, inner sep=2pt]below:3}] (S3) at  (0,0) {$S3$};

\node[switch, 
label={[red,anchor=west, inner sep=9pt]above:2}, 
label={[red,anchor=north, inner sep=2pt]below:4}] (S1) at  (2,0) {$S1$};

\node[switch,
label={[red,anchor=east, inner sep=9pt]below:13}, 
label={[red,anchor=south, inner sep=2pt]above:11}] (S4) at  (0,-4) {$S4$};

\node[switch,
label={[red,anchor=west, inner sep=9pt]below:14}, 
label={[red,anchor=south, inner sep=2pt]above:12}] (S2) at  (2,-4) {$S2$};

\node[switch, fill=white, 
label={[red,anchor=west, inner sep=13pt]above:6}, 
label={[red,anchor=east, inner sep=9pt]below:7},
label={[red,anchor=east, inner sep=13pt]above:5},
] (S5) at  (1,-1.3) {$S5$};

\node[switch, fill=white, 
label={[red,anchor=west, inner sep=9pt]above:8}, 
label={[red,anchor=east, inner sep=11pt]below:9},
label={[red,anchor=west, inner sep=11pt]below:10},
] (S6) at  (1,-2.7) {$S6$};

\node[host, label={[red,anchor=west, inner sep=9pt]above:15}] (H4) at  (-2, -4) {$H4$};
\node[host, label={[red,anchor=east, inner sep=9pt]above:16}] (H2) at  (4, -4) {$H2$};
\node[host] (H1) at  (4,0) {$H1$};
\node[host] (H3) at  (-2,0) {$H3$};
\node[draw,rectangle] (C1) at  (-3,-1.2) {$C1$};
\node[draw,rectangle] (C2) at  (-3,-2.7) {$C2$};


\draw (S3) to (S5);
\draw (S6) to (S4);
\draw (S6) to (S2);
\draw (S5) to (S6);
\draw (S1) to (S5);
\draw (S2) to (H2);
\draw (H3) to (S3);
\draw (H1) to (S1);
\draw (H4) to (S4);

\end{tikzpicture}    
    
    \caption{Race Condition in a Distributed Controller}
    \label{fig:distributedController1}
\end{figure}
\end{example}

\end{ARXIV}

\begin{ARXIV}
\subsection{Our Contributions}
\label{sec:contributions}
\end{ARXIV}
\begin{ICALP} 
\noindent {\textbf{Our contributions.}}
\end{ICALP}
The contributions of this paper are summarized as follows: 
\begin{itemize}
    \item we define the syntax and operational semantics of a dynamic extension of NetKAT that allows for modelling and reasoning about control-plane updates and their interaction with data-plane flows ({Sections~\ref{sec:syntax},~\ref{sec:semantics}});
    \item we give a sound and ground-complete axiomatization of our language ({Section~\ref{sec:sem-results}}); and 
    \item we devise analysis methods for reasoning about flow properties using our axiomatization, apply them on examples from the domain and gather and analyze evidence of applicability and efficiency for our approach ({ Sections~\ref{sec:safety}, \ref{sec:implementation}}).
\end{itemize}

\begin{ARXIV}
\subsection{Structure of Paper}\label{sec:structure}
\end{ARXIV}

\begin{ICALP} 
\noindent {\textbf{Structure of Paper.}}
\end{ICALP}
In Section \ref{sec:lang-design}, we provide a brief overview of NetKAT, review our design decision and introduce the syntax and operational semantics of {\DNK}. In Section \ref{sec:sem-results}, we investigate some semantic properties of  {\DNK} by defining a notion of behavioural equivalence and providing a sound and ground-complete axiomatization. We exploit this axiomatization in Section \ref{sec:safety} in an analysis method. We implement and apply our analysis method in Section \ref{sec:implementation} on a case study and report about its scalability on large examples with hundreds of switches. We conclude the paper and present some avenues for future work in Section~\ref{sec:conclusions}.

\section{Language Design}\label{sec:lang-design}

In what follows, we provide a brief overview of the NetKAT syntax and semantics~\cite{netkat}. Then, we motivate our language design decisions, we introduce the syntax of {\DNK} and its underlying semantics, and provide the corresponding encoding of our running examples presented in Section~\ref{sec:examples}.

\subsection{Brief Overview of NetKAT}\label{sec:netKAT}
We proceed by first introducing some basic notions that are used throughout the paper.

\begin{defn}[Network Packets.]\label{def::config}
Let $F = \{f_1, \ldots, f_n\}$ be a set of field names $f_i$ with $i \in \{1, \ldots n\}$.
We call \emph{network packet} a function in $F \rightarrow \mathbb{N}$ that maps field names in $F$ to values in $\mathbb{N}$. We use $\sigma, \sigma'$ to range over network packets. We write, for instance, $\sigma(f_i) = v_i$ to denote a test checking whether the value of $f_i$ in $\sigma$ is $v_i$. Furthermore, we write $\sigma[f_i := n_i]$ to denote the assignment of $f_i$ to $v_i$ in $\sigma$.

A (possibly empty) \emph{list of packets} is formally defined as a function from natural numbers to packets, where the natural number in the domain denotes the position of the packet in the list such that the domain of the function forms an interval starting from $0$.

The \emph{empty list} is denoted by $\langle \rangle$ and is formally defined as the empty function (the function with the empty set as its domain). 
Let $\sigma$ be a packet and $l$ be a list, then  $\sigma::l$ is the list $l'$ in which $\sigma$ is at position $0$ in $l'$, i.e., $l'(0) = \sigma$, and $l'(i+1) = l(i)$, for all $i$ in the domain of $l$.
\end{defn}

%
In Figure~\ref{fig:NetKAT-syn-sem}, we recall the {NetKAT} syntax and semantics~\cite{netkat}.

\begin{figure}[ht]
\[
\small
\begin{array}{rcl}
 \multicolumn{3}{l}{\textnormal{\bf{NetKAT Syntax:}}} \\
 \mathit{Pr} & ::= & \drop \mid \packcopy{} \mid \mathit{Pr} + \mathit{Pr} \mid \mathit{Pr} \cdot \mathit{Pr} \mid \neg \mathit{Pr} \\
     N & ::= & \mathit{Pr} \mid f \leftarrow n \mid N + N \mid N \cdot N \mid N^* \mid \textbf{dup}
\end{array}
\]
\[
\small
\begin{array}{cc}
    \begin{array}{rcl}
   \multicolumn{3}{l}{\textnormal{\bf{NetKAT Semantics:}}} \\
   \llbracket \packcopy \rrbracket (h) & \triangleq & \{h\}\\
   \llbracket \drop \rrbracket (h) & \triangleq & \{\}\\
   \llbracket f=n \rrbracket \; (\sigma{:}{:}h) & \triangleq &\; \left\{
                \begin{array}{ll}
                  \{\sigma{:}{:}h\} & \textnormal{if}\; \sigma(f) = n \\
                  \{ \} & \textnormal{otherwise}
                \end{array}
              \right.\\
    \llbracket \neg a \rrbracket \; (h) & \triangleq &\; \{h\} \setminus\,\, \llbracket a \rrbracket \; (h) \\
    \llbracket f \leftarrow n \rrbracket \; (\sigma{:}{:}h) &\triangleq &\; \{ \sigma[f := n]{:}{:}h\} \notag \\
    \llbracket p + q \rrbracket \; (h) &\triangleq &\; \llbracket p \rrbracket \; (h) \cup \llbracket q \rrbracket \; (h)\\
    \llbracket p \cdot q \rrbracket \; (h)  &\triangleq &\; ( \llbracket p \rrbracket \bullet \llbracket q \rrbracket) \; (h) \\
    \llbracket p^* \rrbracket \; (h) &\triangleq &\; \bigcup_{i \in N} F^i\;(h)  \\
F^0\; (h) &\triangleq &\; \{ h \}\\
F^{i+1}\;(h) &\triangleq &\; (\llbracket p \rrbracket \bullet F^i)\;(h)  \\
(f \bullet g)(x) &\triangleq &\; \bigcup \{g(y) \mid y \in f(x)\}\\
\llbracket \textbf{dup} \rrbracket \; (\sigma{:}{:}h) &\triangleq &\; \{\sigma{:}{:}(\sigma{:}{:}h)\} 

 \end{array}
\end{array}
\]
\caption{NetKAT: Syntax and Semantics~\cite{netkat}}
\label{fig:NetKAT-syn-sem}
\end{figure}

The predicate for dropping a packet is denoted by $\drop{}$, while passing on a packet (without any modification) is denoted by $\packcopy{}$. The predicate checking whether the field $f$ of a packet has value $n$ is denoted by $(f = n)$; if the predicate fails on the current packet it results on dropping the packet, otherwise it will pass the packet on.  Disjunction and conjunction between predicates are denoted by  $\mathit{Pr} + \mathit{Pr}$ and $\mathit{Pr} \cdot \mathit{Pr}$, respectively. Negation is denoted by  $\neg \mathit{Pr}$. 
Predicates are the basic building blocks of NetKAT policies and hence, a predicate is a policy by definition. The policy that modifies the field $f$ of the current packet to take value $n$ is denoted by $(f \leftarrow n)$. 
A multicast behaviour of policies is denoted by $N + N$, while sequencing policies (to be applied on the same packet) are denoted by $N \cdot N$. The repeated application of a policy is encoded as $N^*$. The construct {\textbf{dup}} simply makes a copy of the current network packet.

In~\cite{netkat}, lists of packets are referred to as \emph{histories}.
Let $H$ stand for the set of packet histories, and ${\cal P}(H)$ denote the powerset of $H$. More formally, the denotational semantics of NetKAT policies is inductively defined via the semantic map $\llbracket - \rrbracket : N \rightarrow (H \rightarrow {\cal P}(H))$ in Figure~\ref{fig:NetKAT-syn-sem}, where $N$ stands for the set of NetKAT policies,
$h \in H$ is a packet history, $a \in {\textit Pr}$ denotes a NetKAT predicate and $\sigma \in F \rightarrow \mathbb{N}$ is a network packet.

\begin{ICALP}
For a reminder, the equational axioms of NetKAT include the Kleene Algebra axioms, Boolean Algebra axioms and the so-called Packet Algebra axioms that handle NetKAT networking specific constructs such as field assignments and {\textbf{dup}}. In this paper, we write $\ENK$ to denote the NetKAT axiomatization~\cite{netkat}.
\end{ICALP}

\begin{ARXIV}
For a reminder, the equational axioms of NetKAT, denoted by $\ENK$, are provided in Figure~\ref{fig::axioms-NetKAT}.
$\ENK$ includes the Kleene Algebra axioms (KA-$\ldots$), Boolean Algebra axioms (BA-$\ldots$) and Packet Algebra axioms (PA-$\ldots$). The novelty is the set of PA-axioms. In short, PA-MOD-MOD-COMM states that the order in which two different packet fields are assigned does not matter. PA-MOD-FILTER-COMM encodes a similar property, for the case of a field assignment followed by a test of a different field's value. PA-MOD-FILTER ignores the test of a field preceded by an assignment of the same value to the field. Orthogonaly, PA-FILTER-MOD ignores a field assignment preceded by a test against the assigned value. PA-MOD-MOD states that a sequence of assignments to the same field only takes into consideration the last assignment. PA-CONTRA encodes the fact that a field cannot have two different values at the same point. PA-MATCH-ALL identifies the policy accepting all the packets with the sum of all possible tests of a field's value. Intuitively, PA-DUP-FILTER-COMM states that adding the current packet to the history is independent of tests.

\begin{figure}[ht]
\center
\[
\small{\scalemath{.8}{
\begin{array}{r@{}c@{}ll|r@{}c@{}ll}
p + (q+r) & \,\equiv\, & (p+q) + r & \textnormal{\footnotesize KA-PLUS-ASSOC}~ &
~a + (b\cdot c) & \,\equiv\, &\, (a+b)\cdot(a+c) & \textnormal{\footnotesize BA-PLUS-DIST}\\
p + q& \,\equiv\, & q+p & \textnormal{\footnotesize KA-PLUS-COMM} &
a +1 & \,\equiv\, &\, 1 & \textnormal{\footnotesize BA-PLUS-ONE}\\
p + 0& \,\equiv\, &p & \textnormal{\footnotesize KA-PLUS-ZERO} &
a +\neg a & \,\equiv\, &\, 1 & \textnormal{\footnotesize BA-EXCL-MID}\\
p + p & \,\equiv\, & p & \textnormal{\footnotesize KA-PLUS-IDEM} &
a \cdot b & \,\equiv\, &\, b\cdot a & \textnormal{\footnotesize BA-SEQ-COMM}\\
p \cdot (q\cdot r) & \,\equiv\, & (p\cdot q) \cdot r & \textnormal{\footnotesize KA-SEQ-ASSOC}~ &
a \cdot \neg a & \,\equiv\, &\, 0 & \textnormal{\footnotesize BA-CONTRA}\\
1\cdot p & \,\equiv\, & p & \textnormal{\footnotesize KA-ONE-SEQ}~ &
a \cdot a & \,\equiv\, &\,a & \textnormal{\footnotesize BA-SEQ-IDEM}\\
p\cdot 1 & \,\equiv\, & p & \textnormal{\footnotesize KA-SEQ-ONE}~ & & &  & \\
p\cdot (q+r) & \,\equiv\, & p\cdot q + p\cdot r & \textnormal{\footnotesize KA-SEQ-DIST-L}~ & f \leftarrow n \cdot f' \leftarrow n' & \,\equiv\, & f' \leftarrow n' \cdot f \leftarrow n, \textnormal{if } f \not = f' ~~& \textnormal{\footnotesize PA-MOD-MOD-COMM}\\
(p+q)\cdot r & \,\equiv\, & p\cdot r + q\cdot r & \textnormal{\footnotesize KA-SEQ-DIST-R}~ & f \leftarrow n \cdot f' = n' & \,\equiv\, & f' = n' \cdot f \leftarrow n, \textnormal{if } f \not = f' & \textnormal{\footnotesize PA-MOD-FILTER-COMM} \\
0\cdot p & \,\equiv\, & 0 & \textnormal{\footnotesize KA-ZERO-SEQ}~ & {\bf dup} \cdot f = n & \,\equiv\, & f = n \cdot {\bf dup} & \textnormal{\footnotesize PA-DUP-FILTER-COMM} \\
p\cdot 0 & \,\equiv\, & 0 & \textnormal{\footnotesize KA-ZERO-SEQ}~ & f \leftarrow n \cdot f = n & \,\equiv\, & f \leftarrow n & \textnormal{\footnotesize PA-MOD-FILTER} \\
1 + p\cdot p^* & \,\equiv\, & p^* & \textnormal{\footnotesize KA-UNROLL-L}~ & f = n \cdot f \leftarrow n & \,\equiv\, & f = n & \textnormal{\footnotesize PA-FILTER-MOD} \\
1 + p^*\cdot p & \,\equiv\, & p^* & \textnormal{\footnotesize KA-UNROLL-R}~ & f \leftarrow n \cdot f \leftarrow n' & \,\equiv\, & f \leftarrow n' & \textnormal{\footnotesize PA-MOD-MOD} \\
q+p\cdot r \leq r  & \,\Rightarrow\, & p^* \cdot q\leq r & \textnormal{\footnotesize KA-LFP-L}~ & f = n \cdot f = n' & \,\equiv\, &\,0, \textnormal{if } n \not= n'& \textnormal{\footnotesize PA-CONTRA} \\
p+q\cdot r \leq q  & \,\Rightarrow\, & p\cdot r^*\leq q & \textnormal{\footnotesize KA-LFP-R}~ & \Sigma_i f = i & \,\equiv\, & \,1 & \textnormal{\footnotesize PA-MATCH-ALL} 
\end{array}
}}
\]
\caption{$\ENK$: NetKAT Equational Axioms~\cite{netkat}} \label{fig::axioms-NetKAT} 
\end{figure}
\end{ARXIV}

\subsection{Design Decisions}\label{sec:design-decisions}
Our main motivation behind {\DNA} is to have a \emph{minimalistic} language that can model \emph{control-plane} and \emph{data-plane} network traffic and their interaction. 
Our choice for a minimal language is motivated by our desire to use our language as a basis for scalable analysis. 
We would like to be able to compile major practical languages into ours. 
Our minimal design helps us reuse much of the well-known scalable analysis techniques.   Regarding its modelling capabilities, we are interested in modelling the stateful and dynamic behaviour of networks emerging from these interactions.  
We would like to be able to model control messages, connections between controllers and switches, data packets, links among switches, and model and analyse their interaction in a seamless manner.

Based on these motivations, we start off with NetKAT as a fundamental and minimal network programming language, which allows us to model the basic policies governing the network traffic. 
The choice of NetKAT, in addition to its minimalist nature, is motivated by its rigorous semantics and equational theory, and the existing techniques and tools for its analysis. 
This motivates our next design constraint, namely, to build upon NetKAT in a hierarchical manner and without redefining its semantics. 
This constraint should not be taken lightly as the challenges in the recent concurrent extensions of NetKAT demonstrated \cite{Silva20,Wagemaker20,Kappe20}. 
We will elaborate on this point, in the presentation of our syntax and semantics. We can achieve this thanks to the abstractions introduced in the domain  \cite{DBLP:conf/sigcomm/ReitblattFRSW12} that allows for a neat layering of data-plane and control-plan flows such that every data-plane flow sees one set of flow-tables in its flight through the network.

We then introduce a few extensions and modifications to cater for the phenomena we desire to model in our extension regarding control-plane and dynamic and stateful behaviour: 

\begin{itemize}
    \item Synchronisation: we introduce a basic mechanism of handshake synchronisation with the possibility of communicating a network program (a flow table). 
    This construct allows for capturing the dynamicity and interaction between the control and data planes.
    
    \item Guarded recursion: we introduce the concept of recursion to model the (persistent) dynamic changes that result from control messages and stateful behaviour.
    In other words, recursion is used to model the new state of the flow tables. 
    An alternative modelling construct could have been using ``global'' variables and guards, but we prefer recursion due to its neat algebraic representation. 
    We restrict the use of recursion to guarded recursion, that is a policy should be applied before changing state to a new recursive definition, in order to remain within a decidable and analyse-able realm.
    A natural extension of our framework could introduce formal parameters and parameterised recursive variables; this future extension is orthogonal to our existing extensions and in this paper, we go for a minimal extension in which the parameters are coded in variable names. 
    
    \item Multi-packet semantics: we introduce the semantics of treating a list of packets, which is essential for studying the interaction between control- and data plane packets. This is in contrast with NetKAT where a single-packet semantics is introduced. The introduction of multi-packet semantics also called for a new operator to denote the end of applying a flow-table to the current packet and proceeding with the next packet (possibly with the modified flow-table in place). This is our new sequential composition operator, denoted by ``$;$''. 
\end{itemize}

\subsection{DyNetKAT Syntax}\label{sec:syntax} 

As already mentioned, NetKAT provides the possibility of recording the individual ``hops'' that packets take as they go through the network by using the so-called \textbf{dup} construct. The latter keeps track of the state of the packet at each intermediate hop. As a brief reminder of the approach in~\cite{netkat}: assume a NetKAT switch policy $p$ and a topology $t$, together with an ingress ${\it in}$ and an egress ${\it out}$. Checking whether ${\it out}$ is reachable from ${\it in}$ reduces to checking:
$
{\it in} \cdot \textbf{dup} \cdot (p \cdot t \cdot \textbf{dup})^* \cdot {\it out} \not \equiv \drop
$ (see Definition $2$ and Theorem $4$ in~\cite{netkat}). Furthermore, as shown in~\cite{netkat-decision}, \textbf{dup} plays a crucial role in devising the NetKAT language semantics in a coalgebraic fashion, via Brzozowski-like derivatives on top of NetKAT coalgebras (or NetKAT automata) corresponding to NetKAT expressions.

We decided to depart from NetKAT in this respect, due to our important constraint not to redefine the NetKAT semantics: the \textbf{dup} expression allows for observable intermediate steps that result from incomplete application of flow-tables and in concurrency scenarios, the same data packet may become subject to more than one flow table due to the concurrent interactions  with the control plane. For this semantics to be compositional, one needs to define a small step operational semantics in such a way that the small steps in predicate evaluation also become visible (see our past work on compositionality of SOS with data on such constraints \cite{MOUSAVI2005107}). This will first break our constraint in building upon NetKAT semantics and secondly, due to the huge number of possible interleavings, make the resulting state-space intractable for analysis. 


In addition to the argumentation above, note that similarly to the approach in~\cite{netkat}, we work with packet fields ranging over finite domains. Consequently, our analyses can be formulated in terms of reachability properties, further verifiable by means of \textbf{dup}-free expressions of shape:
$
{\it in} \cdot (p \cdot t )^* \cdot {\it out} \not \equiv \drop
$.
Hence, we chose to define {\DNK} synchronization, guarded recursion and multi-packet semantics on top of the \textbf{dup}-free fragment of NetKAT, denoted by {$\NetKATnoDup$}.

The syntax of {\DNA} is defined on top of the \textbf{dup}-free fragment of NetKAT as:
\begin{equation}\label{def::DNA-syntax}
\begin{array}{ccl}
     \mathit{N} & ::= & {\NetKATnoDup}\\[1ex]
     \mathit{D} & ::= &
     { \zeroq } \mid
     \mathit{N} \Seq \mathit{D}
     \mid  x?\mathit{N} \Seq D \mid 
     x!\mathit{N} \Seq D \mid \mathit{D} \Par \mathit{D} \mid
     D \oplus D \mid X  \\
     && X \triangleq D
     
\end{array}
\end{equation}

We sometimes write $p \in \textnormal{NetKAT}$, $p \in {\NetKATnoDup}$ or, respectively, $p \in \textnormal{\DNK}$ in order to refer to a NetKAT, ${\NetKATnoDup}$ or, respectively, {\DNK} policy $p$.

The {\DNK}-specific constructs are as follows.
By $\bot$ we denote a dummy policy without behaviour.
Our new sequential composition operator, denoted by $\mathit{N} \Seq \mathit{D}$, specifies when the $\NetKATnoDup$ policy $N$ applicable to the current packet has come to a successful end and, thus, the packet can be transmitted further and the next packet can be fetched for processing according to the rest of the policy $D$.

Communication in {\DNK}, encoded via $x!\mathit{N} ; D$ and $x?\mathit{N} ; D$, consists of two steps. In the first place, sending and receiving $\NetKATnoDup$ policies through channel $x$ are denoted by $x!\mathit{N}$, and $x?\mathit{N}$. Intuitively, these correspond to updating the current network configuration according to $N$. Secondly, as soon as the sending or receiving messages are successfully communicated, a new packet is fetched and processed according to $D$.
The  parallel composition of two {\DNK} policies (to enable synchronization) is denoted by $ \mathit{D} \Par \mathit{D}$.

As it will become clearer in Section~\ref{sec:semantics} (semantics), communication in {\DNK} guarantees preservation of well-defined behaviours when transitioning between network configurations. This corresponds to the so-called per-packet consistency in~\cite{DBLP:conf/sigcomm/ReitblattFRSW12}, and it guarantees that every packet traversing the network is processed according to exactly one $\NetKATnoDup$ policy.

Non-deterministic choice of {\DNK} policies is denoted by $D \oplus D$.
For a non-determinstic choice over a finite domain $P$, we use the syntactic sugar $\oplus_{p \in P} P'$, where $p$ appears as ``bound variable'' in $P'$; this is interpreted as a sum of finite summand by replacing the variable $p$ with  all its possible values in $P$.

Finally, one can use recursive variables $X$ in the specification of {\DNK} policies, where each recursive variable should have a unique defining equation $X \triangleq D$. 

For the simplicity of notation, we do not explicitly specify the trailing ``$; \bot$'' in our policy specifications, whenever clear from the context.

In Figure~\ref{fig:stateful2} we provide the {\DNK} formalization of the firewall in Example~\ref{eg:firewall}.
In the {\DNK} encoding, we use the message channel $\it{secConReq}$ to open up the connection and $\it{secConEnd}$ to close it.
We model the behavior of the switch using the two programs $\mathit{Switch}$ and $\mathit{Switch'}$.

\begin{ICALP}
\begin{figure}[h]
		\[
		\def\arraystretch{1.2}
		\footnotesize{
			\begin{array}{r@{}c@{}ll}
     \mathit{Switch} & \triangleq &   \big((port = int) \cdot (port \leftarrow  ext) \big) \Seq \mathit{Switch}  \oplus \\
                                 && \big( (port = ext) \cdot \drop \big)   \Seq \mathit{Switch} \oplus \\ 
                                 && secConReq? \packcopy  \Seq Switch' \\ \\
         
     \mathit{Switch}' & \triangleq &   \big( (port = int) \cdot (port \leftarrow  ext) \big) \Seq \mathit{Switch}'  \oplus \\ 
                                 && \big( (port = ext) \cdot (port \leftarrow  int) \big) \Seq \mathit{Switch}'  \oplus \\ 
                                 && secConEnd? \packcopy \Seq Switch\\ \\
    \mathit{Host} & \triangleq  & secConReq!\packcopy \Seq \mathit{Host} \oplus
      secConEnd!\packcopy \Seq \mathit{Host} \\ \\
      \mathit{Init} & \triangleq &   \mathit{Host} \Par \mathit{Switch}  \\
\end{array}
}
\]
\caption{Stateful Firewall in {\DNK}\label{fig:stateful2}}
\end{figure}
\end{ICALP}

\begin{ARXIV}
\begin{figure}[h]
\begin{center}
{
\[
\begin{array}{ccl}
     \mathit{Host} & \triangleq  & secConReq!\packcopy \Seq \mathit{Host} \oplus\\
     &   & secConEnd!\packcopy \Seq \mathit{Host} \\ \\
     \mathit{Switch} & \triangleq &   \big((port = int) \cdot (port \leftarrow  ext) \big) \Seq \mathit{Switch}  \oplus \\
                                 && \big( (port = ext) \cdot \drop \big)   \Seq \mathit{Switch} \oplus \\ 
                                 && secConReq? \packcopy  \Seq Switch' \\ \\
         
     \mathit{Switch}' & \triangleq &   \big( (port = int) \cdot (port \leftarrow  ext) \big) \Seq \mathit{Switch}'  \oplus \\ 
                                 && \big( (port = ext) \cdot (port \leftarrow  int) \big) \Seq \mathit{Switch}'  \oplus \\ 
                                 && secConEnd? \packcopy \Seq Switch \\    \\        
      \mathit{Init} & \triangleq &   \mathit{Host} \Par \mathit{Switch}  \\
\end{array}
\]
}
\end{center}
\caption{Stateful Firewall in {\DNK}\label{fig:stateful2}}
\end{figure}
\end{ARXIV}

\begin{ARXIV}
In Figure~\ref{fig:independent2} we provide the {\DNK} formalization of the distributed controllers in Example~\ref{eg:controller}. 
In the code in Figure~\ref{fig:independent2} the controllers work independently to update the network (which can lead to security breach).
The specification $\mathit{SwitchX}_{ft}$ is a generic specification for the behaviour of all switches in this example; the domain of $P$ in this example is the set of all $5$ policies that are being communicated, such as $\drop$, 
$((port = 11) \cdot  (port \leftarrow 13))$, and $((port = 5) \cdot  (port \leftarrow 7))$.

However, in the code in Figure~\ref{fig:synchronise2} the controllers synchronise before updating the rest of the switches.

\begin{figure}
\[
\begin{array}{lcl}
     \mathit{L} & \triangleq &   (( (port = 3) \cdot  (port \leftarrow 5) )   +  \\
     && ( (port = 4) \cdot  (port \leftarrow 6) ) 
     +  \\
     && ( (port = 7) \cdot  (port \leftarrow 8) )  
     +  \\
     && ( (port = 9) \cdot  (port \leftarrow 11) ) 
     +  \\
     && ( (port = 10) \cdot  (port \leftarrow 12) ) 
     +  \\
     && ( (port = 13) \cdot  (port \leftarrow 15) )
    +  \\
     && ( (port = 14) \cdot  (port \leftarrow 16) ))\\ \\
     S_1        & \triangleq &  (port = 2) \cdot  (port \leftarrow 4) \\
     S_2        & \triangleq &  (port = 12) \cdot  (port \leftarrow 14) \\
     S_3        & \triangleq & \drop \\
     S_4        & \triangleq & \drop \\
     S_5        & \triangleq & (port = 6) \cdot  (port \leftarrow 7) \\
     S_6        & \triangleq & (port = 8) \cdot  (port \leftarrow 10) \\

                    
     \mathit{SDN}_{X_1, \ldots, X_6} & \triangleq & (  (X_1 + \ldots + X_6) 
     \cdot L) ^* 
     \Seq  \mathit{SDN}_{X_1, \ldots, X_6}  \oplus \\
                       && \sum_{X'_i \in \mathit{FT}} \mathit{upSi}?X'_i \Seq \mathit{SDN}_{X_1, \ldots, X'_i, \ldots, X_6}  \\ \\
                       
        \mathit{ft3} & \triangleq & (port = 1) \cdot  (port \leftarrow 3)\\
    \mathit{ft4} & \triangleq & (port = 11) \cdot  (port \leftarrow 13)\\
    \mathit{ft5} & \triangleq & (port = 5) \cdot  (port \leftarrow 7)\\
    \mathit{ft6} & \triangleq & (port = 8) \cdot  (port \leftarrow 9)\\
    \mathit{FT} & = & \{ \drop{}, \mathit{ft3}, \mathit{ft4}, \mathit{ft5}, \mathit{ft6}\} \\ \\
     
      \mathit{SDN} & \triangleq & \mathit{SDN}_{S_1, \ldots, S_6} \Par C_1 \Par C_2  \\ \\\
    
     C_1 & \triangleq &   upS1!\drop{}  \Par 
                    upS3!\mathit{ft3}   \Par
                    upS5!\mathit{ft5}   \\ \\
     C_2 & \triangleq &   upS2!\drop{}  \Par 
                    upS4!\mathit{ft4}  \Par
                    upS6!\mathit{ft6} 
\end{array}
\]
    \caption{Distributed Controller in {\DNK}: Independent Controllers}
    \label{fig:independent2}
\end{figure}

\begin{figure}
\noindent
\begin{minipage}[t]{.5\textwidth}
\[
\begin{array}{ccl}
     C_1 & \triangleq &  upS1!\drop{}  ; \\ 
                    && syn!\packcopy{} ;    \\
                    && upS3!( (port = 1) \cdot  (port \leftarrow 3))     ; \\
                    && upS5!( (port = 5) \cdot  (port \leftarrow 7)) \\
                    \\
\end{array}
\]

\end{minipage}
\begin{minipage}[t]{.5\textwidth}
\[
\begin{array}{ccl}
     C_2 & \triangleq &   upS2!\drop{}  ; \\ 
                    && syn?\packcopy{} ;    \\
                    && upS4!( (port = 11) \cdot  (port \leftarrow 13))  ;   \\
                  && upS6!( (port = 8) \cdot  (port \leftarrow 9)) 
\end{array}
\]

\end{minipage}
    \caption{Distributed Controller in {\DNK}: Synchronizing Controllers}
    \label{fig:synchronise2}
\end{figure}
\end{ARXIV}

\subsection{DyNetKAT Semantics}\label{sec:semantics}

\begin{ICALP}
The operational semantics of {\DNK} in Figure~\ref{fig::sem-queue-sem} is provided over configurations of shape $(d, H, H')$, where $d$ stands for the current {\DNK} policy, $H$ is the list of packets to be processed by the network according to $d$ and $H'$ is the list of packets handled successfully by the network. The rule labels $\gamma$ range over pairs of packets $(\sigma, \sigma')$ or communication/reconfiguration-like actions of shape $x!q,\,x?q$ or $\rec{x,q}$, depending on the context.

Note that the {\DNK} semantics is devised in a ``layered'' fashion. Rule $\dedr{\cpolseqsucc}$ in Figure~\ref{fig::sem-queue-sem} is the base rule that makes the transition between the NetKAT denotations and {\DNK} operations. 
More precisely, whenever $\sigma'$ is a packet resulted from the successful evaluation of a NetKAT policy $p$ on $\sigma$, a $(\sigma, \sigma')$-labelled step is observed at the level of {\DNK}. This transition applies whenever the current configuration encapsulates a {\DNK} policy of shape $p;q$ and a list of packets to be processed starting with $\sigma$. The resulting configuration continues with evaluating $q$ on the next packet in the list, while $\sigma'$ is marked as successfully handled by the network.

The remaining rules in Figure~\ref{fig::sem-queue-sem} define non-deterministic choice $\oplus$, synchronization $\Par$ and recursion $X$ in the standard fashion. Note that synchronizations leave the packet lists unchanged. Due to space limitation, we omitted the explicit definitions of the symmetric cases for $\oplus$ and $\Par$. The full semantics is provided in~\cite{arxiv}. 

\begin{figure*}[t]
\footnotesize{
\[
\begin{array}{|c|c|}
\hline &
\\
\dedr{\cpolseqsucc} \sosrule{{\sigma' \in \llbracket p \rrbracket(\sigma \!\!::\!\! \langle \rangle)} }{ ( p ; q, \sigma :: H, H')  \trans{(\sigma, \sigma')} ( q, H , \sigma' :: H')}
& 
\dedr{\cpolX}\sosrule{(p, {H_0,  H_1}) \trans{\gamma} (p', {H'_0,  H'_1}) }{ ( X, H_0, H_1)  \trans{\gamma} ( p' , H'_0, H'_1)} X \triangleq p
\\ & \\
\hline  &
\\
\dedr{\cpoloplusl}\sosrule{(p, H_0, H'_0)  \trans{\gamma} (p', H_1, H'_1)  }{ (p \oplus q, H_0, H'_0)  \trans{\gamma} ( p', H_1, H'_1)}
& 
\dedr{\intl}\sosrule{(p, H_0, H'_0)  \trans{\gamma} (p', H_1, H'_1)  }{ (p || q, H_0, H'_0)  \trans{\gamma} ( p' || q, H_1, H'_1)}
\\ & \\
\hline  
\multicolumn{2}{|c|}{} \\
\multicolumn{2}{|c|}{
\dedr{\cpolmsg}\sosrule{}{ (x\, \bullet \, p ; q, H, H')  \trans{x \bullet p} (  q, H, H')} ~~~\bullet \in \{?, !\}
}\\
\multicolumn{2}{|c|}{} \\
\hline 
\multicolumn{2}{|c|}{} \\
\multicolumn{2}{|c|}{
\dedr{\reconfig}\sosrule{(q, {H,  H'})  \trans{x \,\clubsuit\, p} (q', H,  H')  \quad (s, {H,  H'})  \trans{x \,\spadesuit\, p} (s', H,  H')  }{ ( q || s, H, H')  \trans{\rec{x,p}} ( q' || s' , H, H')}~~~
\begin{array}{c}
     \clubsuit = ? ~~~ \spadesuit = !\\
    \textnormal{or}\\ 
     \clubsuit = ! ~~~ \spadesuit = ?
\end{array}
}
\\
\multicolumn{2}{|c|}{} \\
\hline
\end{array} 
\]
\begin{center}
$\gamma ::= (\sigma, \sigma') \mid x!q \mid x?q \mid {\rec{x,q}}$
\end{center}
}
\caption{{\DNK}: Operational Semantics (relevant excerpt)}
\label{fig::sem-queue-sem} 
\end{figure*}

\end{ICALP}

\begin{ARXIV}
The operational semantics of {\DNK} in Figure~\ref{fig::sem-queue-sem} is provided over configurations of shape $(d, H, H')$, where $d$ stands for the current {\DNK} policy, $H$ is the list of packets to be processed by the network according to $d$ and $H'$ is the list of packets handled successfully by the network. The rule labels $\gamma$ range over pairs of packets $(\sigma, \sigma')$ or communication/reconfiguration-like actions of shape $x!q,\,x?q$ or $\rec{x,q}$, depending on the context.

Note that the {\DNK} semantics is devised in a ``layered'' fashion. Rule $\dedr{\cpolseqsucc}$ in Figure~\ref{fig::sem-queue-sem} is the base rule that makes the transition between the NetKAT denotations and {\DNK} operations. More precisely, whenever $\sigma'$ is a packet resulted from the successful evaluation of a NetKAT policy $p$ on $\sigma$, a $(\sigma, \sigma')$-labelled step is observed at the level of {\DNK}. This transition applies whenever the current configuration encapsulates a {\DNK} policy of shape $p;q$ and a list of packets to be processed starting with $\sigma$. The resulting configuration continues with evaluating $q$ on the next packet in the list, while $\sigma'$ is marked as successfully handled by the network.

The remaining rules in Figure~\ref{fig::sem-queue-sem} define non-deterministic choice, synchronization and recursion in the standard fashion.

Rules $\dedr{\cpoloplusl}$ and $\dedr{\cpoloplusl}$ define non-deterministic behaviours. Assume $H_0$ is the list of packets to be processed by the network according to $p$ (respectively, $q$) and $H'_0$ is the list of packets handled successfully by the network.
Whenever $p$ (respectively, $q$) determines a $\gamma$-labelled transition into $(p', H_1, H'_1)$ (respectively, $(q', H_1, H'_1)$), the policy $p \oplus q$ is able to mimic the same behaviour.
Rules $\dedr{\intl}$ and $\dedr{\intr}$ follow a similar pattern; the only difference is that the ``inactive'' operand is preserved by the target of the semantic rule.

Mere sending $\dedr{\cpolmsgsend}$ and receiving $\dedr{\cpolmsgrec}$ entail transitions labelled accordingly, and continue with the {\DNK} policy following the $;$ operator. Note that the list of packets to be processed by the network and the list of packets handled successfully by the network remain unchanged.

{\DNK} synchronization is defined by $\dedr{\reconfigsr}$ and $\dedr{\reconfigrs}$. Intuitively, when both operands $q$ and, respectively, $s$ ``agree'' on sending/receiving a policy $p$ on channel $x$ in the context of the same packet lists $H$ and $H'$, and behave like $q'$, respectively, $s'$ afterwards, then a $\rec{x,p}$ step can be observed. The system proceeds with the continuation behaviour $q' \Par s'$.

As denoted by $\dedr{\cpolX}$, a recursive variable defined as $X \triangleq p$ behaves according to $p$.

\begin{figure}[t]
\[
\begin{array}{|c|c|}
\hline &
\\
\dedr{\cpolseqsucc} \sosrule{{\sigma' \in \llbracket p \rrbracket(\sigma \!\!::\!\! \langle \rangle)} }{ ( p ; q, \sigma :: H, H')  \trans{(\sigma, \sigma')} ( q, H , \sigma' :: H')}
& 
\dedr{\cpolX}\sosrule{(p, {H_0,  H_1}) \trans{\gamma} (p', {H'_0,  H'_1}) }{ ( X, H_0, H_1)  \trans{\gamma} ( p' , H'_0, H'_1)} X \triangleq p
\\ & \\
\hline  &
\\
\dedr{\cpoloplusl}\sosrule{(p, H_0, H'_0)  \trans{\gamma} (p', H_1, H'_1)  }{ (p \oplus q, H_0, H'_0)  \trans{\gamma} ( p', H_1, H'_1)}
& 
\dedr{\cpoloplusr}
\sosrule{(q, H_0, H'_0)  \trans{\gamma} (q', H_1, H'_1)  }{ (p \oplus q, H_0, H'_0)  \trans{\gamma} (  q', H_1, H'_1)}
\\ & \\
\hline  &
\\
\dedr{\intl}\sosrule{(p, H_0, H'_0)  \trans{\gamma} (p', H_1, H'_1)  }{ (p || q, H_0, H'_0)  \trans{\gamma} ( p' || q, H_1, H'_1)}
& 
\dedr{\intr}\sosrule{(q, H_0, H'_0)  \trans{\gamma} (q', H_1, H'_1)  }{ (p || q, H_0, H'_0)  \trans{\gamma} ( p || q', H_1, H'_1)}
\\ & \\
\hline  &
\\
\dedr{\cpolmsgrec}\sosrule{}{ (x?p ; q, H, H')  \trans{x?p} (  q, H, H')}
& 
\dedr{\cpolmsgsend}\sosrule{}{ (x!p ; q, H, H')  \trans{x!p} ( q, H, H')}
\\ & \\
\hline  
\multicolumn{2}{|c|}{} \\
\multicolumn{2}{|c|}{
\dedr{\reconfigsr}\sosrule{(q, H,  H') \trans{x!p} (q', H , H') \quad (s, H,  H')  \trans{x?p} (s', H,  H')  }{ ( q || s, H, H')  \trans{\rec{x,p}} ( q' || s' , H, H')}}\\
\multicolumn{2}{|c|}{} \\
\hline 
\multicolumn{2}{|c|}{} \\
\multicolumn{2}{|c|}{
\dedr{\reconfigrs}\sosrule{(q, {H,  H'})  \trans{x?p} (q', H,  H')  \quad (s, {H,  H'})  \trans{x!p} (s', H,  H')  }{ ( q || s, H, H')  \trans{\rec{x,p}} ( q' || s' , H, H')}}\\
\multicolumn{2}{|c|}{} \\
\hline
\end{array} 
\]
$\gamma ::= (\sigma, \sigma') \mid x!q \mid x?q \mid {\rec{x,q}}$
\caption{{\DNK}: Operational Semantics}
\label{fig::sem-queue-sem} 
\end{figure}
\end{ARXIV}

In Figure~\ref{fig:stateful-firewall-LTS} we depict a labelled transition system (LTS) encoding a possible behaviour of the stateful firewall in Example~\ref{eg:firewall}. We assume the list of network packets to be processed consists of a ``safe'' packet $\sigma_i$ travelling from {\it{int}} to {\it{ext}}   (i.e., $\sigma_i(\it{port}) = \it{int}$) followed by a potentially ``dangerous'' packet $\sigma_e$ travelling from {\it{ext}} to {\it{int}} (i.e., $\sigma_e(\it{port}) = \it{ext}$). For the simplicity of notation, in Figure~\ref{fig:stateful-firewall-LTS} we write \textit{H} for \textit{Host}, \textit{S} for \textit{Switch}, $\mathit{S}'$ for $\mathit{Switch}'$, $\mathit{SCR}$ for $\mathit{secConReq}$ and $\mathit{SCE}$ for $\mathit{secConEnd}$. Note that $\sigma_e$ can enter the network only if a secure connection request was received. More precisely, the transition labelled $(\sigma_e, \sigma_i)$ is preceded by a transition labelled $\mathit{SCR}?1$ or $\rec{\mathit{SCR}, 1}$:
$n_2 \xrightarrow{\mathit{SCR}?1,\,\, \rec{\mathit{SCR}, 1}} n_3 \xrightarrow{(\sigma_e, \sigma_i)} n_4$.

\begin{figure}[ht]
\begin{center}
\begin{tikzpicture}[>=stealth, scale=0.16,,
   shorten >=.3pt,auto,node distance=1.5cm, node/.style={rectangle,draw}]
            \node[node] (n1) {$n_1: (H||S', \sigma_i{::}\sigma_e{::}\langle \rangle, \langle \rangle)$};
            
            \node[node] (n0) [below of=n1] {$n_0: (H||S, \sigma_i{::}\sigma_e{::}\langle \rangle, \langle \rangle)$};
        
            \node[node] (n2) [above of=n1] {$n_2: (H||S, \sigma_e{::}\langle \rangle, \sigma_e{::}\langle \rangle$};
            
            \node[node] (m3) [above of=n2] {$n_3: (H||S', \sigma_e{::}\langle \rangle, \sigma_e{::}\langle \rangle$};
            
            \node[node] (m4) [above of=m3] {$n_4: (H||S', \langle \rangle, \sigma_i{::}\sigma_e{::}\langle \rangle$};
            
            \node[node] (m5) [above of=m4] {$n_5: (H||S, \langle \rangle, \sigma_i{::}\sigma_e{::}\langle \rangle$};
                         
            \path[->]
            (n1) edge [out=185,in=175,looseness=3] node {$\scriptstyle{SCE!\packcopy{}, SCR!\packcopy{}}~~~$} (n1)
            (n1) edge [bend left] node {$\scriptstyle{SCE!\packcopy{}, \rec{SCE, \packcopy{}}}$} (n0)
            (n1) edge [left] node {$(\sigma_i, \sigma_e)$} (n2)            
            (n0) edge [out=185,in=175,looseness=3] node {$\scriptstyle{SCE!\packcopy{}, SCR!\packcopy{}}~~~$} (n0)
            (n0) edge node [left] {$\scriptstyle{SCR!\packcopy{}, \rec{SCR, \packcopy{}}}$} (n1)
            (n0) edge [bend right=90, ,looseness=1.5] node [near start,midway,sloped,above] {$(\sigma_i, \sigma_e)$} (n2)
            (n2) edge [out=185,in=175,looseness=3] node [left] {$\scriptstyle{SCE!\packcopy{}, SCR!\packcopy{}}$} (n2)
            (n2) edge node [left] {$\scriptstyle{SCR!\packcopy{}, \rec{SCR, \packcopy{}}}$} (m3)
            (m3) edge [bend left] node {$\scriptstyle{SCE!\packcopy{}, \rec{SCE, \packcopy{}}}$} (n2)
            (m3) edge node {$(\sigma_e, \sigma_i)$} (m4)
            (m3) edge [out=185,in=175,looseness=3] node [left] {$\scriptstyle{SCE!\packcopy{}, SCR!\packcopy{}}$} (m3)
            (m4) edge [out=185,in=175,looseness=3] node [left] {$\scriptstyle{SCE!\packcopy{}, SCR!\packcopy{}}$} (m4)
            (m4) edge node [left] {$\scriptstyle{SCE!\packcopy{}, \rec{SCE, \packcopy{}}}$} (m5)
            (m5) edge [bend left] node [right] {$\scriptstyle{SCR!\packcopy{}, \rec{SCR, \packcopy{}}}$} (m4)
            (m5) edge [loop above] node {$\scriptstyle{SCE!\packcopy{}, SCR!\packcopy{}}$} (m5);
 \end{tikzpicture}       
\end{center}
    \caption{Stateful Firewall LTS}
    \label{fig:stateful-firewall-LTS}
\end{figure}

\begin{ARXIV}
In Figure~\ref{fig:indep-controllers-LTS} we depict an excerpt of the LTS corresponding to the distributed independent controllers in Example~\ref{eg:controller}, given a network packet denoted by $\sigma_2$. In Figure~\ref{fig:indep-controllers-LTS} we write $\sigma_i$ to denote a network packet such that $\sigma_i(port)=i$. For instance, transitions of shape $n_0 \xrightarrow{(\sigma_2, \sigma_i)} \overline{n}_i$ encode forwarding of the current packet from port $2$ to port $i$ based on the subsequent unfoldings of the Kleene-star expression in the definition of $\mathit{SDN}_{X_1, \ldots, X_6}$. The transition $n_2 \xrightarrow{(\sigma_2, \sigma_{15})} \overline{\overline{n}}_{15}$ reveals a breach in the network corresponding to the possibility of forwarding the current packet from $H_1$ to $H_4$. This is possible due to two consecutive reconfigurations of the flow tables of switches $S_6$ and $S_4$, respectively, enabling traffic from port $8$ to $9$, and from port $11$ to $13$.

\begin{figure}[h]
	\centering
	\begin{tikzpicture}[>=stealth, scale=0.4,,
	shorten >=1pt,auto,node distance=2cm, dot/.style={}, node/.style={rectangle,draw}]
	\node[node] (m0) {$n_0: (SDN, \sigma_2{::}\langle \rangle, \langle \rangle)$};
	
	\node[dot] (d1) [right of=m0] {$\ldots$};
	
	\node[node] (m1) [below of=d1] {$n_1: SDN_{S_1,S_2,S_3,S_4,S_5,ft_6} \Par C1 \Par (upS2!\drop{} \Par upS4!ft_4), \sigma_2{::}\langle \rangle, \langle \rangle)$};
	
	\node[node] (m2) [below of=m1] {$n_2: SDN_{S_1,S_2,S_3,S_4,S_5,ft_6} \Par C1 \Par (upS2!\drop{}), \sigma_2{::}\langle \rangle, \langle \rangle)$};
	
	\node[node] (mj) [below of=m2] {$\overline{\overline{n}}_j: SDN_{S_1,S_2,S_3,S_4,S_5,ft_6} \Par C1 \Par (upS2!\drop{} \Par upS4!ft_4), \langle \rangle, \sigma_j{::}\langle \rangle)$};
	
	\node[node] (mi) [below left=2.8cm of m0] {$\overline{n}_i: SDN, \langle \rangle, \sigma_j{::}\langle \rangle)$};
		
	\path[->]
	(m0) edge [bend left] node [pos=0.7] {$\rec{upS6,ft_6}$} (m1)
	(m1) edge [] node {$\rec{upS4,ft_4}$} (m2)
	(m2) edge [] node [align=left] {$(\sigma_2,\sigma_j)$\\$j\in \{ 2, 6, 8, 11, 15\}$} (mj)
	(m0) edge [bend right] node [pos=0.35,align=left] {$(\sigma_2,\sigma_i)$\\$i\in \{ 2, 6, 8, 12, 16\}$} (mi);
	\end{tikzpicture}    
	
	\caption{Independent Controllers LTS (excerpt)}
	\label{fig:indep-controllers-LTS}
\end{figure}
\end{ARXIV}

\begin{figure*}[t]
	\begin{minipage}[t]{0.5\linewidth}
		\[
		\def\arraystretch{1.2}
		\footnotesize{
			\begin{array}{r@{}c@{}ll}
			\multicolumn{4}{l}{\textnormal{for $p, q, r \in \DNK$ and $z, y \in {\NetKATnoDup}$}}\\
			\multicolumn{4}{l}{\textnormal{for } a ::= z \mid x?z \mid x!z \mid \recp{x,z}}\\
			
			\drop{} \Seq p &\,\equiv\,\,& \bot & (A0) \\
			(z + y) \Seq p &\,\equiv\,\,& z \Seq p \oplus y \Seq p & (A1) \\
			p \oplus q &\,\equiv\,\,& q \oplus p & (A2) \\
			(p \oplus q) \oplus r &\,\equiv\,\,& p \oplus (q \oplus r) & (A3) \\
			p \oplus p &\,\equiv\,\,& p & (A4) \\
			p \oplus \bot &\,\equiv\,\,& p & (A5) \\
			p \Par q &\,\equiv\,\,& q \Par p & (A6) \\
			
			p \Par \bot &\,\equiv\,\,& p & (A7) \\
			p \Par q &\,\equiv\,\,& p \llfloor q \oplus q \llfloor p \oplus p \mid q & (A8) \\
			\bot \llfloor p &\,\equiv\,\,& \bot & (A9) \\
			(a \Seq p) \llfloor q &\,\equiv\,\,& a \Seq (p \Par q) & (A10) \\
			(p \oplus q) \llfloor r &\,\equiv\,\,& (p \llfloor r) \oplus (q \llfloor r) & (A11) \\
			(x?\mathit{z} \Seq p) \mid (x!\mathit{z} \Seq q) &\,\equiv\,\,& \recp{x,z} \Seq (p \Par q) & (A12)\\
			(p \oplus q) \mid r &\,\equiv\,\,& (p \mid r) \oplus (q \mid r) & (A13)\\
			p \mid q &\,\equiv\,\,& q \mid p & (A14)\\
			p \mid q &\,\equiv\,\,& \bot~[owise] & (A15)\\			
			\end{array}
		}
		\]
	\end{minipage}
	\hfill
	\begin{minipage}[t]{0.5\linewidth}
		\[
		\footnotesize{
			\def\arraystretch{1.2}
			\begin{array}{r@{}c@{}ll}
			& & & \\
			\multicolumn{4}{l}{\textnormal{for $\at  ::=  \alpha \cdot \pi \mid x?\mathit{z} \mid x!\mathit{z} \mid \recp{x,z}$:}}\\
			
			\delta_{{\mathcal{L}}}(\bot) &\,\equiv\,\,& \bot & (\delta_\bot)\\
			\delta_{{\mathcal{L}}}(\at \Seq p) &\,\equiv\,\,& \at \Seq \delta_{{\mathcal{L}}}(p)~{\textnormal{if}}~ \at \not \in {\mathcal{L}} & (\delta_{\Seq})\\
			\delta_{{\mathcal{L}}}(\at \Seq p) &\,\equiv\,\,& \bot ~{\textnormal{if}}~ \at \in {\mathcal{L}} & (\delta_{\Seq}^{\bot})\\
			\delta_{{\mathcal{L}}}(p\oplus q) &\,\equiv\,\,& \delta_{{\mathcal{L}}}(p) \oplus \delta_{{\mathcal{L}}}(q) & (\delta_{\oplus})\\
			
			& & & \\
			\multicolumn{4}{l}{\textnormal{for $n \in \mathbb{N}:$}}\\
			\pi_0(p) &\,\equiv\,\,& \bot & (\Pi_0)\\
			\pi_{n}(\bot) &\,\equiv\,\,& \bot & (\Pi_\bot)\\
			\pi_{n+1}(\at \Seq p) &\,\equiv\,\,& \at \Seq \pi_{n}(p)  & (\Pi_{\Seq})\\
			\pi_{n}(p \oplus q) &\,\equiv\,\,& \pi_{n}(p) \oplus \pi_{n}(q)  & (\Pi_{\oplus})\\
			
			& & & \\
			p &\,\equiv\,\,& q \hfill~{\textnormal{if}}~ \forall n \in \mathbb{N} \,:\, \pi_n(p) \,\equiv\,\,  \pi_n(q) & ({\it AIP})\\\\
			
			\ENK & & \\
			
			
			\end{array}
		}
		\]
	\end{minipage}
	
	\caption{The axiom system $\EDNK$ (including $\ENK$)}
	\label{fig:axiom-ccp}
\end{figure*}

\section{Semantic Results}\label{sec:sem-results}

\begin{ICALP}
In this section we define bisimilarity of {\DNK} policies and provide a corresponding sound and ground-complete axiomatization.

Bisimilarity of {\DNK} terms is defined in the standard fashion:

\begin{defn}[Bisimilarity ($\sim$)]\label{def::bisim}
A symmetric relation $R$ over {\DNK} policies is a \emph{bisimulation} whenever for $(p, q) \in R$ the following holds: \\
If $(p, H_0, H_1) \xrightarrow{\gamma} (p', H'_0, H'_1)$ then
exists $q'$ s.t. $(q, H_0, H_1) \xrightarrow{\gamma} (q', H'_0, H'_1)$ and $(p', q') \in R$,
with
$\gamma ::= (\sigma, \sigma')  \mid x?r \mid x!r \mid \rec{x, r}$.

We call \emph{bisimilarity} the largest bisimulation relation.
Two policies $p$ and $q$ are \emph{bisimilar}  ($p \sim q$) if and only if there is a bisimulation relation $R$ such that $(p, q) \in R$.
\end{defn}

Semantic equivalence of $\NetKATnoDup$ policies is preserved by $\DNA$ bisimilarity.

\begin{proposition}[Semantic Layering]\label{lm:sound-net-kat}
Let $p$ and $q$ be two $\NetKATnoDup$ policies. The following holds:
$
 \llbracket p \rrbracket = \llbracket q  \rrbracket \textnormal{ iff } (p ; d) \sim (q ; d)
$
for any {\DNK} policy $d$.
\end{proposition}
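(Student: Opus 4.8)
The plan is to prove both implications by a direct analysis of the operational semantics, hinging on one structural observation: the only rule that can fire for a configuration whose policy has the shape $p \Seq d$ with $p \in \NetKATnoDup$ is the base rule $\dedr{\cpolseqsucc}$. Indeed, since $p$ is a genuine $\NetKATnoDup$ policy — not a communication prefix $x?z$ or $x!z$, a recursion variable, a choice $\oplus$, or a parallel composition $\Par$ — the leading $\Seq$ cannot be resolved by any of the other rules, and $\Seq$ does not let the continuation $d$ act before $p$ is consumed. Hence every outgoing step of $(p \Seq d, \sigma{::}H, H')$ has the form $(p \Seq d, \sigma{::}H, H') \trans{(\sigma,\sigma')} (d, H, \sigma'{::}H')$ with $\sigma' \in \llbracket p \rrbracket(\sigma{::}\langle\rangle)$, and from an empty list there is no step. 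Thus the entire observable behaviour of $p \Seq d$ is: one step governed by the single-packet input–output relation of $p$, followed by the behaviour of $d$.

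For the direction $\Rightarrow$, I would assume $\llbracket p \rrbracket = \llbracket q \rrbracket$, fix an arbitrary $d \in \DNK$, and exhibit the candidate relation
\[
R = \{(p'\Seq d',\, q'\Seq d') \mid p', q' \in \NetKATnoDup,\ \llbracket p' \rrbracket = \llbracket q' \rrbracket,\ d' \in \DNK\} \cup \Delta,
\]
where $\Delta$ is the identity relation on $\DNK$ policies (trivially a bisimulation). By the observation above, checking that $R$ is a bisimulation reduces to matching $(\sigma,\sigma')$-steps: such a step of $p'\Seq d'$ exists iff $\sigma'\in\llbracket p'\rrbracket(\sigma{::}\langle\rangle) = \llbracket q'\rrbracket(\sigma{::}\langle\rangle)$, so $q'\Seq d'$ matches it with the same label and the same resulting packet lists $(H, \sigma'{::}H')$, and both targets are the identical policy $d' \in \Delta \subseteq R$. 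Since the defining condition $\llbracket p'\rrbracket = \llbracket q'\rrbracket$ is symmetric, $R$ is symmetric, and it contains $(p\Seq d, q\Seq d)$, giving $(p\Seq d)\sim(q\Seq d)$.

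For the direction $\Leftarrow$, I would instantiate the hypothesis at any fixed continuation, e.g. $d = \bot$, and read single-packet agreement off the bisimulation. For any packet $\sigma$ and any $\sigma'\in\llbracket p\rrbracket(\sigma{::}\langle\rangle)$, the step $(p\Seq\bot,\sigma{::}\langle\rangle,\langle\rangle)\trans{(\sigma,\sigma')}(\bot,\langle\rangle,\sigma'{::}\langle\rangle)$ must be matched by $(q\Seq\bot,\sigma{::}\langle\rangle,\langle\rangle)$ with the same label and target packet lists; as $\dedr{\cpolseqsucc}$ is the only rule that can produce such a step for $q\Seq\bot$, this forces $\sigma'\in\llbracket q\rrbracket(\sigma{::}\langle\rangle)$. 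Using symmetry of $\sim$, I obtain $\llbracket p\rrbracket(\sigma{::}\langle\rangle)=\llbracket q\rrbracket(\sigma{::}\langle\rangle)$ for every packet $\sigma$.

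The remaining, and genuinely non-trivial, step is to lift this single-packet agreement to full denotational equality $\llbracket p\rrbracket=\llbracket q\rrbracket$ on arbitrary histories, and this is where I expect the main obstacle to lie. I would isolate a \emph{per-packet determinism} lemma for the $\textbf{dup}$-free fragment: for every $p\in\NetKATnoDup$, packet $\sigma$ and history $h$, one has $\llbracket p\rrbracket(\sigma{::}h)=\{\tau{::}h \mid \tau{::}\langle\rangle\in\llbracket p\rrbracket(\sigma{::}\langle\rangle)\}$, i.e. a $\textbf{dup}$-free policy carries the tail of the history along unchanged and acts on the head independently of it. This follows by a routine structural induction on $p$ (predicates and assignments touch only the head; $+$, $\cdot$ and ${}^{*}$ propagate the property through the Kleisli composition $\bullet$), and it is precisely the point where $\textbf{dup}$-freeness is indispensable, since $\textbf{dup}$ is the only NetKAT construct that changes the length of a history. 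Granting the lemma, the single-packet agreement derived above immediately yields $\llbracket p\rrbracket(\sigma{::}h)=\llbracket q\rrbracket(\sigma{::}h)$ for all $\sigma,h$, hence $\llbracket p\rrbracket=\llbracket q\rrbracket$, which closes the argument. I expect this lemma to be the crux, both because it carries the real content of the $\Leftarrow$ direction and because it must be phrased carefully to align with NetKAT's history-indexed denotational semantics.
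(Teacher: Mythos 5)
Your proposal is correct, and its skeleton is exactly the argument the paper gestures at: the paper's entire proof is the one-line remark that the result ``follows directly according to the definition of bisimilarity and $\dedr{\cpolseqsucc}$,'' which is precisely your structural observation that $\dedr{\cpolseqsucc}$ is the only rule that can fire for a configuration of shape $(p \Seq d, \sigma{::}H, H')$ when $p \in \NetKATnoDup$. Your forward direction (the explicit relation $R$ of pairs $(p'\Seq d',\, q'\Seq d')$ with $\llbracket p'\rrbracket = \llbracket q'\rrbracket$, closed under the identity $\Delta$) and your backward extraction of single-packet agreement are both sound elaborations of that remark. Where you genuinely go beyond the paper is in noticing that the backward direction is \emph{not} immediate: the rule $\dedr{\cpolseqsucc}$ only ever evaluates $\llbracket p \rrbracket$ at single-packet histories $\sigma{::}\langle\rangle$, so bisimilarity yields agreement of $\llbracket p\rrbracket$ and $\llbracket q\rrbracket$ only on such arguments, while the proposition claims equality of the denotations as functions on \emph{all} histories. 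Your per-packet determinism lemma — $\llbracket p\rrbracket(\sigma{::}h)=\{\tau{::}h \mid \tau{::}\langle\rangle\in\llbracket p\rrbracket(\sigma{::}\langle\rangle)\}$ for $\textbf{dup}$-free $p$ — is exactly the right bridge, its proof by structural induction goes through as you sketch (predicates, negation and assignments touch only the head packet; $+$ is pointwise union; $\cdot$ and $^*$ propagate the invariant through the Kleisli composition $\bullet$ and the iterates $F^i$), and you are right that this is the one place where $\textbf{dup}$-freeness is essential, since $\textbf{dup}$ is the sole construct that extends a history. In short: same route as the paper, but you identify and discharge the tacit step that the paper's ``follows directly'' silently assumes, which makes your version the more complete proof of the stated ``iff.''
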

\begin{proof}
This follows according to the definition of bisimilarity and $\dedr{\cpolseqsucc}$ in Figure~\ref{fig::sem-queue-sem}.
\end{proof}

We further provide some additional ingredients needed to introduce the {\DNK} axiomatization in Figure~\ref{fig:axiom-ccp}. First, note that our notion of bisimilarity identifies synchronization steps as in $\dedr{\reconfig}$ in Figure~\ref{fig::sem-queue-sem}. At the axiomatization level, this requires introducing corresponding constants $\recp{x,z}$ defined as:

\[\sosrule{}{(\recp{x,z} ; p, H_0, H_1) \xgtrans{}{\rec{x,z}} (p, H_0, H_1)}.\]

In accordance with standard approaches to process algebra (see, e.g., ~\cite{DBLP:journals/iandc/AcetoBV94,DBLP:books/daglib/0069083}) we consider the restriction operator $\delta_{\mathcal{L}}(-)$ with ${\mathcal{L}}$ a set of forbidden actions ranging over $x?z$ and $x!z$ as in~(\ref{def::DNA-syntax}). In practice, we use the restriction operator to force synchronous communication. We also define a projection operator $\pi_n(-)$ that, intuitively, captures the first $n$ steps of a {\DNK} policy. $\pi_n(-)$ is crucial for defining the so-called ``Approximation Induction Principle'' that enables reasoning about equivalence of recursive {\DNK} specifications. Last, but not least, in our axiomatization we employ the left-merge operator ($\llfloor$) and the communication-merge operator ($\mid$) utilised for axiomatizing parallel composition. Intuitively, a process of shape $p \llfloor q$ behaves like $p$ as a first step, and then continues as the parallel composition between the remaining behaviour of $p$ and $q$. A process of shape $p \mid q$ forces the synchronous communication between $p$ and $q$ in a first step, and then continues as the parallel composition between the remaining behaviours of $p$ and $q$. The full description of these auxiliary operators is provided in~\cite{arxiv}.

In Figure~\ref{fig:axiom-ccp}, we introduce $\EDNK$ -- the axiom system of {\DNK}, including the NetKAT axiomatization $\ENK$.
Most of the axioms in Figure~\ref{fig:axiom-ccp} comply to the standard axioms of parallel and communicating processes~\cite{DBLP:books/daglib/0069083}, where, intuitively, $\oplus$ plays the role of non-deterministic choice, $;$ resembles sequential composition and $\bot$ is a process that deadlocks.
An interesting axiom is $(A7):\, p \Par \bot \equiv p$ which, intuitively, states that if one network component fails, then the whole system continues with the behaviour of the remaining components. This is a departure from the approach in~\cite{Kappe20}, where recovery is not possible in case of a component's failure; i.e., $e \Par 0 \equiv 0$. Additionally, $(A12)$ ``pin-points'' a communication step via the newly introduced constants of form $\recp{x,z}$.
Axiom $(A0)$ states that if the current packet is dropped as a result of the unsuccessful evaluation of a NetKAT policy, then the continuation is deadlocked. $(A1)$ enables mapping the non-deterministic choice at the level of NetKAT to the setting of {\DNK}.

The axioms encoding the restriction operator $\delta_{\mathcal{L}}(-)$ and the projection operator $\pi_n(-)$ are defined in the standard fashion, on top of {\DNK} normal forms.
The latter are defined inductively, as sums of complete tests and complete assignments $\alpha \cdot \pi$, or communication steps $x?q,\, x!q$ and $\recp{x,q}$, followed by arbitrary {\DNK} policies:
\[
\Sigma^{\oplus}_{i \in I} (\alpha_i \cdot \pi_i) ; d_i \,\oplus\, \Sigma^{\oplus}_{j \in J} c_j ; d_j \,(\oplus \bot)
\]
where $d_i, d_j$ range over {\DNK} policies and $c_j ::= x?q \mid x!q \mid \recp{x,q}$ with $q$ denoting terms in $\NetKATnoDup$.
Complete tests (typically denoted by $\alpha$) and complete assignments (typically denoted by $\pi$) were originally introduced in~\cite{netkat}. In short: let $F = \{f_1, \ldots, f_n\}$ be a set of fields names with values in $V_i$, for $i \in \{1, \ldots, n\}$.
We call \emph{complete test} (resp., \emph{complete assignment}) an expression $f_1 = v_1 \cdot \ldots \cdot f_n = v_n$ (resp., $f_1 \leftarrow v_1 \cdot \ldots \cdot f_n \leftarrow v_n$), with $v_i \in V_i$, for $i \in \{1, \ldots, n\}$. 

\begin{theorem}[Soundness \& Completeness]\label{thm:sound-complete}
$\EDNK$ is sound and ground-complete with respect to {\DNK} bisimilarity.
\end{theorem}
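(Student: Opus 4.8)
The plan is to prove soundness and ground-completeness separately, following the standard process-algebraic methodology for such results (as in~\cite{DBLP:books/daglib/0069083}).

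\textbf{Soundness.} For soundness, I would show that each axiom in Figure~\ref{fig:axiom-ccp} is valid with respect to bisimilarity $\sim$, that is, for every axiom $p \equiv q$ we have $p \sim q$. Since $\sim$ is a congruence (which itself must be established first, by checking that the operational rules in Figure~\ref{fig::sem-queue-sem} are in a suitable rule format, e.g.\ a tyft/tyxt-like format guaranteeing congruence of bisimilarity), validity of each axiom extends to all contexts. The NetKAT axioms $\ENK$ are sound by the results of~\cite{netkat} together with the Semantic Layering Proposition~\ref{lm:sound-net-kat}, which lifts NetKAT semantic equality to $\DNK$ bisimilarity of $p;d$ terms. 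The remaining axioms $(A0)$--$(A15)$, $(\delta_\bot)$--$(\delta_\oplus)$ and $(\Pi_0)$--$(\Pi_\oplus)$ are each checked by exhibiting an explicit bisimulation relation relating the two sides; most reduce to routine transition-matching using the operational rules, the core work being the axioms $(A8)$--$(A12)$ for parallel composition and communication, where the left-merge and communication-merge rules must be matched against the expansion of $\Par$. The soundness of $(\mathit{AIP})$ requires a separate argument: one shows that if $\pi_n(p) \sim \pi_n(q)$ for all $n$, then $p \sim q$, which holds because the LTS generated by guarded recursive $\DNK$ terms is finitely branching, so bounded behaviours determine the whole behaviour.

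\textbf{Ground-completeness.} For ground-completeness I must show that $p \sim q$ implies $\EDNK \vdash p \equiv q$ for all closed $\DNK$ terms. The key device is a normal-form theorem: every closed $\DNK$ term can be proved equal in $\EDNK$ to a normal form of the shape displayed before the theorem, namely a finite $\oplus$-sum of summands $(\alpha \cdot \pi);d$ and $c;d$ (with $c$ a communication/reconfiguration constant), possibly together with $\bot$. I would establish this by induction on term structure, using $(A8)$ to expand parallel composition into left-merges and communication-merge, $(A10)$--$(A15)$ to eliminate $\llfloor$ and $\mid$, $(A0)$--$(A1)$ and the NetKAT axioms together with the complete-test/complete-assignment machinery of~\cite{netkat} to bring NetKAT policy prefixes into the $\alpha\cdot\pi$ form, and $(A2)$--$(A5)$ to absorb duplicates and flatten the sum. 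The completeness argument then proceeds via the Approximation Induction Principle: it suffices to prove $\EDNK \vdash \pi_n(p) \equiv \pi_n(q)$ for all $n$, which I establish by induction on $n$. The base case is trivial by $(\Pi_0)$, and in the inductive step I put $\pi_{n+1}(p)$ and $\pi_{n+1}(q)$ into normal form and match their summands: since $p \sim q$, every initial transition of one is matched by the other with bisimilar (hence, by the induction hypothesis and the semantic-layering proposition, provably equal after one projection level) continuations, and $(A2)$--$(A5)$ let me identify the matched summands.

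\textbf{Main obstacle.} The hard part will be the ground-completeness direction, specifically ensuring the normal-form reduction genuinely terminates and interacts correctly with recursion. Because $\DNK$ admits guarded recursive definitions $X \triangleq D$, closed terms need not be finite-depth, so one cannot reduce an arbitrary term to a single finite normal form directly; this is exactly why the proof must route through the projections $\pi_n$, which truncate the behaviour to finite depth, where normal forms do exist. I would therefore need to verify carefully that each projection $\pi_n(p)$ is provably equal to a \emph{finite} normal form, relying on guardedness to guarantee that only finitely many summands survive at each depth. A subtler point is the treatment of the NetKAT-layer summands: matching bisimilar $(\alpha\cdot\pi);d$ prefixes requires invoking the completeness of $\ENK$ at the level of complete tests and assignments and then lifting the resulting NetKAT equality to $\DNK$ via Proposition~\ref{lm:sound-net-kat}, so the interface between the NetKAT equational theory and the surrounding process-algebraic reasoning is where I expect the most delicate bookkeeping.
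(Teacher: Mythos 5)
Your soundness argument coincides with the paper's: congruence of $\sim$ via the rule formats of~\cite{MOUSAVI2005107} (Lemma~\ref{lm:congruence}), axiom-by-axiom transition matching for everything except $(AIP)$ (Lemma~\ref{lm:sound-noAIP}), and a separate Baeten--Weijland-style argument for $(AIP)$ using branching finiteness of guarded terms (Lemma~\ref{lem:sound-aip}). For ground-completeness, however, you take a genuinely different route. The paper first proves a normalization lemma (Lemma~\ref{lm:dnk-norm}) and then, for bisimilar $p, q$ in normal form, directly establishes the two absorptions $p \equiv q \oplus p$ and $q \equiv p \oplus q$ by matching summands, arguing ``by structural induction'' that bisimilar continuations $p' \sim q'$ are provably equal; notably, $(AIP)$ is not invoked in the paper's completeness direction at all. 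You instead prove $\EDNK \vdash \pi_n(p) \equiv \pi_n(q)$ for every $n$ by induction on $n$ (normalizing each projection to a finite-depth normal form and matching summands level by level) and close with $(AIP)$. The trade-off: the paper's argument is shorter and avoids projection bookkeeping, but its structural induction is delicate exactly where you flag the obstacle --- under guarded recursion the continuations $d_i$ in a normal form are arbitrary policies, not structurally smaller terms, so the paper's induction implicitly measures behavioural depth rather than syntax; your projection-indexed induction makes that measure explicit and is the standard rigorous treatment of completeness in the presence of guarded recursion~\cite{DBLP:books/daglib/0069083}, at the cost of having to verify that each $\pi_n(p)$ is provably equal to a \emph{finite} normal form (which guardedness, branching finiteness, and the finiteness of packet-field domains indeed guarantee). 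Both routes rest on the same NetKAT interface you identify: reducing $\NetKATnoDup$ prefixes to sums of complete tests and assignments via Lemma~$4$ of~\cite{netkat} and NetKAT completeness (Lemma~\ref{lm:netkat-nf}), then lifting into {\DNK} with $(A1)$ and Proposition~\ref{lm:sound-net-kat}.
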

\begin{proof}
The proof of soundness follows in two steps. First, we show that axioms $(A0)$-$(\Pi_{\oplus})$ together with the NetKAT axioms in $\ENK$ are sound w.r.t. {\DNK} bisimilarity. This is shown in a standard fashion, by case analysis on transitions of shape
$(p, H_0, H'_0) \xrightarrow{\gamma} (q, H_1, H'_1)$
with $\gamma ::= (\sigma, \sigma') \mid x?n \mid x!n \mid \rec{x,n}$. Then, we show that $(AIP)$ holds; the proof is close to the one of Theorem~$2.5.8$ in~\cite{DBLP:books/daglib/0069083} and uses the branching finiteness property of guarded {\DNK} policies.

Given two bisimilar {\DNK} policies $p$ and $q$ in normal form, 
ground-completeness reduces to showing that every summand of $p$ can be proven syntactically equal to a summand of $q$. The reasoning is by structural induction, and case analysis on the type of transitions that can be triggered as a first step. For showing the existence of {\DNK} normal forms, we exploit the results of Lemma~$4$ in~\cite{netkat} stating that the standard semantics of every NetKAT expression is equal to the union of its minimal nonzero terms. In the context of $\NetKATnoDup$ and packet values drawn
from finite domains (as is the case in~\cite{netkat}), this union can be equivalently expressed as a sum of complete tests and complete assignments. I.e., $\vdash r \equiv \Sigma_{i \in I} \alpha_i \cdot \pi_i$ for every $\NetKATnoDup$ expression $r$.
We refer to~\cite{arxiv} for the complete proof and additional details.
\end{proof}

\end{ICALP}

\begin{ARXIV}
In this section we define bisimilarity of {\DNK} policies, introduce some necessary definitions and terminology, and provide a corresponding sound and complete axiomatization.

\subsection{An Axiom System for {\bf{DyNetKAT}} Bisimilarity}\label{sec:axioms}

Bisimilarity of {\DNK} terms is defined in the standard fashion:

\begin{defn}[Bisimilarity ($\sim$)]\label{def::bisim}
A symmetric relation $R$ over {\DNK} policies is a \emph{bisimulation} whenever for $(p, q) \in R$ the following holds: \\
If $(p, H_0, H_1) \xrightarrow{\gamma} (p', H'_0, H'_1)$ then
exists $q'$ s.t. $(q, H_0, H_1) \xrightarrow{\gamma} (q', H'_0, H'_1)$ and $(p', q') \in R$,
with
$\gamma ::= (\sigma, \sigma')  \mid x?r \mid x!r \mid \rec{x, r}$.

We call \emph{bisimilarity} the largest bisimulation relation.

Two policies $p$ and $q$ are \emph{bisimilar}  ($p \sim q$) if and only if there is a bisimulation relation $R$ such that $(p, q) \in R$.
\end{defn}

Semantic equivalence of $\NetKATnoDup$ policies is preserved by $\DNA$ bisimilarity.

\begin{proposition}[Semantic Layering]\label{lm:sound-net-kat}
Let $p$ and $q$ be two $\NetKATnoDup$ policies. The following holds:
\[
 \llbracket p \rrbracket = \llbracket q  \rrbracket \textnormal{ iff } (p ; d) \sim (q ; d)
\]
for any {\DNK} policy $d$.
\end{proposition}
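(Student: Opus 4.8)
The plan is to exploit the single rule $\dedr{\cpolseqsucc}$ (Figure~\ref{fig::sem-queue-sem}), which is the \emph{only} rule a configuration built from $p \Seq d$ can fire as a first step: it is enabled exactly when the head packet $\sigma$ of the pending list admits an output $\sigma'$ of the NetKAT denotation of $p$, i.e.\ $\sigma' \in \llbracket p \rrbracket(\sigma :: \langle \rangle)$ (identifying the packet $\sigma'$ with the singleton history $\sigma' :: \langle \rangle$). Thus the visible first-step behaviour of $p \Seq d$ encodes precisely the action of $\llbracket p \rrbracket$ on one-element histories, and the two directions of the biconditional reduce to translating between this operational information and denotational equality.

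Before the two directions I would record a \emph{locality} property of the \textbf{dup}-free fragment: since no construct of $\NetKATnoDup$ reads or rewrites the tail of a history (a routine induction on syntax using the clauses of Figure~\ref{fig:NetKAT-syn-sem}), every $\NetKATnoDup$ policy $r$ satisfies $\llbracket r \rrbracket(\sigma :: h) = \{\, \tau :: h \mid \tau :: \langle \rangle \in \llbracket r \rrbracket(\sigma :: \langle \rangle)\,\}$. Consequently $\llbracket p \rrbracket = \llbracket q \rrbracket$ as functions on all histories holds \emph{iff} $\llbracket p \rrbracket(\sigma :: \langle \rangle) = \llbracket q \rrbracket(\sigma :: \langle \rangle)$ for every packet $\sigma$ — which is exactly the data observed through $\dedr{\cpolseqsucc}$.

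For the forward direction, assuming $\llbracket p \rrbracket = \llbracket q \rrbracket$, I would verify that $R = \{(p \Seq d, q \Seq d) \mid \llbracket p \rrbracket = \llbracket q \rrbracket\} \cup \Id$ is a bisimulation; it is symmetric because denotational equality is. From a configuration $(p \Seq d, H_0, H_1)$ the only possible step uses $\dedr{\cpolseqsucc}$, so $H_0 = \sigma :: H$ and the step is labelled $(\sigma, \sigma')$ with $\sigma' \in \llbracket p \rrbracket(\sigma :: \langle \rangle)$, landing in $(d, H, \sigma' :: H_1)$. By hypothesis $\sigma' \in \llbracket q \rrbracket(\sigma :: \langle \rangle)$, so $q \Seq d$ makes the identically labelled step to the \emph{same} target; the continuation policies coincide and hence lie in $\Id \subseteq R$ (and if $H_0 = \langle \rangle$ neither side moves, so the clause is vacuous).

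For the backward direction I would use a single instance of the hypothesis, say $(p \Seq \bot) \sim (q \Seq \bot)$. Fixing $\sigma$ and any $\sigma' \in \llbracket p \rrbracket(\sigma :: \langle \rangle)$, the step $(p \Seq \bot, \sigma :: \langle \rangle, \langle \rangle) \trans{(\sigma, \sigma')} (\bot, \langle \rangle, \sigma' :: \langle \rangle)$ must be matched by $q \Seq \bot$; since $\dedr{\cpolseqsucc}$ is again the only applicable rule, the match witnesses $\sigma' \in \llbracket q \rrbracket(\sigma :: \langle \rangle)$. This gives one inclusion of the singleton behaviours, the reverse following by symmetry of $\sim$, and the locality property then lifts agreement on singletons to $\llbracket p \rrbracket = \llbracket q \rrbracket$. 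The only genuinely delicate point is this last lift: the operational rule probes $p$ only on one-packet histories, while the target equality ranges over all histories, and it is precisely \textbf{dup}-freeness that makes the two coincide — so I would state and justify the locality reduction explicitly rather than treat it as self-evident.
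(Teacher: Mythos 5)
Your proof is correct and takes essentially the same route as the paper, whose entire argument for this proposition is the one-line remark that it ``follows directly according to the definition of bisimilarity and $\dedr{\cpolseqsucc}$'': both directions indeed hinge on $\dedr{\cpolseqsucc}$ being the only applicable rule for a term $p \Seq d$ with $p \in \NetKATnoDup$. Your elaboration supplies precisely the details the paper glosses over — the explicit bisimulation $R$ with identical continuations handled by the identity relation, and in particular the locality lemma lifting agreement on singleton histories to full denotational equality, which is exactly where \textbf{dup}-freeness is used and which the paper leaves implicit.
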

\begin{proof}
The result follows directly according to the definition of bisimilarity and $\dedr{\cpolseqsucc}$ in Figure~\ref{fig::sem-queue-sem}.
\end{proof}

Next, we introduce the restriction operator $\delta_{\mathcal{L}}(-)$~\cite{DBLP:journals/iandc/AcetoBV94,DBLP:books/daglib/0069083}, with ${\mathcal{L}}$ a set of forbidden actions ranging over $x?z$ and $x!z$ as in~(\ref{def::DNA-syntax}). The semantics of $\delta_{\mathcal{L}}(-)$ is:
\begin{equation}\label{eq:delta}
\dedr{{\delta^{}}} \sosrule{(p, H_0, H_1) \xgtrans{{{}}}{\gamma} (p', H'_0, H'_1)}{(\delta_{\mathcal{L}}(p), H_0, H_1) \xgtrans{{}}{\gamma} (\delta_{\mathcal{L}}(p'), H'_0, H'_1)} \gamma \not \in \mathcal{L}
\end{equation}
In practice, we use the restriction operator to force synchronous communication. For an example, consider the synchronising controllers in Figure~\ref{fig:synchronise2}. Let $\mathcal{L}$ be the set of restricted actions ranging over elements of shape $upS_i?X$, $upS_i!X$, $syn?1$ and $syn!1$. The restricted system $\delta_{\mathcal{L}}(\mathit{SDN}_{S1, \ldots, S6} \Par C_1 \Par C_2)$
ensures that: (1) traffic through $S2$ and $S1$ is first disabled via reconfigurations $\rec{upS2, 0}$ and $\rec{upS1,0}$ and (2) the controllers acknowledge this deactivation via a synchronization step $\rec{syn, 1}$ before installing further flow tables for $S4$ and $S6$.

In the style of~\cite{DBLP:books/daglib/0069083}, we define a projection operator $\pi_n(-)$ that, intuitively, captures the first $n$ steps of a {\DNK} policy. Its formal semantics is:
\begin{equation}\label{eq:pi}
    \dedr{{\pi^{}}} \sosrule{(p, H_0, H_1) \xgtrans{{{}}}{\gamma} (p', H'_0, H'_1)}{(\pi_{n+1}(p), H_0, H_1) \xgtrans{{}}{\gamma} (\pi_{n}(p'), H'_0, H'_1)} 
\end{equation}
As we shall later see, $\pi_n(-)$ is crucial for defining the so-called ``Approximation Induction Principle'' that enables reasoning about equivalence of recursive {\DNK} specifications.

We further provide some additional ingredients needed to introduce the {\DNK} axiomatization in Figure~\ref{fig:axiom-ccp}. First, note that our notion of bisimilarity identifies synchronization steps as in $\dedr{\reconfigsr}$ and $\dedr{\reconfigrs}$. At the axiomatization level, this requires introducing corresponding constants $\recp{x,z}$ defined as:
\begin{equation}\label{eq:op-recp-axiom}
    \dedr{\recp{x,z}} \sosrule{}{(\recp{x,z} ; p, H_0, H_1) \xgtrans{}{\rec{x,z}} (p, H_0, H_1)} 
\end{equation}

Last, but not least, we introduce the left-merge operator ($\llfloor$) and the communication-merge operator ($\mid$) utilised for axiomatizing parallel composition. Intuitively, a process of shape $p \llfloor q$ behaves like $p$ as a first step, and then continues as the parallel composition between the remaining behaviour of $p$ and $q$. A process of shape $p \mid q$ forces the synchronous communication between $p$ and $q$ in a first step, and then continues as the parallel composition between the remaining behaviours of $p$ and $q$. The corresponding semantic rules are:

\begin{equation}\label{eq:aux-op-sem}
\begin{array}{l}
\dedr{\llfloor} \sosrule{(p, {H_0, H_1}) \xgtrans{}{\gamma} (p', H'_0, H'_1 ) }{(p \llfloor q, H_0, H_1)) \xgtrans{}{\gamma} (p' \Par q,  H'_0, H'_1)}~~~~\gamma ::= (\sigma, \sigma') \mid  x!p \mid x?p \mid \rec{x,p}\\[3ex]
\dedr{\mid^{?!}} \sosrule{(p, H, H') \xgtrans{{}}{x?r} (p', H, H') \quad\quad  (q, H, H') \xgtrans{{}}{x!r} (q', H, H')}{(p \mid q, H, H') \xgtrans{}{\rec{x,p}} (p' \Par q', H, H')}\\[3ex]
\dedr{\mid^{!?}} \sosrule{(p, H, H') \xgtrans{{}}{x!r} (p', H, H') \quad\quad  (q, H, H') \xgtrans{{}}{x?r} (q', H, H')}{(p \mid q, H, H') \xgtrans{}{\rec{x,p}} (p' \Par q', H, H')}
\end{array}
\end{equation}

From this point onward, we denote by {\DNK} the extension with the operators in~(\ref{eq:delta}), (\ref{eq:pi}) and (\ref{eq:op-recp-axiom}):
\begin{equation}\label{def::DNA-syntax-e}
\begin{array}{ccl}
     \mathit{N} & ::= & {\NetKATnoDup}\\[1ex]
     \mathit{D}_e & ::= &
     { \zeroq } \mid
     \mathit{N} \Seq \mathit{D}
     \mid  x?\mathit{N} \Seq D_e \mid 
     x!\mathit{N} \Seq D_e \mid \recp{x,N} \Seq D_e \mid\\
     && \mathit{D}_e \Par \mathit{D}_e \mid
     D_e \oplus D_e \mid \delta_{\mathcal{L}}(D_e)\mid \pi_n(D_e) \mid D_e \llfloor D_e \mid D_e \!\!\mid\!\! D_e \mid X  \\
     && X \triangleq D_e,\, n \in \mathbb{N},\,
     \mathcal{L} = \{ c \mid c ::= x?N \mid x!N\}
     
\end{array}
\end{equation}
Bisimilarity is defined for {\DNK} terms as in~(\ref{def::DNA-syntax-e}) in the natural fashion, according to the operational semantics of the new operators in~(\ref{eq:delta}), (\ref{eq:pi}) and (\ref{eq:op-recp-axiom}).

\begin{lem}\label{lm:congruence}
{\DNK} bisimilarity is a \emph{congruence}.
\end{lem}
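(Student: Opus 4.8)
The plan is to establish congruence for each of the \DNK\ operators of~(\ref{def::DNA-syntax-e}) --- non-deterministic choice $\oplus$, parallel composition $\Par$, left-merge $\llfloor$, communication-merge $\mid$, the prefix/sequencing forms $N\Seq\,\cdot$, $x?N\Seq\,\cdot$, $x!N\Seq\,\cdot$, $\recp{x,N}\Seq\,\cdot$, restriction $\delta_{\mathcal{L}}(\cdot)$ and projection $\pi_n(\cdot)$ --- by the standard witnessing-bisimulation method. Concretely, I would form the congruence closure $\widehat{\sim}$ of $\sim$ (the least symmetric relation containing $\sim$ and closed under all \DNK\ operators) and show that $\widehat{\sim}$ is itself a bisimulation. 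Since $\sim$ is the largest bisimulation and $\sim\,\subseteq\,\widehat{\sim}$, this forces $\widehat{\sim}\,=\,\sim$, i.e.\ $\sim$ is already closed under every operator, which is the claim. The crucial enabling observation is that in the semantics of Figure~\ref{fig::sem-queue-sem} and rules (\ref{eq:delta})--(\ref{eq:aux-op-sem}) the packet lists $H_0,H_1$ are never inspected to decide which rule fires and are threaded through deterministically: only $\dedr{\cpolseqsucc}$ rewrites them (popping $\sigma$ and pushing $\sigma'$), while every other rule leaves them unchanged. Hence, after fixing arbitrary $H_0,H_1$ and a transition $(s,H_0,H_1)\trans{\gamma}(s',H'_0,H'_1)$ with $(s,t)\in\widehat{\sim}$, the data components $H'_0,H'_1$ of the matching move from $t$ are already forced and I only have to match the program part --- this is precisely the ``stateless'' bisimilarity setting, in which the data comes along for the ride.

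I would then prove that $\widehat{\sim}$ is a bisimulation by induction on the way a pair enters $\widehat{\sim}$, the inductive step being a case analysis on the outermost operator of $s$ and on the rule justifying its transition. The prefix cases are immediate: $x?N\Seq D_1$ and $x!N\Seq D_1$ (rules $\dedr{\cpolmsgrec}$, $\dedr{\cpolmsgsend}$) and $N\Seq D_1$ (rule $\dedr{\cpolseqsucc}$) each admit a single fixed first step whose target is $D_1$, matched by the analogous step of $D_2$ with target $D_2$, and $D_1\sim D_2$ supplies the related pair. Congruence in the \NetKATnoDup\ argument of $N\Seq\cdot$ is not at stake here, being exactly Proposition~\ref{lm:sound-net-kat}. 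The $\oplus$ case uses $\dedr{\cpoloplusl}/\dedr{\cpoloplusr}$ and is routine, and restriction $\delta_{\mathcal L}$ and projection $\pi_n$ follow directly from their single-premise rules, since the side condition $\gamma\notin\mathcal L$ (resp.\ the index decrement) is determined by the label, which the matching transition reproduces verbatim.

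The main obstacle is parallel composition together with its two auxiliary operators. For $\Par$ one must handle not only the interleaving rules $\dedr{\intl},\dedr{\intr}$ (straightforward) but also the synchronization rules $\dedr{\reconfigsr},\dedr{\reconfigrs}$, which have \emph{two} premises and whose label $\rec{x,p}$ carries the communicated \NetKATnoDup\ policy $p$. Matching such a step requires simultaneously matching both a send and a receive transition of the two components; here I would use that bisimilar components must be able to perform the same labelled send/receive, and crucially with the \emph{same} $p$, since $p$ occurs syntactically in the labels $x!p$, $x?p$ and in $\rec{x,p}$. Both premises are then reproducible, and the target $q'\Par s'$ is related to the matching target by closure of $\widehat{\sim}$ under $\Par$. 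The communication-merge $\mid$ (rules $\dedr{\mid^{?!}},\dedr{\mid^{!?}}$) is treated identically, and the left-merge $\llfloor$ is the $\dedr{\intl}$ case with the inactive operand carried along unchanged.

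Finally, a shorter corroborating route is to observe that all the rules above fit a congruence-preserving format for SOS with data: each is positive, without lookahead, with conclusion-source a single operator applied to distinct variables, and with data arguments passed without being tested (the only data-dependent side condition being the label-producing premise of $\dedr{\cpolseqsucc}$). The corresponding meta-theorem for stateless bisimilarity of~\cite{MOUSAVI2005107} then yields congruence for free, the sole obligation being the (immediate) verification of the format conditions. I would present the direct construction as the primary argument and cite the format result as confirmation.
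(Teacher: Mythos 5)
Your proposal is correct, but it inverts the paper's structure: the paper's entire proof of Lemma~\ref{lm:congruence} is the one-line observation that every semantic rule (Figure~\ref{fig::sem-queue-sem} together with (\ref{eq:delta}), (\ref{eq:pi}), (\ref{eq:op-recp-axiom}) and (\ref{eq:aux-op-sem})) fits the congruence formats for SOS with data of~\cite{MOUSAVI2005107} --- exactly the route you relegate to a ``shorter corroborating'' remark at the end. Your primary argument, the explicit congruence-closure-is-a-bisimulation construction, is a genuinely different and more elementary proof, and the places you flag as delicate are the right ones: the two-premise synchronization rules $\dedr{\reconfigsr}$, $\dedr{\reconfigrs}$, $\dedr{\mid^{?!}}$, $\dedr{\mid^{!?}}$ are dispatched correctly because the communicated policy occurs syntactically in the labels $x!p$, $x?p$, so bisimilar components are forced to offer matching send/receive moves with the same $p$; the left-merge and communication-merge cases need $\widehat{\sim}$ to be closed under $\Par$ (not just the source operator) since the rule targets switch operator, which your closure provides; and your observation that the data components evolve as a function of the label (identity for communication labels, pop/push for $(\sigma,\sigma')$) is precisely what makes the paper's definition of bisimilarity a \emph{stateless} one, so matching labels forces matching packet lists. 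One small imprecision: $\dedr{\cpolseqsucc}$ \emph{does} inspect the data (it requires a non-empty list and $\sigma' \in \llbracket p \rrbracket(\sigma :: \langle\rangle)$), though you note this qualification later and it does not affect the argument. The trade-off is the expected one: your direct proof is self-contained but demands the full case analysis over all operators, whereas the paper's appeal to the meta-theorem costs only the (routine) verification that the rules are positive, lookahead-free, and handle data in the permitted way --- the very conditions you enumerate in your last paragraph --- plus the check, implicit in both proofs, that the paper's Definition~\ref{def::bisim} coincides with the stateless bisimilarity to which the meta-theorem applies. Your handling of the $\NetKATnoDup$ argument of $N \Seq \cdot$ via Proposition~\ref{lm:sound-net-kat} is also correct, and recursion variables $X$ impose no obligation since they are nullary.
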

\begin{proof}
The result follows from the fact that
the semantic rules defined in this paper comply to the congruence formats proposed in~\cite{MOUSAVI2005107}.
\end{proof}

\begin{defn}[Complete Tests \& Assignments~\cite{netkat}]\label{def:CTA}
Let $F = \{f_1, \ldots, f_n\}$ be a set of fields names with values in $V_i$, for $i \in \{1, \ldots, n\}$. We call \emph{complete test} (typically denoted by $\alpha$) an expression $f_1 = v_1 \cdot \ldots \cdot f_n = v_n$, with $v_i \in V_i$, for $i \in \{1, \ldots, n\}$.
We call \emph{complete assignment} (typically denoted by $\pi$) an expression $f_1 \leftarrow v_1 \cdot \ldots \cdot f_n \leftarrow v_n$, with $v_i \in V_i$, for $i \in \{1, \ldots, n\}$.
We sometimes write $\alpha_{\pi}$ in order to denote the complete test derived from the complete assignment $\pi$ by replacing all $f_i \leftarrow v_i$ in $\pi$ with $f_i = v_i$; symmetrically for $\pi_{\alpha}$. Additionally, we sometimes write $\sigma_{\alpha}$ to denote the network packet whose fields are assigned the corresponding values in $\alpha$; symmetrically for $\sigma_{\pi}$.
\end{defn}

In Figure~\ref{fig:axiom-ccp}, we introduce $\EDNK$ -- the axiom system of {\DNK}, including the NetKAT axiomatization $\ENK$.
Most of the axioms in Figure~\ref{fig:axiom-ccp} comply to the standard axioms of parallel and communicating processes~\cite{DBLP:books/daglib/0069083}, where, intuitively, $\oplus$ plays the role of non-deterministic choice, $;$ resembles sequential composition and $\bot$ is a process that deadlocks.

For instance, axioms $(A2)-(A5)$ encode the ACI properties of $\oplus$ together with the fact that $\bot$ is the neutral element.

Axioms $(A8)-(A15)$ define parallel composition ($\Par$) in terms of left-merge ($\llfloor$) and communication-merge ($\,\mid\,$) in the standard fashion.
Additionally, $(A12)$ ``pin-points'' a communication step via the newly introduced constants of form $\recp{x,z}$.
An interesting axiom is $(A7):\, p \Par \bot \equiv p$ which, intuitively, states that if one network component fails, then the whole system continues with the behaviour of the remaining components. This is a departure from the approach in~\cite{Kappe20}, where recovery is not possible in case of a component's failure; i.e., $e \Par 0 \equiv 0$.

Axiom $(A0)$ states that if the current packet is dropped as a result of the unsuccessful evaluation of a NetKAT policy, then the continuation is deadlocked. $(A1)$ enables mapping the non-deterministic choice at the level of NetKAT to the setting of {\DNK}.

The axioms encoding the restriction operator $\delta_{\mathcal{L}}(-)$ and the projection operator $\pi_n(-)$ are defined in the standard fashion, on top of {\DNK} normal forms later defined in this section.
Intuitively, normal forms are defined inductively, as sums of complete tests and complete assignments $\alpha \cdot \pi$, or communication steps $x?q,\, x!q$ and $\recp{x,q}$, followed by arbitrary {\DNK} policies.

Last, but not least, $(AIP)$ corresponds to the so-called ``Approximation Induction Principle'', and it provides a mechanism for reasoning on the equivalence of recursive behaviours, up to a certain limit denoted by $n$.

\subsubsection{Soundness and Completeness}\label{sec:sound-complete}
In what follows, we show that the axiom system $\EDNK$ is sound and ground-complete with respect to {\DNK} bisimilarity.

We proceed by first defining a notion of normal forms of {\DNK} terms, together with a notion of guardedness and a statement about the branching finiteness of guarded {\DNK} processes.

\begin{lem}[{$\NetKATnoDup$} Normal Forms]\label{lm:netkat-nf}
We call a $\NetKATnoDup$ policy $q$ in \emph{normal form} (n.f.) whenever $q$ is of shape
$
\Sigma_{\alpha \cdot \pi \in \mathcal{A}} \alpha \cdot \pi
$
with $\mathcal{A} = \{\alpha_i \cdot \pi_i \mid i \in I\}$.
For every $\NetKATnoDup$ policy $p$ there exists a $\NetKATnoDup$ policy $q$ in n.f. such that
$\ENK \vdash p \equiv q$.
\end{lem}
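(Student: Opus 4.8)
The plan is to prove the statement by structural induction on the dup-free policy $p$, reducing every constructor to a sum of complete test/assignment pairs using the Packet Algebra, Boolean Algebra and Kleene Algebra axioms of $\ENK$; the one genuinely delicate case is the Kleene star, which I would settle using the finiteness of the field domains together with the least-fixpoint axiom KA-LFP-L.

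For the base cases I would first derive $\packcopy \equiv \Sigma_\alpha \alpha$ by repeatedly applying PA-MATCH-ALL field by field (and KA/BA distributivity to multiply out the per-field sums), and then attach to each complete test $\alpha$ the matching complete assignment $\pi_\alpha$ via PA-FILTER-MOD, so that $\alpha \equiv \alpha \cdot \pi_\alpha$; this puts $\packcopy$ in normal form. The atom $\drop$ is the empty sum. A single test $f = n$ is the subsum of those $\alpha$ that set $f$ to $n$, again completed with their identity assignments, while an assignment $f \leftarrow n$ is handled as $\packcopy \cdot (f \leftarrow n) \equiv \Sigma_\alpha \alpha \cdot (\pi_\alpha \cdot (f \leftarrow n))$, where $\pi_\alpha \cdot (f \leftarrow n)$ is again a complete assignment once PA-MOD-MOD-COMM and PA-MOD-MOD overwrite the $f$-component. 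Negation $\neg a$ of a predicate reduces, by the Boolean Algebra axioms, to the complementary subsum of complete tests.

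For the inductive cases I keep the sub-policies in normal form by the induction hypothesis. Sums are immediate: flatten with KA-PLUS-ASSOC and KA-PLUS-COMM and remove repetitions with KA-PLUS-IDEM. For a product, distribute with KA-SEQ-DIST-L and KA-SEQ-DIST-R so it becomes a sum of terms $(\alpha_i \cdot \pi_i) \cdot (\alpha_j \cdot \pi_j)$, and then reduce each such term: the inner factor $\pi_i \cdot \alpha_j$ is evaluated field by field by commuting each test leftwards (PA-MOD-FILTER-COMM, BA-SEQ-COMM) until it meets the assignment to the same field, where PA-MOD-FILTER yields back $\pi_i$ when the values agree on every field, and PA-CONTRA (preceded by a reverse use of PA-MOD-FILTER to insert the matching test) yields $\drop$ via KA-ZERO-SEQ when they disagree on some field; in the surviving case $\pi_i \cdot \pi_j \equiv \pi_j$ by PA-MOD-MOD together with PA-MOD-MOD-COMM, so the whole term collapses to the single normal-form summand $\alpha_i \cdot \pi_j$.

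The main obstacle is the star $p^*$. Here I would exploit that the field domains are finite, so there are only finitely many complete tests and hence finitely many normal forms up to provable equivalence; ordering them by the Kleene natural order $a \le b$ iff $a + b \equiv b$, the partial sums $N_k := \Sigma_{i=0}^{k} p^i$ (each reducible to normal form by the sum and product cases above, using $p^0 \equiv \packcopy$) form a non-decreasing chain in this finite poset and therefore stabilise: there is a $k$ with $N_{k+1} \equiv N_k$, that is $\packcopy + p \cdot N_k \equiv N_k$ after KA-SEQ-DIST-L. Applying KA-LFP-L with $q = \packcopy$ and $r = N_k$ gives $p^* \cdot \packcopy \le N_k$, hence $p^* \le N_k$ by KA-SEQ-ONE, while $N_k \le p^*$ follows from the unrolling axioms and monotonicity; thus $p^* \equiv N_k$ lands in normal form. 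Equivalently, one could shortcut the entire argument by appealing to Lemma~$4$ of~\cite{netkat} and the completeness of $\ENK$ on NetKAT, since the dup-free semantics of $p$ already coincides with that of a finite sum of complete test/assignment pairs; I prefer the inductive derivation above as it stays internal to $\EDNK$.
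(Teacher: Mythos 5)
Your proof is correct, but it takes a genuinely different route from the paper's. The paper proves this lemma semantically in a few lines: it invokes Lemma~4 of the NetKAT paper, which expresses $\llbracket p \rrbracket$ as the union of the denotations of the words in the language model $G(p) \subseteq A \cdot (\Pi \cdot \textbf{dup})^* \cdot \Pi$; in the $\textbf{dup}$-free fragment with finite field domains, $G(p)$ is a finite subset of $A \cdot \Pi$, so $\llbracket p \rrbracket = \llbracket \Sigma_{\alpha \cdot \pi \in G(p)}\, \alpha \cdot \pi \rrbracket$, and the completeness theorem for NetKAT then delivers $\ENK \vdash p \equiv \Sigma_{\alpha \cdot \pi \in G(p)}\, \alpha \cdot \pi$ in one stroke --- exactly the shortcut you mention in your closing sentence. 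Your argument instead stays entirely inside the axiom system: a structural induction whose base and product cases are careful but routine Packet/Boolean/Kleene Algebra manipulations (your handling of $\pi_i \cdot \alpha_j$ via PA-MOD-FILTER-COMM, PA-MOD-FILTER, and the reverse-PA-MOD-FILTER-then-PA-CONTRA trick for clashing values is sound), and whose star case carries the real content: the partial sums $N_k = \Sigma_{i=0}^{k} p^i$ form a non-decreasing chain of summand sets over the finite alphabet of $\alpha \cdot \pi$ pairs, hence stabilise, and stabilisation gives $\packcopy + p \cdot N_k \equiv N_k$, so one application of KA-LFP-L with $q = \packcopy$ and $r = N_k$, together with KA-SEQ-ONE, the unrolling axioms and derivable monotonicity, yields $p^* \equiv N_k$. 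What your route buys is an effective normalisation procedure and independence from NetKAT completeness as a black box; what it costs is length, and the fact that you are partly re-deriving machinery already embedded in the NetKAT completeness proof. One small point to tighten: the negation case needs an explicit sub-induction on the predicate grammar showing that every predicate is provably a sum of complete tests (via BA-EXCL-MID, PA-CONTRA and PA-MATCH-ALL), with each surviving test then completed to $\alpha \cdot \pi_\alpha$ as in your base case; you gesture at this but do not spell it out.
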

\begin{proof}
The result follows by Lemma~$4$ in~\cite{netkat}, stating that:
\begin{equation}\label{eq:Gp}
\llbracket p \rrbracket = \bigcup_{x \in G(p)} \llbracket x \rrbracket
\end{equation}
where $G(p)$ defines the language model of NetKAT terms. Let $A$ be the set of all complete tests, and $\Pi$ be the set of all complete assignments. Similarly to~\cite{netkat}, we consider network packets with values in finite domains. Consequently, $A$ and $\Pi$ are finite.
In~\cite{netkat}, $G(p)$ is defined as a set with elements in $A \cdot (\Pi \cdot \textbf{dup})^* \cdot \Pi$. Recall that, in our setting, we work with the $\textbf{dup}$-free fragment of NetKAT. Hence, $G(p)$ is a finite set of shape $G = \{\alpha_i \cdot \pi_i \mid i \in I, \alpha_i \in A, \pi_i \in \Pi\}$. Based on the definition of $\llbracket-\rrbracket$ and~(\ref{eq:Gp}) it follows that:
\begin{equation}\label{eq:Gp2}
\llbracket p \rrbracket = \llbracket \Sigma_{\alpha \cdot \pi \in G} \alpha \cdot \pi \rrbracket
\end{equation}
Therefore, by the completeness of NetKAT, it holds that:
$\ENK \vdash p \equiv \Sigma_{\alpha \cdot \pi \in G} \alpha \cdot \pi$. In other words, $p$ can be reduced to a term in n.f.
\end{proof}

\begin{defn}[$\DNK$ Normal Forms]\label{def:dnk-nf}
We call a $\DNK$ policy in \emph{normal form} (n.f.) if it is of shape
\[
\Sigma^{\oplus}_{i \in I} (\alpha_i \cdot \pi_i) ; d_i \oplus \Sigma^{\oplus}_{j \in J} c_j ; d_j \,(\oplus \bot)
\]
where $d_i, d_j$ range over {\DNK} policies and $c_j ::= x?q \mid x!q \mid \recp{x,q}$ with $q$ denoting terms in $\NetKATnoDup$.
\end{defn}

\begin{defn}[Guardedness]\label{def::guard}
A {\DNA} policy $p$ is \emph{guarded} if and only if all occurrences of all variables $X$ in $p$ are \emph{guarded}. An occurrence of a variable $X$ in a policy $p$ is \emph{guarded} if and only if (i) $p$ has a subterm of shape $p' ; t$ such that
either $p'$ is variable-free, or all the occurrences of variables $Y$ in $p'$ are guarded,
and $X$ occurs in $t$, or (ii) if $p$ is of shape $y?X;t$, $y!X;t$ or $\recp{X,t}$.
\end{defn}

\begin{lem}[Branching Finiteness]\label{lm:fin-br}
All guarded {\DNK} policies are finitely branching.
\end{lem}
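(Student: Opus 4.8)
The plan is to prove finite branching by showing that every configuration $(p, H_0, H_1)$ in which $p$ is a guarded {\DNK} policy has only finitely many outgoing transitions, arguing by induction on the structure of $p$ while treating recursion separately. First I would record the two facts that make the leaf cases finite: packet fields range over finite domains, so for any $\NetKATnoDup$ policy $z$ and packet $\sigma$ the set $\llbracket z \rrbracket(\sigma :: \langle\rangle)$ is finite, whence rule $\dedr{\cpolseqsucc}$ contributes only finitely many $(\sigma, \sigma')$-labelled transitions to a prefix $z \Seq d$; the communication prefixes $x?z \Seq d$, $x!z \Seq d$ and $\recp{x,z} \Seq d$ each fire exactly once; and $\bot$ has no transitions. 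Hence every prefix term is finitely branching.

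Next I would show that the composite operators preserve finite branching. For $\oplus$, $\delta_{\mathcal{L}}(-)$ and $\pi_n(-)$ this is immediate, since by $\dedr{\cpoloplusl}$, $\dedr{\cpoloplusr}$ and the rules~(\ref{eq:delta}),~(\ref{eq:pi}) each outgoing transition is inherited from a single operand, so the successor set is contained in the (finite, by the induction hypothesis) union of the operands' successor sets. For $\Par$, $\llfloor$ and $\mid$ the interleaving rules $\dedr{\intl}$, $\dedr{\intr}$, $\dedr{\llfloor}$ again inherit single transitions, while the synchronisation rules $\dedr{\reconfigsr}$, $\dedr{\reconfigrs}$ and their communication-merge analogues in~(\ref{eq:aux-op-sem}) pair a send of one operand with a matching receive of the other; the number of such pairs is bounded by the product of the two finite initial-transition sets, hence finite.

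The crux -- and the main obstacle -- is the recursion rule $\dedr{\cpolX}$, which makes the transitions of a variable $X$ coincide with those of its defining body $p$ (where $X \triangleq p$); since $p$ may again mention $X$, naive structural induction is circular. Guardedness breaks the loop through the following observation: the set of initial transitions of a term does not depend on the behaviour of any variable occurring guarded in it, because by Definition~\ref{def::guard} every such occurrence lies behind a $\Seq$-prefix or inside a communication prefix and is therefore reached only after a first action. Working with a guarded recursive specification (every defining body $p_X$ guarded, as required by the restriction to guarded recursion), I would formalise this as follows: to expose the head of a term I unfold each top-level variable once via $\dedr{\cpolX}$; the resulting body is guarded, so its above-prefix part contains no bare variable, and a finite structural traversal stopping at the prefixes yields a head normal form $\Sigma^{\oplus}_{i} (\alpha_i \cdot \pi_i) \Seq d_i \oplus \Sigma^{\oplus}_{j} c_j \Seq d_j \,(\oplus \bot)$ in the sense of Definition~\ref{def:dnk-nf}. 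This computation terminates because a single unfolding per head variable suffices, and its finitely many summands are in bijection with the initial transitions (each $(\alpha_i \cdot \pi_i) \Seq d_i$ summand contributing the finitely many $\sigma'$ produced by $\dedr{\cpolseqsucc}$, and each $c_j \Seq d_j$ summand a single communication transition). Finiteness here is precisely what guardedness buys: an unguarded equation such as $X \triangleq X \oplus (z \Seq d)$ would demand unbounded unfolding merely to read off the head, and is excluded by Definition~\ref{def::guard}.

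Finally, I would observe that the argument applies to every reachable configuration and not only to the initial one: the target of any transition is a subterm of a guarded body, or a parallel, restricted or projected combination of such, possibly a bare variable; in each case the same head-normal-form computation applies, unfolding any exposed variable once through its guarded defining equation. Hence every state of the transition system generated by a guarded {\DNK} policy has finitely many outgoing transitions, which is the claim.
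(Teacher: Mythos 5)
The paper states Lemma~\ref{lm:fin-br} \emph{without any proof} --- it is simply asserted and then invoked in the soundness proof of $(AIP)$ (Lemma~\ref{lem:sound-aip}) --- so there is no official argument to compare yours against; your proposal in effect supplies the missing proof, and it is correct. It is the standard process-algebra argument (in the style of~\cite{DBLP:books/daglib/0069083}), and all the essential points are present: finiteness of $\llbracket z \rrbracket(\sigma\!::\!\langle\rangle)$ from the finite field domains, so $\dedr{\cpolseqsucc}$ contributes finitely many transitions to a prefix $z \Seq d$; closure of finite branching under $\oplus$, $\delta_{\mathcal{L}}(-)$, $\pi_n(-)$, $\Par$, $\llfloor$ and $\mid$, with the product bound for the synchronisation rules; and, crucially, the treatment of $\dedr{\cpolX}$, where you correctly identify that naive structural induction is circular and that guardedness buys exactly what is needed --- in a guarded body no variable occurs unguarded at the head, so one unfolding per exposed variable already yields a head normal form in the sense of Definition~\ref{def:dnk-nf}, whose finitely many summands bound the initial transitions. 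Two minor remarks. First, read literally, Definition~\ref{def::guard} would classify terms such as $\mathit{Host} \Par \mathit{Switch}$ (the paper's own Example~\ref{eg:firewall}) as unguarded, since the bare variables are not behind any prefix; your reading --- all \emph{defining bodies} guarded, with bare occurrences handled by a single unfolding via $\dedr{\cpolX}$ --- is the intended one and is precisely what your closing paragraph about reachable configurations requires. Second, your claim that the summands of the head normal form are ``in bijection'' with the initial transitions is slightly loose: a summand $(\alpha_i \cdot \pi_i) \Seq d_i$ contributes a transition only when the head packet of $H$ satisfies $\alpha_i$, and then exactly one, since $\pi_i$ is a complete assignment; this over-counts rather than under-counts, so finiteness is unaffected and nothing breaks.
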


\begin{lem}[$\DNK$ Normalization]\label{lm:dnk-norm}
$\EDNK$ is normalising for {\DNK}. In other words,
for every guarded {\DNK} policy $p$ there exists a {\DNK} policy $q$ in n.f. such that $\EDNK \vdash p \equiv q$.
\end{lem}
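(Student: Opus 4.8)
The plan is to prove Lemma~\ref{lm:dnk-norm} by induction on the structure of the guarded {\DNK} policy $p$, producing for each syntactic construct a normal form of the shape required by Definition~\ref{def:dnk-nf}. The base cases are immediate: $\bot$ is already a (degenerate) normal form via the optional ``$(\oplus \bot)$'' summand, and a policy of shape $z \Seq d$ with $z \in \NetKATnoDup$ is handled by first applying Lemma~\ref{lm:netkat-nf} to rewrite $z$ as $\Sigma_{\alpha_i \cdot \pi_i \in \mathcal{A}} \alpha_i \cdot \pi_i$, then distributing ``$;$'' over this sum using axiom $(A1)$ to obtain $\Sigma^{\oplus}_i (\alpha_i \cdot \pi_i) \Seq d$, using $(A0)$ to discard any $\drop$ summand. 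The communication prefixes $x?z \Seq d$, $x!z \Seq d$ and $\recp{x,z} \Seq d$ are already single summands of the required $c_j \Seq d_j$ form.

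For the inductive step I would treat each operator in turn. Non-deterministic choice $p \oplus q$ is the easiest: by the induction hypothesis both $p$ and $q$ have normal forms, and the concatenation of their summands is again a normal form (merging the two optional $\bot$ tails via $(A5)$). The parallel composition $p \Par q$ is the crux and is where the real work lies: I would expand it using axiom $(A8)$, $p \Par q \equiv p \llfloor q \oplus q \llfloor p \oplus p \mid q$, and then normalise each of the three components separately. For the left-merges, after putting $p$ in normal form I would push $\llfloor q$ inside the sum using $(A11)$, and then apply $(A10)$ to each summand $a \Seq p_i$ to get $a \Seq (p_i \Par q)$; the residual $p_i \Par q$ is again a {\DNK} policy, so it is a legitimate continuation $d$ in the normal form (it need not itself be normalised). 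The communication-merge $p \mid q$ is handled by distributing via $(A13)$ and $(A14)$, collapsing each pair of summands either to the synchronised form $\recp{x,z} \Seq (p_i \Par q_j)$ via $(A12)$ when the guards are a matching $x?z$/$x!z$ pair, or to $\bot$ via $(A15)$ otherwise; the $\bot$ summands are absorbed by $(A5)$.

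The remaining operators reduce to the cases above. For the recursive variable $X \triangleq p$, guardedness (Definition~\ref{def::guard}) guarantees that unfolding $X$ once, via the operational rule $\dedr{\cpolX}$ and the corresponding provable unfolding equation, exposes the variable occurrences only underneath a prefix, so I can normalise the body $p$ and substitute, obtaining a normal form whose continuations may again mention $X$. The restriction $\delta_{\mathcal{L}}(p)$ and projection $\pi_n(p)$ are handled by first normalising the argument and then applying their defining axioms $(\delta_\bot)$--$(\delta_{\oplus})$ and $(\Pi_0)$--$(\Pi_{\oplus})$, which distribute over $\oplus$ and act on each guarded summand: $(\delta_{\Seq})$/$(\delta_{\Seq}^{\bot})$ either keep the prefix or replace the summand by $\bot$ depending on membership in $\mathcal{L}$, and $(\Pi_{\Seq})$ decrements the index on the continuation while $(\Pi_0)$/$(\Pi_\bot)$ terminate the recursion.

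\textbf{Main obstacle.} The principal difficulty is ensuring that the inductive process actually terminates and that the continuations $d_i, d_j$ appearing in the normal forms are well-defined {\DNK} policies rather than objects requiring further normalisation in lock-step. The subtlety is that normalising $p \Par q$ produces continuations of the form $p_i \Par q$, which are strictly \emph{smaller} than $p \Par q$ only in a carefully chosen measure (for instance, total number of parallel/merge operators together with prefix-depth), so I would need to set up the induction on a well-founded measure that strictly decreases under $(A8)$, $(A10)$ and the elimination of $\llfloor$ and $\mid$. Guardedness is what makes this measure decrease and is also what is needed, via Lemma~\ref{lm:fin-br}, to keep each sum finite; verifying that every rewriting step above strictly reduces this measure — in particular that the $\llfloor$ and $\mid$ operators are fully eliminated and not regenerated — is the delicate bookkeeping that the proof hinges on.
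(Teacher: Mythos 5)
Your proposal is correct and takes essentially the same route as the paper's (very terse) proof: structural induction on $p$, with the base case $z \Seq d$ handled by Lemma~\ref{lm:netkat-nf} plus $(A0)$/$(A1)$, the prefixes $x?z \Seq d$, $x!z \Seq d$, $\recp{x,z} \Seq d$ already in shape, and the inductive cases for $\oplus$, $\Par$, $\llfloor$, $\mid$, $\delta_{\mathcal{L}}(-)$, $\pi_n(-)$ discharged by $(A8)$--$(A15)$, $(\delta_\bot)$--$(\delta_{\oplus})$ and $(\Pi_0)$--$(\Pi_{\oplus})$. Two points of divergence are worth flagging. First, the ``main obstacle'' you describe dissolves on closer inspection: Definition~\ref{def:dnk-nf} demands only \emph{head} normalisation, since the continuations $d_i, d_j$ range over arbitrary {\DNK} policies. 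As you yourself correctly observe when noting that $p_i \Par q$ ``need not itself be normalised'', the induction hypothesis is invoked only on the immediate subterms $p_1$, $p_2$, $p'$, and the operators $\llfloor$ and $\mid$ are eliminated at the top level by a single pass of $(A8)$, $(A10)$--$(A15)$ applied to the already-normalised operands; hence plain structural induction terminates and no auxiliary well-founded measure is needed --- this is precisely what the paper's one-line ``follow by the axiom system $\EDNK$ and the induction hypothesis'' relies on, so your closing paragraph manufactures a difficulty the lemma does not have. Second, your handling of recursion variables genuinely differs: the paper simply \emph{discards} the case $p \triangleq X$, because under Definition~\ref{def::guard} a bare variable occurrence is unguarded and the lemma only claims normalisation for guarded policies, whereas you unfold $X$ via its defining equation and normalise the body. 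Your variant is sound (guardedness keeps every occurrence of $X$ in the body underneath a prefix, i.e., inside a continuation, so normalising the body is again one-level structural induction and does not loop), and it yields the slightly stronger conclusion that defined variables themselves have normal forms; but note that it presupposes the defining equation $X \triangleq D$ is usable as a proof rule of $\EDNK$, which the paper assumes implicitly but does not list among the axioms in Figure~\ref{fig:axiom-ccp}.
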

\begin{proof}
The proof follows from Lemma~\ref{lm:netkat-nf} and $(A1):\,(z+y);p \equiv z ; p \oplus y ; p$ in a standard fashion, by structural induction.\\
\emph{Base cases.}
\begin{itemize}
    \item $p \triangleq \bot$ trivially holds
    \item $p \triangleq q ; d$ with $q$ a $\NetKATnoDup$ term holds by Lemma~\ref{lm:netkat-nf} and $(A1)$
    \item $p \triangleq c ; d$ with $c ::= x?q \mid x!q \mid \recp{x,q}$ trivially holds
\end{itemize}
\emph{Induction step.}
\begin{itemize}
    \item $p \triangleq p_1 \oplus p_2$ ~~~~~~~~$p \triangleq X$ - case discarded, as $p$ is not guarded
    \item $p \triangleq p_1 \llfloor p_2$ ~~~~~~~~~~$p \triangleq \pi_n(')$
    \item $p \triangleq p_1 \,\mid\, p_2$ ~~~~~~~~$p \triangleq \delta_{\mathcal{L}}(p')$
    \item $p \triangleq p_1 \Par p_2$
\end{itemize}
All items above follow by the axiom system $\EDNK$ and the induction hypothesis, under the assumption that $p_1, p_2$ and $p'$ are guarded.
\end{proof}

For simplicity, in what follows, we assume that {\DNK} policies are guarded.

\begin{lem}[Soundness of $\EDNKnoAIP$]\label{lm:sound-noAIP}
Let $\EDNKnoAIP$ stand for the axiom system $\EDNK$ in Figure~\ref{fig:axiom-ccp}, without the axiom $(AIP)$. $\EDNKnoAIP$ is sound for {\DNK} bisimilarity.
\end{lem}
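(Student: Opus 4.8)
The plan is to leverage two facts established earlier: that {\DNK} bisimilarity is a congruence (Lemma~\ref{lm:congruence}) and that it is, by definition, an equivalence relation. Together these imply that the set of equations valid under $\sim$ is closed under the inference rules of equational logic (reflexivity, symmetry, transitivity, and substitution into contexts). Consequently, to show that every equation derivable in $\EDNKnoAIP$ relates bisimilar terms, it suffices to verify soundness of each \emph{individual} axiom of Figure~\ref{fig:axiom-ccp} in isolation: for each instance $\ell \equiv r$ I must exhibit a witnessing bisimulation, or, more economically, read off the transfer property directly from the operational rules of Figure~\ref{fig::sem-queue-sem} and equations (\ref{eq:delta})--(\ref{eq:aux-op-sem}). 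The whole argument is therefore a finite case analysis, one axiom at a time; in most cases the candidate relation is the symmetric closure of $\{(\ell, r)\}$ extended by the identity on all other {\DNK} terms.

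I would then partition the axioms into four groups. First, the NetKAT fragment $\ENK$: rather than reprove NetKAT soundness, I invoke it, so that $\ENK \vdash p \equiv q$ gives $\llbracket p \rrbracket = \llbracket q \rrbracket$, whence Proposition~\ref{lm:sound-net-kat} (Semantic Layering) yields $(p \Seq d) \sim (q \Seq d)$ for every {\DNK} policy $d$; congruence then lifts this to arbitrary occurrences of the NetKAT equation. Second, the structural process-algebraic axioms: commutativity, associativity, and idempotence of $\oplus$ ($(A2)$--$(A4)$), neutrality of $\bot$ for $\oplus$ and $\Par$ ($(A5)$, $(A7)$), and commutativity of $\Par$ and $\mid$ ($(A6)$, $(A14)$) are all discharged by the obvious relation, with the transfer property read off the relevant rule. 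Axioms $(A0)$ and $(A1)$ use rule $\dedr{\cpolseqsucc}$ together with the denotational clauses: $\drop \Seq p$ has no outgoing transition since $\llbracket \drop \rrbracket(\sigma\,{::}\,\langle\rangle) = \{\}$, matching $\bot$, and the summands of $(z+y)\Seq p$ coincide with those of $z\Seq p \oplus y\Seq p$ because $\llbracket z+y \rrbracket = \llbracket z \rrbracket \cup \llbracket y \rrbracket$.

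Third, I treat the auxiliary-operator axioms. The restriction axioms $(\delta_\bot)$--$(\delta_\oplus)$ and the projection axioms $(\Pi_0)$--$(\Pi_\oplus)$ are checked directly against the single rules (\ref{eq:delta}) and (\ref{eq:pi}): for $(\delta_\Seq)$ versus $(\delta_\Seq^\bot)$ one case-splits on whether the guarding action $\at$ lies in $\mathcal{L}$ (if so, rule (\ref{eq:delta}) is blocked and the term deadlocks like $\bot$), while $(\Pi_\Seq)$ merely records that a prefix step decrements the projection index. The constant $\recp{x,z}$ is handled by its defining rule (\ref{eq:op-recp-axiom}).

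The main obstacle will be the parallel-composition axioms, above all the expansion law $(A8): p \Par q \equiv p \llfloor q \oplus q \llfloor p \oplus p \mid q$ together with $(A9)$--$(A13)$ and $(A15)$. Here the bisimulation argument requires a complete case analysis over the four label kinds $(\sigma,\sigma')$, $x!n$, $x?n$, and the synchronization label $\rec{x,n}$. I would show that every interleaving transition of $p \Par q$ (rules $\dedr{\intl}$, $\dedr{\intr}$) is reproduced by exactly one left-merge summand via $\dedr{\llfloor}$ and $(A10)$, that every synchronization transition (rules $\dedr{\reconfigsr}$, $\dedr{\reconfigrs}$) is reproduced by the communication-merge summand $p \mid q$ via $\dedr{\mid^{?!}}$/$\dedr{\mid^{!?}}$ and $(A12)$, and conversely that each summand's transitions are transitions of $p \Par q$, with continuations matching up to the congruence of $\Par$. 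Care is needed to confirm that the communication-merge rules fire \emph{only} for complementary $x?$/$x!$ prefixes, so that $(A15)$'s collapse to $\bot$ in every remaining case is sound, and that $(A7)$ and $(A9)$ are sound because $\bot$ offers neither an interleaving nor a synchronizing move. Once the expansion law is settled, the remaining merge axioms follow by the same transition-matching, completing the per-axiom verification and hence the soundness of $\EDNKnoAIP$.
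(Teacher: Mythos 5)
Your proposal is correct and follows essentially the same route as the paper: the paper likewise reduces soundness of $\EDNKnoAIP$ to a per-axiom case analysis on transitions $(p, H_0, H'_0) \xrightarrow{\gamma} (p', H_1, H'_1)$ against the rules of Figure~\ref{fig::sem-queue-sem} and (\ref{eq:delta})--(\ref{eq:aux-op-sem}), spelling out $(A1)$ and $(A12)$ in the text and the remaining axioms in Appendix~\ref{sec::soundness}, and it resolves the mismatched continuations in $(A8)$ and $(A14)$ exactly as you anticipate, by appealing to the previously established soundness of $(A6)$. The only noteworthy divergence is minor: for $(A6)$ itself the paper proceeds by induction on the size of terms in normal form rather than via a witnessing relation, so there your ``single pair plus identity'' relation must be replaced by the relation containing \emph{all} pairs $(u \Par v,\, v \Par u)$ (or an up-to argument), since the continuations of a step from $p \Par q$ are again commuted parallel compositions rather than identical terms.
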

\begin{proof}
The proof reduces to showing that for all $p$, $q$ {\DNK} policies, the following holds:
If $\EDNKnoAIP \vdash p \equiv q$ then $p \sim q$. 
This is proven in a standard fashion, by case analysis on transitions of shape
\[
(p, H_0, H'_0) \xrightarrow{\gamma} (q, H_1, H'_1)
\]
with $\gamma ::= (\sigma, \sigma') \mid x?n \mid x!n \mid \rec{x,n} $, according to the semantic rules in Figure~\ref{fig::sem-queue-sem}, (\ref{eq:delta}), (\ref{eq:pi}), (\ref{eq:op-recp-axiom}) and (\ref{eq:aux-op-sem}).

For an example, consider $(A1)$ and $(A12)$ in Figure~\ref{fig:axiom-ccp}; the proof of soundness for these axioms are given in the following. The soundness proofs for the rest of the axioms are provided in Appendix~\ref{sec::soundness}.
\begin{itemize}
	
	\item{
	Axiom under consideration:
	\begin{equation}
	(z + y) \Seq p \equiv z \Seq p \oplus y \Seq p \quad (A1)
	\end{equation}
	for $z, y \in {\NetKATnoDup}$ and $p \in {\DNK}$. Consider an arbitrary but fixed network packet $\sigma$,
	let $S_z \triangleq \llbracket z \rrbracket(\sigma \!\!::\!\! \langle \rangle)$, $S_y \triangleq \llbracket y \rrbracket(\sigma \!\!::\!\! \langle \rangle)$ and $S_{zy} \triangleq \llbracket z + y \rrbracket(\sigma \!\!::\!\! \langle \rangle)$. According to the semantic rules of {\DNK}, the derivations of the term $(z + y) \Seq p$ are as follows:
	\begin{enumerate}[leftmargin=.5in,label=(\alph*)]
		\item{\label{enum::a1-a}
		\[
			\begin{array}{r@{}l@{}c@{}r@{}l@{}c@{}r@{}l@{}}
			\textnormal{For all } \sigma' \in S_{zy}: \quad \dedr{\cpolseqsucc{}} &\sosrule{}{ ( (z + y) ; p, \sigma :: H, H')  \trans{(\sigma, \sigma')} ( p, H , \sigma' :: H')}\\
			
			\end{array}
		\]
		}
	\end{enumerate}

	Accordingly, the derivations of the term $z \Seq p \oplus y \Seq p$ are as follows:
	\begin{enumerate}[leftmargin=.5in,label=(\alph*)]
		\setcounter{enumi}{1}
		\item{\label{enum::a1-b}
		\[
			\begin{array}{r@{}r@{}l@{}c@{}r@{}l@{}c@{}r@{}l@{}}
			\textnormal{For all } \sigma' \in S_z:\quad\quad & \dedr{\cpolseqsucc{}} &\sosrule{}{ ( z ; p, \sigma :: H, H')  \trans{(\sigma, \sigma')} ( p, H , \sigma' :: H')}\\
			
			&\dedr{\cpoloplusl}&\sosrule{}{ (z ; p \oplus y ; p, \sigma :: H, H')  \trans{(\sigma, \sigma')} ( p, H, \sigma' :: H')}
			\end{array}
		\]
		}
		\item{\label{enum::a1-c}
			\[
				\begin{array}{r@{}r@{}l@{}c@{}r@{}l@{}c@{}r@{}l@{}}
				\textnormal{For all } \sigma' \in S_y:\quad\quad & \dedr{\cpolseqsucc{}} &\sosrule{}{ ( y ; p, \sigma :: H, H')  \trans{(\sigma, \sigma')} ( p, H , \sigma' :: H')}\\
				
				&\dedr{\cpoloplusr}&\sosrule{}{ (z ; p \oplus y ; p, \sigma :: H, H')  \trans{(\sigma, \sigma')} ( p, H, \sigma' :: H')}
				\end{array}
			\]
		}
	\end{enumerate}

As demonstrated in \ref{enum::a1-a} and \ref{enum::a1-b}, \ref{enum::a1-c}, both of the terms $(z + y) \Seq p$ and $z \Seq p \oplus y \Seq p$ initially only afford a transition of shape $(\sigma, \sigma')$ and they converge into the same expression after taking that transition:
	\begin{alignat}{2}
	((z + y) ; p, \sigma :: H, H') \trans{(\sigma, \sigma')}& (p, H, \sigma' :: H') \\
	(z \Seq p \oplus y \Seq p, \sigma :: H, H') \trans{(\sigma, \sigma')}& (p, H, \sigma' :: H')
	\end{alignat}
	In the case of the term $(z + y) \Seq p$, the possible values for the $\sigma'$ ranges over $S_{zy}$. Whereas for the term $(z + y) \Seq p$, the possible values for the $\sigma'$ ranges over $S_{z} \cup S_{y}$.
	However, observe that $S_{zy}$ is equal to $S_{z} \cup S_{y}$:
	\begin{alignat}{2}
	S_{zy} &= \llbracket z + y \rrbracket(\sigma \!\!::\!\! \langle \rangle) & (\textnormal{Definition of } S_{zy}) \\ 
	&= \llbracket z \rrbracket(\sigma \!\!::\!\! \langle \rangle) \cup \llbracket y \rrbracket(\sigma \!\!::\!\! \langle \rangle) \quad & (\textnormal{Definition of } +) \\ 
	&= S_z \cup S_y &  (\textnormal{Definition of } S_{z} \textnormal{ and } S_{y})
	\end{alignat}
	Hence, it is straightforward to conclude that the following holds:
	\begin{equation}
	((z + y) \Seq p) \sim (z \Seq p \oplus y \Seq p)
	\end{equation}
}

\item{
	Axiom under consideration:
	\begin{equation}
	(x?\mathit{z} \Seq p) \mid (x!\mathit{z} \Seq q) \equiv \recp{x,z} \Seq (p \Par q) \quad (A12)
	\end{equation}

	for $p, q \in {\DNK}$. The derivations of the term $(x?\mathit{z} \Seq p) \mid (x!\mathit{z} \Seq q)$ are as follows:
	\begin{enumerate}[leftmargin=.5in,label=(\alph*)]
		\item{\label{enum::a12-a}
		\[
			\begin{array}{r@{}l@{}c@{}r@{}l@{}c@{}r@{}l@{}}
			\dedr{\cpolmsgrec} & \sosrule{}{ (x?z ; p, H, H')  \trans{x?z} (p, H, H')} &
			\quad \quad &
			\dedr{\cpolmsgsend}&\sosrule{}{ (x!z ; q, H, H')  \trans{x!z} ( q, H, H')}\\
			
			\dedr{\mid^{?!}} & \multicolumn{6}{@{}l@{}}{ \sosrule{\myquad[30]}{((x?\mathit{z} \Seq p) \mid (x!\mathit{z} \Seq q), H, H') \trans{\rec{x,z}} (p \Par q, H, H')}}
			\end{array} 
		\]
		}
	\end{enumerate}

The derivations of the term $\recp{x,z} \Seq (p \Par q)$ are as follows:
	\begin{enumerate}[leftmargin=.5in,label=(\alph*)]
		\setcounter{enumi}{1}
		\item{\label{enum::a12-b}
			\[
				\begin{array}{r@{}l@{}c@{}r@{}l@{}c@{}r@{}l@{}}
				 \dedr{\recp{x,z}} & \sosrule{}{(\recp{x,z} \Seq (p \Par q), H, H') \trans{\rec{x,z}} (p \Par q, H, H')} 
				\end{array} 
			\]
		}
	\end{enumerate}
As demonstrated in \ref{enum::a12-a} and \ref{enum::a12-b}, both of the terms $(x?\mathit{z} \Seq p) \mid (x!\mathit{z} \Seq q)$ and $\recp{x,z} \Seq (p \Par q)$ initially only afford the transition $\rec{x, z}$ and they converge into the same expression after taking that transition:
\begin{alignat}{2}
((x?\mathit{z} \Seq p) \mid (x!\mathit{z} \Seq q), H, H') \trans{\rec{x,z}}& (p \Par q, H, H')\\ 
(\recp{x,z} \Seq (p \Par q), H, H') \trans{\rec{x,z}}& (p \Par q, H, H')
\end{alignat}
Hence, it is straightforward to conclude that the following holds:
\begin{equation}\label{eq::sound-cm9-conc}
((x?\mathit{z} \Seq p) \mid (x!\mathit{z} \Seq q)) \sim (\recp{x,z} \Seq (p \Par q)).
\end{equation}
}
\end{itemize}


\end{proof}

\begin{lem}[Soundness of $AIP$]\label{lem:sound-aip}
The Approximation Induction Principle $(AIP)$ is sound for {\DNK} bisimilarity.
\end{lem}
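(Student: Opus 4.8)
The statement to establish is that $(AIP)$ is \emph{sound}, i.e.\ that whenever $\pi_n(p) \sim \pi_n(q)$ holds for every $n \in \mathbb{N}$, then $p \sim q$. The plan is to exhibit a concrete witnessing bisimulation built directly from the hypothesis. The natural candidate is
\[
R = \{(a, b) \mid \forall n \in \mathbb{N} : \pi_n(a) \sim \pi_n(b)\},
\]
which contains $(p,q)$ by assumption, so it suffices to prove that $R$ is a bisimulation. This mirrors the classical argument for $(AIP)$ (cf.\ the proof of Theorem~$2.5.8$ in~\cite{DBLP:books/daglib/0069083}), adapted to the {\DNK} configurations $(a,H_0,H_1)$.

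Before the main check I would record two auxiliary facts about projection. First, a composition/monotonicity property: for $n \le m$ one has $\pi_n(\pi_m(a)) \sim \pi_n(a)$, obtained by directly inspecting rule~(\ref{eq:pi}) (a projected term mimics exactly the transitions of its argument, up to the allotted depth, so truncating twice at depths $n \le m$ coincides with truncating once at $n$). Combined with the congruence property of Lemma~\ref{lm:congruence} (so that $\pi_n(-)$ preserves $\sim$), this yields the key monotonicity statement: if $\pi_m(a) \sim \pi_m(b)$ and $n \le m$, then $\pi_n(a) \sim \pi_n(b)$; equivalently, a failure $\pi_{m}(a) \not\sim \pi_{m}(b)$ persists at every depth $N \ge m$. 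Second, again by rule~(\ref{eq:pi}), every transition $(\pi_{n+1}(a), H_0, H_1) \xrightarrow{\gamma} (r, H'_0, H'_1)$ arises from a transition $(a, H_0, H_1) \xrightarrow{\gamma} (a', H'_0, H'_1)$ of $a$ with $r = \pi_n(a')$, and conversely; so projected moves are in exact correspondence with moves of the argument.

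To show $R$ is a bisimulation, I would take $(a,b) \in R$ and a transition $(a, H_0, H_1) \xrightarrow{\gamma} (a', H'_0, H'_1)$. For each $n$, rule~(\ref{eq:pi}) lifts it to $(\pi_{n+1}(a), H_0, H_1) \xrightarrow{\gamma} (\pi_n(a'), H'_0, H'_1)$; since $\pi_{n+1}(a) \sim \pi_{n+1}(b)$, there is a matching move of $\pi_{n+1}(b)$, which by the second fact must have the shape $(\pi_{n+1}(b), H_0, H_1) \xrightarrow{\gamma} (\pi_n(b_n), H'_0, H'_1)$ for some $\gamma$-successor $b_n$ of $b$ into the histories $(H'_0,H'_1)$, with $\pi_n(a') \sim \pi_n(b_n)$. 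The crucial step is to replace this depth-indexed family $(b_n)_n$ by a \emph{single} successor. Here I would invoke Branching Finiteness (Lemma~\ref{lm:fin-br}): the set $\{b' \mid (b, H_0, H_1) \xrightarrow{\gamma} (b', H'_0, H'_1)\}$ is finite, say $\{c_1, \ldots, c_k\}$, and each $b_n$ lies in it. If no $c_i$ satisfied $\pi_n(a') \sim \pi_n(c_i)$ for all $n$, then for each $i$ there would be a depth $m_i$ with $\pi_{m_i}(a') \not\sim \pi_{m_i}(c_i)$; by the persistence of failures this holds at every depth $\ge m_i$, so at $N = \max_i m_i$ (finite, as $k$ is finite) we would have $\pi_N(a') \not\sim \pi_N(c_i)$ for \emph{all} $i$, contradicting $\pi_N(a') \sim \pi_N(b_N)$ with $b_N \in \{c_1,\ldots,c_k\}$. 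Hence some $c_i$ satisfies $\pi_n(a') \sim \pi_n(c_i)$ for all $n$, i.e.\ $(a', c_i) \in R$, and $(b, H_0, H_1) \xrightarrow{\gamma} (c_i, H'_0, H'_1)$ is the required matching transition.

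By the symmetry of this construction $R$ is a bisimulation, and since $(p,q) \in R$ we conclude $p \sim q$, establishing soundness of $(AIP)$. I expect the uniformisation in the third paragraph to be the main obstacle: turning the family of depth-indexed witnesses $(b_n)_n$ into one uniform witness. This is precisely where guardedness is indispensable, since it is what guarantees finite branching (Lemma~\ref{lm:fin-br}) and thus lets the maximum $N = \max_i m_i$ be taken over a finite index set; without guardedness the argument collapses, as is well known for $(AIP)$ in the presence of infinitely branching behaviour.
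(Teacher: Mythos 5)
Your proposal is correct and follows essentially the same route as the paper: the same candidate relation $R = \{(t,t') \mid \forall n \in \mathbb{N} : \pi_n(t) \sim \pi_n(t')\}$, the same lifting of transitions through rule~(\ref{eq:pi}), and the same appeal to guardedness via Lemma~\ref{lm:fin-br} to uniformise the depth-indexed witnesses. The only cosmetic difference is that the paper packages the uniformisation as a decreasing chain $S_1 \supseteq S_2 \supseteq \ldots$ of nonempty finite sets of successors with nonempty intersection, whereas you take the maximum $N = \max_i m_i$ of finitely many failure depths and derive a contradiction --- two interchangeable formulations of the same finite-intersection argument, with your explicit auxiliary facts ($\pi_n(\pi_m(a)) \sim \pi_n(a)$ for $n \le m$, plus congruence from Lemma~\ref{lm:congruence}) playing the role of the paper's normal-form assumption in justifying persistence of projection (in)equivalences.
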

\begin{proof}
The proof is close to the one of Theorem~$2.5.8$ in~\cite{DBLP:books/daglib/0069083} and uses the branching finiteness property of {\DNK} policies in Lemma~\ref{lm:fin-br}. Assume two {\DNK} policies $p, p'$ such that
\begin{equation}\label{eq:aip-1}
    \forall n \in \mathbb{N} : \pi_n(p) \equiv \pi_n(p')
\end{equation}
By Lemma~\ref{lm:sound-noAIP} it follows that 
\begin{equation}\label{eq:aip-2}
    \forall n \in \mathbb{N} : \pi_n(p) \sim \pi_n(p')
\end{equation}
We want to prove that $p \sim p'$. The idea is to build a bisimulation relation $R$ such that $(p, p') \in R$. We define $R$ as follows:
\begin{equation}\label{eq:aip-3}
    R = \{(t, t') \mid \forall n \in \mathbb{N} : \pi_n(t) \sim \pi_n(t')\}
\end{equation}
Without loss of generality, assume that $p$ and $p'$ are in n.f. Assume $(p, p') \in R$ and
\begin{equation}\label{eq:aip-4}
    (p, H_0, H'_0) \xrightarrow{\gamma} (p_1, H_1, H'_1)
\end{equation}
Next, for all $n > 0$, define
\begin{equation}\label{eq:aip-5}
    S_n = \{p'_1 \mid (p', H_0, H'_0) \xrightarrow{\gamma} (p'_1, H_1, H'_1) \textnormal{ and }  \pi_n(p_1) \sim \pi_n(p'_1)\}
\end{equation}
The following hold:
\begin{enumerate}
    \item $S_1 \supseteq S_2 \ldots$ as if $\pi_{n+1}(p) \sim \pi_{n+1}(p')$ then $\pi_{n}(p_1) \sim \pi_{n}(p'_1)$. The latter is a straightforward result derived according to the definition of $\sim$ and the semantics of $\pi_n(-)$, under the assumption that $p, p'$ are in n.f.
    \item $S_n \not = \emptyset$ for all $n \geq 1$ since $\pi_{n+1}(p) \sim \pi_{n+1}(p')$ by~(\ref{eq:aip-2}) and $(p, H_0, H'_0) \xrightarrow{\gamma} (p_1, H_1, H'_1)$ according to~(\ref{eq:aip-4})
    \item $S_n$ is finite, for all $n \in \mathbb{N}$, as $p'$ is finitely branching according to Lemma~\ref{lm:fin-br} 
\end{enumerate}
Hence, the sequence $S_1, S_2, \ldots$ remains constant from some $n$ onward and $\cap_{n \geq 0} S_n \not = \emptyset$. Let $p'_1 \in \cap_{n \geq 0} S_n$. It holds that:
\begin{itemize}
    \item $ (p', H_0, H'_0) \xrightarrow{\gamma} (p'_1, H_1, H'_1)$
    \item $(p_1, p'_1) \in R$ by the definition of $R$ and $S_n$
\end{itemize}
Symmetrically to~(\ref{eq:aip-4}), assume $(p, p') \in R$ and $(p', H_0, H'_0) \xrightarrow{\gamma} (p'_1, H_1, H'_1)$. By following a similar reasoning, we can show that:
\begin{itemize}
    \item $ (p, H_0, H'_0) \xrightarrow{\gamma} (p_1, H_1, H'_1)$
    \item $(p_1, p'_1) \in R$ by the definition of $R$ and $S_n$
\end{itemize}
Hence, $R$ is a bisimulation relation and $p \sim p'$.
\end{proof}

\begin{theorem}[Soundness \& Completeness]\label{thm:sound-complete}
$\EDNK$ is sound and ground-complete for {\DNK} bisimilarity.
\end{theorem}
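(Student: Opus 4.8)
The plan is to split the proof of Theorem~\ref{thm:sound-complete} into its two halves, soundness and ground-completeness, and to assemble them from the lemmas already established. For soundness, I would simply combine Lemma~\ref{lm:sound-noAIP} and Lemma~\ref{lem:sound-aip}: the former gives soundness of every axiom in $\EDNK$ except $(AIP)$, and the latter establishes the soundness of $(AIP)$ itself. Since bisimilarity is a congruence by Lemma~\ref{lm:congruence}, soundness lifts from individual axioms to arbitrary equational derivations in the standard way, so $\EDNK \vdash p \equiv q$ implies $p \sim q$ for all {\DNK} policies $p, q$.

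For ground-completeness I would prove the converse: if $p \sim q$ then $\EDNK \vdash p \equiv q$, for guarded {\DNK} policies. The first move is to reduce to normal forms. By Lemma~\ref{lm:dnk-norm}, $\EDNK$ is normalising, so there exist normal forms $\hat p, \hat q$ (in the sense of Definition~\ref{def:dnk-nf}) with $\EDNK \vdash p \equiv \hat p$ and $\EDNK \vdash q \equiv \hat q$; by soundness $\hat p \sim p \sim q \sim \hat q$, so it suffices to prove that two bisimilar normal forms are provably equal. I would then argue by induction on the combined depth of $\hat p$ and $\hat q$. The key step is to show that every summand of $\hat p$ can be matched by a provably-equal summand of $\hat q$ and vice versa, after which axioms $(A2)$--$(A5)$ (associativity, commutativity, idempotence of $\oplus$, and neutrality of $\bot$) let us conclude $\hat p \equiv \hat q$. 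For each summand one performs a case analysis on the type of its leading action, which is one of $(\alpha_i \cdot \pi_i)$, $x?q$, $x!q$, or $\recp{x,q}$. A leading action of $\hat p$ induces a first-step transition; since $\hat p \sim \hat q$, the normal form $\hat q$ must afford a matching transition, which forces a corresponding summand in $\hat q$ with the same leading action and bisimilar continuation. The induction hypothesis applied to the (strictly smaller) continuations then yields provable equality of the continuations, and congruence closes the summand-matching argument.

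Within that case analysis, the handling of the data-carrying transitions deserves care. For a summand of the form $(\alpha_i \cdot \pi_i) ; d_i$, the first-step label under rule $\dedr{\cpolseqsucc}$ is the packet pair $(\sigma_{\alpha_i}, \sigma_{\pi_i})$ determined by the complete test and complete assignment, using the notation $\sigma_{\alpha}, \sigma_{\pi}$ of Definition~\ref{def:CTA}. Distinct complete tests and assignments give distinct observable labels, so a matching transition of $\hat q$ pins down a summand of $\hat q$ built from the same $\alpha_i \cdot \pi_i$; here I would also appeal to $\NetKATnoDup$ completeness via Lemma~\ref{lm:netkat-nf} to justify that the NetKAT-level guards are provably equal. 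For the communication summands $x?q$, $x!q$, and the reconfiguration constant $\recp{x,q}$, the labels $x?q$, $x!q$, and $\rec{x,q}$ are likewise directly observable and uniquely identify the matching summand.

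I expect the main obstacle to be the interaction between recursion and completeness: the summand-matching induction establishes equality only up to finite unfolding, so to conclude equality of genuinely recursive (guarded) policies I must lift finite agreement to full equality through the projection operator. Concretely, the plan is to prove $\pi_n(\hat p) \equiv \pi_n(\hat q)$ for every $n$ by the summand-matching argument above (now well-founded because each projection $\pi_n$ is finite-depth by the projection axioms $(\Pi_0)$--$(\Pi_{\oplus})$, using branching finiteness from Lemma~\ref{lm:fin-br}), and then to invoke $(AIP)$ to obtain $\hat p \equiv \hat q$. This is where the guardedness assumption is essential: it guarantees that every recursion is productive, so the projections capture the full behaviour, and it ensures the finite branching that makes the per-level matching terminate. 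Chaining these provable equalities gives $\EDNK \vdash p \equiv q$, completing ground-completeness and hence the theorem.
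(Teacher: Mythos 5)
Your soundness half coincides with the paper's: it derives soundness of $\EDNK$ by combining Lemma~\ref{lm:sound-noAIP} (soundness of all axioms except $(AIP)$) with Lemma~\ref{lem:sound-aip}, and the lifting to arbitrary derivations via congruence (Lemma~\ref{lm:congruence}) is the same standard step. For ground-completeness you and the paper share the same skeleton --- normalize both sides via Lemma~\ref{lm:dnk-norm}, then match summands by case analysis on the leading action, using the fact that the labels $(\sigma_{\alpha}, \sigma_{\pi})$, $x?q$, $x!q$ and $\rec{x,q}$ of a normal form uniquely identify its summands --- but you diverge on the key well-foundedness step. The paper argues by plain structural induction, applying the induction hypothesis directly to the bisimilar continuations $p' \sim q'$; this is unproblematic for recursion-free terms, where continuations are strict subterms, but is delicate for guarded recursive policies, since normal forms (Definition~\ref{def:dnk-nf}) have \emph{arbitrary} policies as continuations, so a recursion variable can reappear in its own continuation and the structural measure need not decrease. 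You instead prove $\pi_n(\hat p) \equiv \pi_n(\hat q)$ for every $n$ by induction on $n$ (well-founded thanks to the projection axioms $(\Pi_0)$--$(\Pi_{\oplus})$ and branching finiteness, Lemma~\ref{lm:fin-br}) and then close with $(AIP)$. That is the classical Baeten--Weijland route, and it is arguably the more rigorous reading of the paper's own machinery: it explains why $(AIP)$ and the guardedness assumption sit in the axiom system at all, whereas the paper's completeness argument never explicitly invokes $(AIP)$. In short, both proofs agree at the summand-matching level; your projection-plus-$(AIP)$ detour costs some extra bookkeeping but buys a well-founded induction that explicitly covers the recursive case the paper's structural induction glosses over.
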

\begin{proof}
Soundness:
if $\EDNK \vdash p \equiv q$ then $p \sim q$,
follows from Lemma~\ref{lm:sound-noAIP} and Lemma~\ref{lem:sound-aip}.

Completeness: if $p \sim q$ then $\EDNK \vdash p \equiv q$, is shown as follows. Without loss of generality, assume $p$ and $q$ are in n.f., according to Lemma~\ref{lm:dnk-norm}. We want to show that:
\begin{equation}\label{eq:comp-1}
\begin{array}{rcl}
    p & \equiv & q \oplus p\\
    q & \equiv & p \oplus q
\end{array}
\end{equation}
which, by ACI of $\oplus$ implies $p \equiv q$. This reduces to showing that every summand of $p$ is a summand of $q$ and vice-versa. We first argue that every summand of $p$ is a summand of $q$. The reasoning is by structural induction.\\
\emph{Base case.}
\begin{itemize}
    \item $p \triangleq \bot$. It holds by the hypothesis $p \sim q$ that $q \triangleq \bot$.
\end{itemize}
\emph{Induction step.}
\begin{itemize}
    \item $p \triangleq ((\alpha \cdot \pi);p') \oplus p''$. Then, $(p, \sigma_{\alpha} :: H, H' ) \xrightarrow{(\sigma_{\alpha}, \sigma_{\pi})} (p',  H, \sigma_{\pi}::H' )$ implies by the hypothesis $p \sim q$ that $(q, \sigma_{\alpha} :: H, H' ) \xrightarrow{(\sigma_{\alpha}, \sigma_{\pi})} (q',  H, \sigma_{\pi}::H' )$ and $p' \sim q'$. Recall that $q$ is in n.f.; hence, by the shape of the semantic rules in Figure~\ref{fig::sem-queue-sem} it holds that $q \triangleq ((\alpha \cdot \pi);q') \oplus q''$. By the induction hypothesis, it holds that $p' \equiv q'$ hence, $(\alpha \cdot \pi);p'$ is a summand of $q$ as well.
    \item Cases $p \triangleq (c;p') \oplus p''$ with $c ::= x?n \mid x!n \mid \recp{x,n}$ follow in a similar fashion.
\end{itemize}
Hence, $p \equiv q \oplus p$ holds. The symmetric case $q \equiv p \oplus q$ follows the same reasoning.
\end{proof}
\end{ARXIV}
\section{A Framework for Safety}\label{sec:safety}

\begin{ICALP}
In this section we provide a language for specifying safety properties of {\DNK} networks, together with a procedure for reasoning about safety in an equational fashion. Intuitively, safety properties enable specifying the absence of undesired network behaviours.

\begin{defn}[Safety Properties - Syntax]\label{def:safety-syntax-ext}
Let $\mathcal{A}$ be an alphabet over letters of shape $\alpha \cdot \pi$ and $\recp{x,p}$, with $\alpha$ and $\pi$ ranging over complete tests and assignments, and $\recp{x,p}$ ranging over reconfiguration actions.
{Safety properties} are defined in the following fashion:
\[
\begin{array}{rcl}
     \mathit{act}  & ::= & \alpha \cdot \pi \mid \recp{x, p} \mid \neg \mathit{act} ~(\alpha \cdot \pi,\, \recp{x, p} \in \mathcal{A}) \\
     \mathit{regexp} & ::= & \textit{true} \mid \mathit{act} \mid \mathit{regexp} + \mathit{regexp} \mid \mathit{regexp} \cdot \mathit{regexp}
     \mid \\ && (\mathit{regexp})^n~~(\textnormal{with } n \geq 1)
     \\
     \mathit{prop} & ::= & [\mathit{regexp}] \mathit{false}
\end{array}
\]
\end{defn}

 A safety property specification $\mathit{prop}$ is satisfied whenever the behaviour encoded by $\textit{regexp}$ should not be observed within the network. Regular expressions $\textit{regexp}$ are defined with respect to actions $\mathit{act}$: a flow of shape $\alpha \cdot \pi$ is the observable behaviour of a (${\NetKATnoDup}$) policy transforming a packet encoded by $\alpha$ into $\alpha_{\pi}$, whereas $\recp{x, p}$ corresponds to a reconfiguration step in a network. Recursively, a sum of regular expressions $\mathit{regexp}_1 + \mathit{regexp}_2$ encodes the union of the two behaviours, a concatenation of regular expressions $\mathit{regexp}_1 \cdot \mathit{regexp}_2$ encodes the behaviour of $\mathit{regexp}_1$ followed by the behaviour of $\mathit{regexp}_2$. A property of shape $[\neg a] \textit{false}$, with $a \in \mathcal{A}$, states that the system cannot do anything apart from $a$ as a first step.
The property $[\textit{true}]\textit{false}$ states that no action can be observed in the network, whereas
$[\mathit{r}^n]\textit{false}$ encodes the repeated application of $r$ for $n$ times.

The full semantics of safety property is provided in~\cite{arxiv}. In short, the semantic map
$
\llbracket - \rrbracket : \mathit{Prop} \rightarrow \mathit{\DNK}
$
associates to each property $[\textit{regexp}]\textit{false}$ as in Definition~\ref{def:safety-syntax-ext}, a {\DNK} process that can only execute sequences of actions outside the scope of $\textit{regexp}$.
Typically, safety analysis is reduced to reachability analysis. In our context, a safety property is violated whenever the network system under analysis displays a (finite) execution that is not in the behaviour of the property. Thus, the aforementioned semantic map is based on traces (or words in $\mathcal{A}^*$) and is not sensitive to branching.
This paves the way to reasoning about the satisfiability of safety properties in an equational fashion.

\begin{defn}[Safe Network Systems]\label{def:safety}
Let $\EDNKtr$ stand for the equational axioms in Figure~\ref{fig:axiom-ccp}, including the additional axiom that enables switching from the context of bisimilarity to trace equivalence of {\DNK} policies, namely:
$
    p ; (q \oplus r) \equiv p ; q \oplus p ; r.
$
Assume a specification given as the safety formula $s$ and a network system implemented as the {\DNK} policy $i$. We say that the network is \emph{safe} whenever the following holds:
$
    \EDNKtr \vdash \llbracket s \rrbracket \oplus i \equiv \llbracket s \rrbracket.
$
In words: checking whether $i$ satisfies $s$ reduces to checking whether the trace behaviour of $i$ is included into that of $s$. 
\end{defn}

For an example, consider the firewall in Figure~\ref{fig:stateful1} and the corresponding encoding in Figure~\ref{fig:stateful2}. Recall that reaching $\textit{int}$ from $\textit{ext}$ without observing a secure connection request is a faulty behaviour. This entails the safety formula $s_n$ defined as $[(\neg \recp{\textit{secConReq},1})^n \cdot (\alpha \cdot \pi)]\textit{false}$, for $n \in \mathbb{N}$, $\alpha \triangleq (\mathit{port}=\textit{ext})$ and $\pi \triangleq (\mathit{port} \leftarrow \textit{int})$.
Therefore, checking whether the network is safe reduces to checking, for all $n \in \mathbb{N}$:
$
    \EDNKtr \vdash \llbracket s_n \rrbracket \oplus  \mathit{Init} \equiv \llbracket s_n \rrbracket
$.
Note that, for a fixed $n$, the verification procedure resembles bounded model checking~\cite{DBLP:books/daglib/0020348}.

\end{ICALP}

\begin{ARXIV}
In this section we provide a language for specifying safety properties of {\DNK} networks, together with a procedure for reasoning about safety in an equational fashion. Intuitively, safety properties enable specifying undesired network behaviours.

\begin{defn}[Safety Properties - Syntax]\label{def:safety-syntax}
Let $\mathcal{A}$ be an alphabet over letters of shape $\alpha \cdot \pi$ and $\rec{x,p}$, with $\alpha$ and $\pi$ ranging over complete tests and assignments as in Definition~\ref{def:CTA}, and $\rec{x,p}$ ranging over reconfiguration actions.
A \emph{safety property} $\mathit{prop}$ is defined as:
\[
\begin{array}{rcl}
     \mathit{act} & ::= & \alpha \cdot \pi \mid \recp{x, p} ~~~~~~~(\textnormal{where } \alpha \cdot \pi,\, \recp{x, p} \in \mathcal{A}) \\
     \mathit{regexp} & ::= & \mathit{act} \mid \mathit{regexp} + \mathit{regexp} \mid \mathit{regexp} \cdot \mathit{regexp}
     \\
     \mathit{prop} & ::= & [\mathit{regexp}] \mathit{false}
\end{array}
\]
\end{defn}

The intuition behind Definition~\ref{def:safety-syntax} is as follows. A safety property specification $\mathit{prop}$ is satisfied whenever the behaviour encoded by $\textit{regexp}$ cannot be observed within the network. Regular expressions $\textit{regexp}$ are defined with respect to actions $\mathit{act}$: a flow of shape $\alpha \cdot \pi$ is the observable behaviour of a (${\NetKATnoDup}$) policy transforming a packet encoded by $\alpha$ into $\alpha_{\pi}$, whereas $\recp{x, p}$ corresponds to a reconfiguration step in a network. Recursively, a sum of regular expressions $\mathit{regexp}_1 + \mathit{regexp}_2$ encodes the union of the two behaviours, a concatenation of regular expressions $\mathit{regexp}_1 \cdot \mathit{regexp}_2$ encodes the behaviour of $\mathit{regexp}_1$ followed by the behaviour of $\mathit{regexp}_2$.

\begin{defn}
[Head Normal Forms for Safety]\label{def:hnf-safe}
Let $\mathcal{A}$ be an alphabet over letters of shape $\alpha \cdot \pi$ and $\rec{x,p}$, with $\alpha$ and $\pi$ ranging over complete tests and assignments as in Definition~\ref{def:CTA}, and $\rec{x,p}$ ranging over reconfiguration actions. We write $w, w'$ for (non-empty) words with letters in $\mathcal{A}$ (i.e., $w, w' \in \mathcal{A}^*$) and $\mid w \mid$ for the length of $w$.
We write $w' \preceq w$ whenever $w'$ is a prefix of $w$ (including $w$).

Let $r$ be a regular expression ($\textit{regexp}$) as in Definition~\ref{def:safety-syntax}.
We call \emph{head normal form} of $r$, denoted by $\hnf{r}$, the sum of words obtained by distributing $\cdot$ over $+$ in $r$, in the standard fashion:
    \[
    \begin{array}{rcl}
      \hnf{a} & \triangleq & a ~~~(a \in \mathcal{A})\\
      \hnf{w} & \triangleq & w ~~~(w \in \mathcal{A^*})\\
      \hnf{r_1 + r_2} & \triangleq & \hnf{r_1} + \hnf{r_2}\\
      \hnf{r_1 \cdot (r_2 + r_3)} & \triangleq & \hnf{r_1 \cdot r_2} +  \hnf{r_1 \cdot r_3}\\
      \hnf{(r_1 + r_2) \cdot r_3} & \triangleq & \hnf{r_1 \cdot r_3} + \hnf{r_1 \cdot r_3}\\
      \hnf{r' \cdot (r_1 + r_2) \cdot r''} & \triangleq & \hnf{r' \cdot r_1 \cdot r''} + \hnf{r' \cdot r_2 \cdot r''}
    \end{array}
    \]
\end{defn}

Next, we give the formal semantics of safety properties.

\begin{defn}[Safety Properties - Semantics]\label{def:safety-semantics}
Let $\mathit{Prop}$ stand for the set of all properties as in Definition~\ref{def:safety-syntax}. The semantic map
$
\llbracket - \rrbracket : \mathit{Prop} \rightarrow \mathit{\DNK}
$
associates to each safety property in $\mathit{Prop}$ a $\mathit{\DNK}$ expression as follows.

Let $\Theta$ be the {\DNK} policy (in normal form) encoding all possible behaviours over $\mathcal{A}$: 
\[
\Theta \triangleq \Sigma^{\oplus}_{\alpha \cdot \pi \in \mathcal{A}}(\alpha \cdot \pi ; \bot \oplus \alpha \cdot \pi ; \Theta) \,\oplus\, \Sigma^{\oplus}_{\recp{x,p} \in \mathcal{A}}(\recp{x,p} ; \bot \oplus \recp{x,p} ; \Theta)
\]
Then:
\[
\begin{array}{lrcl}
     (c_{1}) & \llbracket\, [\Sigma_{\footnotesize \begin{array}{l} i \in I \\ w_i \in A^* \end{array}} \!\!\!\!w_i]\mathit{false}\,\rrbracket & \triangleq &
     \Sigma^{\oplus}_{\footnotesize \begin{array}{l} w \in \mathcal{A}^* \\ \mid w \mid < M \\ \forall i \in I : w_i \not \preceq w  \end{array}} \hspace{-30pt}\overline{w} ; \bot
     ~~~\oplus~~~
     \Sigma^{\oplus}_{\footnotesize \begin{array}{l} w \in \mathcal{A}^* \\ \mid w \mid = M \\ \forall i \in I : w_i \not \preceq w \end{array}} \hspace{-30pt}(\overline{w} ; \bot \,\oplus\, \overline{w};\Theta)\\\\[1ex]
     
     (c_2) & \llbracket [r] \textit{false} \rrbracket  & \triangleq & \llbracket [\hnf{r}] \textit{false} \rrbracket ~~~~\textnormal{[otherwise]}

\end{array}
\]
such that $M$ is the length of the longest word $w_i$, with $i \in I$, and $\overline{w}$ is a {\DNK}-compatible term obtained from $w$ where all letters have been separated by $;$ and inductively defined in the obvious way:
    \[
    \begin{array}{rcl}
         \overline{a} & \triangleq & a ~~~~(a \in \mathcal{A})  \\
         \overline{a \cdot w} & \triangleq & a ; \overline{w} ~~~~(a \in \mathcal{A}, w \in \mathcal{A^*}) 
    \end{array}
    \]
\end{defn}


The semantic map $
\llbracket - \rrbracket : \mathit{Prop} \rightarrow \mathit{\DNK}
$ is defined in accordance with the intuition provided in the beginning of this section.
For instance, as shown in $(c_1)$, if none of the sequences of steps $w_i$ can be observed in the system, then the associated {\DNK} term
prevents the immediate execution of all $w_i$.
%
Typically, safety analysis is reduced to reachability analysis. Intuitively, in our context, a safety property is violated whenever the network system under analysis displays a (finite) execution that is not in the behaviour of the property. Thus, the semantic map in Definition~\ref{def:safety-semantics} is based on traces (or words in $\mathcal{A}^*$) and is not sensitive to branching; see the use of head normal forms in $(c_2)$.

With these ingredients at hand, we can reason about the satisfiability of safety properties in an equational fashion.

\begin{defn}[$\EDNKtr$]\label{def:EDNKtr}
Let $\EDNKtr$ stand for the equational axioms in Figure~\ref{fig:axiom-ccp}, including the additional axiom that enables switching from the context of bisimilarity to trace equivalence of {\DNK} policies, namely:
\begin{equation}\label{eq:BisimToTr}
    p ; (q \oplus r) \equiv p ; q \oplus p ; r~~~~(A_{16})
\end{equation}
\end{defn}

\begin{defn}[Safe Network Systems]\label{def:safety}
Assume a specification given as the safety formula $s$ and a network system implemented as the {\DNK} policy $i$. We say that the network is \emph{safe} whenever the following holds:
\begin{equation}\label{eq:check-safety}
    \EDNKtr \vdash \llbracket s \rrbracket \oplus i \equiv \llbracket s \rrbracket
\end{equation}
In words: checking whether $i$ satisfies $s$ reduces to checking whether the trace behaviour of $i$ is included into that of $s$. 
\end{defn}

\subsection{Sugars for Safety}\label{sec:sugar-safe}
In this section we introduce a version of safety properties extended with negated actions ($\neg (\alpha \cdot \pi)$ and, respectively, $\neg \recp{x,p}$), the $\textit{true}$ construct and repetitions ($r^n$), equally expressive but enabling more concise property specifications.

\begin{defn}[Safety Properties - Extended Syntax]\label{def:safety-syntax-ext}
Let $\mathcal{A}$ be an alphabet over letters of shape $\alpha \cdot \pi$ and $\recp{x,p}$, with $\alpha$ and $\pi$ ranging over complete tests and assignments as in Definition~\ref{def:CTA}, and $\recp{x,p}$ ranging over reconfiguration actions.
{Safety properties} are extended in the following fashion:
\[
\begin{array}{rcl}
     \mathit{act}_e & ::= & \alpha \cdot \pi \mid \recp{x, p} \mid \neg \mathit{act}_e ~~~~~~~(\textnormal{with } \alpha \cdot \pi,\, \recp{x, p} \in \mathcal{A}) \\
     \mathit{regexp}_e & ::= & \textit{true} \mid \mathit{act}_e \mid \mathit{regexp}_e + \mathit{regexp}_e \mid \mathit{regexp}_e \cdot \mathit{regexp}_e
     \mid (\mathit{regexp}_e)^n~~(\textnormal{with } n \geq 1)
     \\
     \mathit{prop}_e & ::= & [\mathit{regexp}_e] \mathit{false}
\end{array}
\]
\end{defn}

Intuitively, 
a property of shape $[\neg a] \textit{false}$, with $a \in \mathcal{A}$, states that the system cannot do anything apart from $a$ as a first step.
The property $[\textit{true}]\textit{false}$ states that no action can be observed in the network, whereas
$[\mathit{r}^n]\textit{false}$ encodes the repeated application of $r$ for $n$ times.

Let $\mathit{Reg}$ and, respectively, $\mathit{Reg}_e$ denote the set of regular expressions $\mathit{regexp}$ in Definition~\ref{def:safety-syntax} and, respectively, the set of regular expressions $\mathit{regexp}_e$ in Definition~\ref{def:safety-syntax-ext}. The ``desugaring'' function defining the regular equivalent of the extended safety properties is defined as follows:
\[
\begin{array}{rcl}
     \multicolumn{3}{c}{ds : \mathit{Reg}_e \rightarrow \mathit{Reg}}\\
     ds(\textit{true}) & \triangleq & \Sigma_{a \in \mathcal{A}} a\\
     ds(\neg (\alpha \cdot \pi)) & \triangleq &  \Sigma_{\footnotesize \begin{array}{c} \alpha_i \cdot \pi_i \in \mathcal{A} \\ \alpha_i \not= \alpha \\ or \\ \pi_i \not = \pi \end{array}}\alpha_i \cdot \pi_i\\
     ds(\neg \recp{x,p}) & \triangleq & \Sigma_{\footnotesize \begin{array}{c} \recp{y,q} \in \mathcal{A} \\ \recp{y,q} \not= \recp{x,p} \end{array}}\!\!\!\!\recp{y,q}\\
     ds(r^n) & \triangleq & ds(\underbrace{r \cdot r \cdot \ldots \cdot r}_\text{$n$ times}) \\
     ds(r_1 \cdot r_2) & \triangleq & ds(r_1) \cdot ds(r_2)~~~\textnormal{if } r_1 \cdot r_2 \not \in \textit{Reg}\\
     ds(r_1 + r_2) & \triangleq & ds(r_1) + ds(r_2) ~~\textnormal{if } r_1 + r_2 \not \in \textit{Reg}\\
     ds(r) & \triangleq & r ~~~\textnormal{[otherwise]}
\end{array}
\]
The (overloaded) semantic map $
\llbracket - \rrbracket : \mathit{Prop}_e \rightarrow \mathit{\DNK}
$ is defined as expected:
\[
\llbracket\, [r_e]\textit{false} \,\rrbracket \triangleq \llbracket\, [ds(r_e)]\textit{false} \,\rrbracket
\]

For an example, consider the distributed controllers in Figure~\ref{fig:distributedController1} and the corresponding encoding in Figure~\ref{fig:independent2}. Recall that reaching $H4$ from $S2$ is considered a breach in the system. This entails the safety formulae $s_n$ defined as $[(true)^n \cdot (\alpha \cdot \pi)]\textit{false}$, for $n \in \mathbb{N}$, $\alpha \triangleq (\mathit{port}=2)$ and $\pi \triangleq (\mathit{port} \leftarrow 15)$.
In words: no matter what sequence of events (of length $n$) is executed, $\alpha \cdot \pi$ cannot happen as the next step.
Therefore, checking whether the network is safe reduces to checking, for all $n \in \mathbb{N}$:
\begin{equation}\label{eq:eg-breach}
    \EDNKtr \vdash \llbracket s_n \rrbracket \oplus  \mathit{SDN} \equiv \llbracket s_n \rrbracket
\end{equation}

Note that, for a fixed $n$, the verification procedure resembles bounded model checking~\cite{DBLP:books/daglib/0020348}.

\end{ARXIV}
\section{Implementation}\label{sec:implementation}

In Section~\ref{sec:safety} we introduced a notion of safety for {\DNK} and provided a mechanism for reasoning about safety in an equational fashion, by exploiting {\DNK} trace semantics. 
To this end, we search for traces that violate the safety property, i.e.,  
we turn the equational reasoning about safety into checking for reachability  properties of shape $s \triangleq \langle \textit{regexp}\rangle \textit{true}$; for an implementation $i$, this is achieved by checking the following equation using our axiomatization: $\EDNKtr \vdash i \oplus \llbracket s \rrbracket \equiv i$.


We developed a prototype tool, called {\tool}\footnote{\url{https://github.com/hcantunc/DyNetiKAT}}, 
based on Maude~\cite{DBLP:conf/maude/ClavelDELMMT07n}, NetKAT decision procedure~\cite{netkat-decision} and Python~\cite{DBLP:conf/usenix/Rossum07}, for checking the aforementioned equation. We build upon the reachability checking method in NetKAT~\cite{netkat}. For a reminder: $\textit{out}$ is reachable from $\textit{in}$, in the context of a switch policy $p$ and topology $t$, whenever the following property is satisfied: $in \cdot (p \cdot t)^* \cdot out \not \equiv \drop$ (and vice-versa). Furthermore, we also consider waypointing properties. An intermediate point $w$ between $in$ and $out$ is considered a waypoint from $in$ to $out$ if all the packets from $in$ to $out$ go through $w$. Such a property is satisfied if the following equivalence holds~\cite{netkat}:
\begin{alignat}{2}
in \cdot (p \cdot t)^* \cdot out + in \cdot (\neg out \cdot p \cdot t)^* \cdot w \cdot (\neg in \cdot p \cdot t)^* \cdot out \notag \\\equiv in \cdot (\neg out \cdot p \cdot t)^* \cdot w \cdot (\neg in \cdot p \cdot t)^* \cdot out\notag
\end{alignat}
The inputs to our tool are a {\DNK} program $p$, an input predicate $in$, an output predicate $out$, a waypoint predicate $w$ for waypointing properties, and the equivalences that describe the desired properties. The checking of properties is achieved by analyzing the step by step behaviour of {\DNK} terms in normal form via a set of operators $head(D)$, and $tail(D, R)$, where $D$ is a {\DNK} term and $R$ is a set of terms of shape $\recp{X, N}$. Intuitively, the operator $head(D)$ returns a NetKAT policy which represents the current configuration in the input $D$ and the operator $tail(D, R)$ returns a {\DNK} policy which is the sum of {\DNK} policies inside $D$ that appear after the reconfiguration events in $R$.
\begin{ARXIV}
The operators $head$ and $tail$ are defined as follows:
\begin{alignat}{4}
    head(\bot) &= \drop{} & tail(\bot, R) &= \bot \\
    head(N ; D) &= N + head(D) & tail(N ; D, R) &= tail(D, R) \\
    head(D \oplus Q) &= head(D) + head(Q) & tail(D \oplus Q, R) &= tail(D, R) \oplus tail(Q, R) \\
	head(\recp{X, N} ; D) &= \drop{} & tail(\recp{X, N} ; D, R) &= D \oplus tail(D, R)~\textnormal{if}~ \recp{X, Z}\in R \\
	& &\quad\quad
	tail(\recp{X, N} ; D, R) &= \bot ~\textnormal{if}~ \recp{X, N} \not\in R
\end{alignat}
Note that we assume the input $D$ to the operators $head$ and $tail$ do not contain terms of shape $x?q$ and $x!q$. This can be ensured by always applying the restriction operator $\delta$ on $D$.
\end{ARXIV} 
After extracting the desired configurations by using the $head$ and $tail$ operators, we then utilize the NetKAT decision procedure for checking properties. For an example, consider the stateful firewall in Figure~\ref{fig:stateful1} and the corresponding encoding in Figure~\ref{fig:stateful2}. Assume that we would like to check if the packets at port $ext$  can arrive at port $int$ only after a proper reconfiguration on the channel $secConReq$. For this example, the analysis reduces to defining
the input predicate $in \triangleq port = ext$, the output predicate $out \triangleq port=int$, $R=\{\recp{secConReq, \packcopy{}}\}$, and the following properties:
\begin{alignat}{2}
in \cdot head(Init) \cdot out \equiv \drop{}\label{eq::sf-3}\\
in \cdot head(tail(Init, R)) \cdot out \not \equiv \drop{}\label{eq::sf-4}
\end{alignat}
Intuitively, the equivalence in~(\ref{eq::sf-3}) expresses that packets at port $ext$ are not able to reach to port $int$ in the initial configuration and~(\ref{eq::sf-4}) expresses that the configuration after the synchronization on the channel $secConReq$ allows this flow.

\begin{figure}[t]
    \centering
    \begin{tikzpicture}[>=stealth,
	shorten >=1pt,auto,node distance=1.8cm,switch node/.style={circle,draw}, host node/.style={scale=0.5,rectangle,draw}]
	\node[host node] (a1) {$A_1$};
	\node[host node] (a2) [right of=a1] {$A_2$};
	\node[host node] (a3) [right of=a2] {$A_3$};
	\node[host node] (a4) [right of=a3] {$A_4$};
	\node[host node] (a5) [right of=a4] {$A_5$};
	\node[host node] (a6) [right of=a5] {$A_6$};
	\node[host node] (a7) [right of=a6] {$A_7$};
	\node[host node] (a8) [right of=a7] {$A_8$};
	
	\node[host node] (c1) [above right=1cm and 0.5cm of a1] {$C_1$};
	\node[host node] (c2) [right=1.2cm of c1] {$C_2$};
	\node[host node] (c3) [right=1.2cm of c2] {$C_3$};
	\node[host node] (c4) [right=1.2cm of c3] {$C_4$};
	
	\node[host node] (s1) [below =0.85cm of a1] {$T_1$};
	\node[host node] (s2) [right of=s1] {$T_2$};
	\node[host node] (s3) [right of=s2] {$T_3$};
	\node[host node] (s4) [right of=s3] {$T_4$};
	
	\node[host node] (s5) [below =0.85cm of a5] {$T_5$};
	\node[host node] (s6) [right of=s5] {$T_6$};
	\node[host node] (s7) [right of=s6] {$T_7$};
	\node[host node] (s8) [right of=s7] {$T_8$};
	
	\node[] at (-1.2,0) {\small{Aggregation}};
	\node[] at (-0.77,1.25) {\small{Core}};
	\node[] at (-1.2,-1.15) {\small{Top-of-Rack}};
	\node[] at (0.5,-1.45) {\scriptsize{Pod 1}};
	\node[] at (2.3,-1.45) {\scriptsize{Pod 2}};
	\node[] at (4.1,-1.45) {\scriptsize{Pod 3}};
	\node[] at (5.9,-1.45) {\scriptsize{Pod 4}};
		
	\path[-]
	(s1) edge [pos=0.1] node [left] {} (a1)
	(s1) edge [pos=0.01] node [xshift=0.05cm, right] {} (a2)
	(s2) edge [pos=0.01] node [xshift=-0.05cm, left] {} (a1)
	(s2) edge [pos=0.1] node [right] {} (a2)
	
	(s3) edge [pos=0.1] node [left] {} (a3)
	(s3) edge [pos=0.01] node [xshift=0.05cm, right] {} (a4)
	(s4) edge [pos=0.01] node [xshift=-0.1cm, left] {} (a3)
	(s4) edge [pos=0.1] node [right] {} (a4)
	
	(s5) edge [pos=0.1] node [left] {} (a5)
	(s5) edge [pos=0.01] node [xshift=0.05cm, right] {} (a6)
	(s6) edge [pos=0.01] node [xshift=-0.05cm, left] {} (a5)
	(s6) edge [pos=0.1] node [right] {} (a6)
	
	(s7) edge [pos=0.1] node [left] {} (a7)
	(s7) edge [pos=0.01] node [xshift=0.05cm, right] {} (a8)
	(s8) edge [pos=0.01] node [xshift=-0.05cm, left] {} (a7)
	(s8) edge [pos=0.1] node [right] {} (a8)
	
	(a1) edge [pos=0.1] node [left] {} (s1)
	(a1) edge [pos=0.01] node [xshift=0.05cm, right] {} (s2)
	(a2) edge [pos=0.01] node [xshift=-0.05cm, left] {} (s1)
	(a2) edge [pos=0.1] node [right] {} (s2)
	
	(a3) edge [pos=0.1] node [left] {} (s3)
	(a3) edge [pos=0.01] node [xshift=0.05cm, right] {} (s4)
	(a4) edge [pos=0.01] node [xshift=-0.05cm, left] {} (s3)
	(a4) edge [pos=0.1] node [right] {} (s4)
		
	(c1) edge [pos=0.01] node [xshift=-0.05cm, left] {} (a1)	
	(c1) edge [pos=0.0] node [xshift=0.05cm, right] {} (a3)
	(c1) edge [pos=0.01] node [xshift=-0.05cm, left] {} (a5)
	(c1) edge [pos=0.01] node [xshift=-0.05cm, left] {} (a7)
	
	(c2) edge [pos=0.0] node [xshift=-0.05cm, left] {} (a1)
	(c2) edge [pos=0.0] node [xshift=0.05cm, right] {} (a3)
	(c2) edge [pos=0.0] node [xshift=-0.05cm, left] {} (a5)
	(c2) edge [pos=0.0] node [xshift=0.05cm, right] {} (a7)
	
	(c3) edge [pos=0.0] node [xshift=-0.05cm, left] {} (a2)
	(c3) edge [pos=0.0] node [xshift=0.05cm, right] {} (a4)
	(c3) edge [pos=0.0] node [xshift=-0.05cm, left] {} (a6)
	(c3) edge [pos=0.0] node [xshift=0.05cm, right] {} (a8)
	
	(c4) edge [pos=0.0] node [xshift=-0.05cm, left] {} (a2)
	(c4) edge [pos=0.0] node [xshift=0.05cm, right] {} (a4)
	(c4) edge [pos=0.0] node [xshift=-0.05cm, left] {} (a6)
	(c4) edge [pos=0.0] node [xshift=0.05cm, right] {} (a8);
	\end{tikzpicture}

  	\caption{A FatTree Topology}
	\label{fig:fattree}
\end{figure}

We performed experiments on the FatTree~\cite{fattree} topologies to evaluate the performance of our implementation. FatTrees are hierarchical topologies which are very commonly used in data centers. Figure~\ref{fig:fattree} illustrates a FatTree with $3$ levels: core, aggregation and top-of-rack (ToR). The switches at each level contain a number of redundant links to the switches at the next upper level. The groups of ToR switches and their corresponding aggregation switches are called pods. For our experiments, we generated $6$ FatTrees that grow in size and achieve a maximum size of $1344$ switches. For these networks we computed a shortest path forwarding policy between all pairs of ToR switches. The number of switches in the ToR layer is set to $k^3/4$ where $k$ is the number of pods in the network.
We check certain dynamic properties on these networks and assess the time performance of our tool.  
We consider a scenario which involves two ToR switches $T_a$ and $T_b$ that reside in different pods. Initially, all the packets from $T_a$ to $T_b$ traverse through a firewall  $A_x$ in the aggregation layer which filters SSH packets. The controller then decides to shift the firewall from $A_x$ to another switch $A_{x'}$ in the aggregation layer. For this purpose, the controller updates the corresponding aggregation and core layer switches which results in a total of $4$ updates. The properties that we check in this scenario are as follows: (i) At any point while the controller is performing the updates, non-SSH packets from $T_a$ can always reach to $T_b$. (ii) At any point while the controller is performing the updates, SSH packets from $T_a$ can never reach to $T_b$.
(iii) After all the updates are performed, $A_{x'}$ is a waypoint between $T_a$ and $T_b$. We conducted the  experiments on a computer running Ubuntu 20.04 LTS with 16 core 2.4GHz Intel i9-9980HK processor and 64 GB RAM. The results of these experiments are displayed in Figure~\ref{fig:fattree_results}. We report the preprocessing time, the time taken for checking properties (i), (ii), and (iii) individually (referred to as Reachability-I, Reachability-II, and Waypointing, respectively), and also time taken to check all the properties in parallel (referred to as All Properties). The reported times are the average of 10 runs. The results indicate that preprocessing step is a 
non-negligible factor that contributes to overall time. However, preprocessing is independent of the property that is being checked and this procedure only needs to be done once for a given network. After the preprocessing step, the individual properties can be checked in less than $2$ seconds for networks with less than $100$ switches. For larger networks with sizes up to $931$ and $1344$ switches, the individual properties can be checked in a maximum of $5$ minutes and $11$ minutes, respectively. It can be observed that checking for the property (iii) can take more than twice as much time as checking for the properties (i) and (ii). In the experiments where we check all the properties in parallel, we allocated one thread for each property. In this setting, checking all the properties in parallel introduced $24\%$ overhead on average. An important observation in the results of these experiments is that after the preprocessing is performed, on average $87\%$ of the running times are spent in the NetKAT decision procedure and this step becomes the bottleneck in analyzing larger networks.

In order to be able to compare our technique with another verification method, we also aimed to perform an analysis based on explicit state model checking. For this purpose, we devised an operational semantics for NetKAT and implemented it in Maude along with the operational semantics of \DNK{}. However, this method immediately failed at scaling even for small networks, hence, we did not perform further analysis by using this method.

\begin{figure}[h]
    \centering
    \includegraphics[scale=0.088]{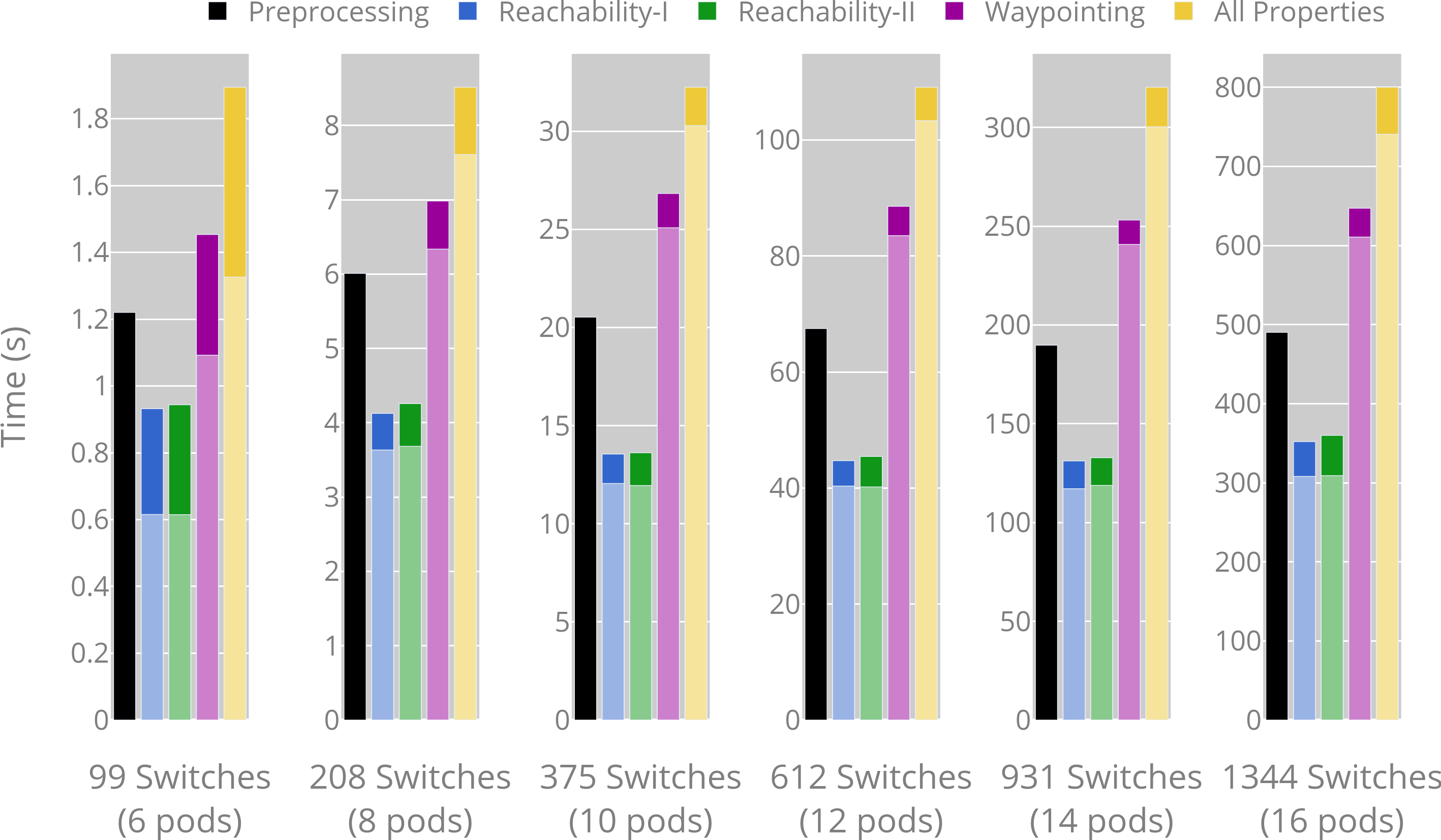}
    \caption{Results of FatTree experiments. Light colored areas indicate the time spent in the NetKAT decision procedure and solid colored areas indicate the time spent in our equational reasoning framework.}
    \label{fig:fattree_results}
\end{figure}

\section{Conclusions}\label{sec:conclusions}

We developed a language, called {\DNK}  for modelling and reasoning about dynamic reconfigurations in Software Defined Networks. Our language builds upon the concepts, syntax, and semantics of NetKAT and hence, provides a modular extension and makes it possible to reuse the theory and tools of NetKAT. 
We define a formal semantics for our language and provide a sound and ground-complete axiomatization. We exploit our axiomatization to analyse reachability properties of dynamic networks and show that our approach is indeed scalable to  networks with hundreds of switches.

Our language builds upon the assumption that control plane updates interleave with data plane packet processing in such a way that each data plane packet sees one set of flow tables throughout their flight in the network. This assumption is inspired by the  framework put forward by Reitblatt et al.\  \cite{DBLP:conf/sigcomm/ReitblattFRSW12} and is motivated by the requirement to design a modular extension on top of NetKAT. However, we have experimented with a much smaller-stepped semantics in which the control plane updates can have a finer interleaving with in-flight packet moves. This alternative language breaks the hierarchy with NetKAT and a naive treatment of this alternative semantics will lead to much larger state-spaces. We would like to investigate this small-step semantics and efficient analysis techniques for it further.

\section*{Acknowledgments.}
The authors would like to thank Alexandra Silva and Tobias Kapp\'e for their useful insight into the NetKAT framework.

\begin{ICALP}
\newpage
\end{ICALP}

\begin{ARXIV}
\newpage
\appendix
\begin{ARXIV}

\subsection{Soundness Proofs}\label{sec::soundness}

\begin{itemize}
	\item{
	Axiom under consideration:
	\begin{equation}
	\drop{} \Seq p \equiv \bot \quad (A0)
	\end{equation}
	for $p \in {\DNK}$. According to the semantic rules of {\DNK}, the derivations of the term $\drop{} \Seq p$ are as follows:
	\begin{enumerate}[leftmargin=.5in,label=(\alph*)]
		\item{\label{enum::a0-a}
			\[
			\begin{array}{r@{}l@{}c@{}r@{}l@{}c@{}r@{}l@{}}
			\textnormal{For all } \sigma' \in \llbracket \drop{} \rrbracket(\sigma \!\!::\!\! \langle \rangle): ~ \dedr{\cpolseqsucc{}} &\sosrule{}{ ( \drop{} \Seq p, \sigma :: H, H')  \trans{(\sigma, \sigma')} ( p, H , \sigma' :: H')}\\
			
			\end{array}
			\]
		}
	\end{enumerate}
	
	However, observe that $\llbracket \drop{} \rrbracket(\sigma \!\!::\!\! \langle \rangle)$ is equal to empty set:
	\begin{alignat}{2}
	 \llbracket \drop{} \rrbracket(\sigma \!\!::\!\! \langle \rangle) = \{ \}  \quad & (\textnormal{Definition of } \drop{})
	\end{alignat}
	Hence, the term $\drop{} \Seq p$ does not afford any transition. Similarly, observe that according to the semantic rules of {\DNK}, the term $\bot$ does not afford any transition. Hence, the following trivially holds:
	\begin{equation}
	(\drop{} \Seq p) \sim \bot
	\end{equation}
	}

	\item{
	Axiom under consideration:
	\begin{equation}
	p \oplus q \equiv q \oplus p \quad (A2)
	\end{equation}
	
	for $p, q \in {\DNK}$. According to the semantic rules of {\DNK}, the following are the possible transitions that can initially occur in the terms $p \oplus q$ and $q \oplus p$:
	\begin{gather*}
	\begin{cases}
	(1)\;(p, H_0, H'_0)  \trans{\gamma} (p', H_1, H'_1)\\
	(2)\;(q, H_0, H'_0)  \trans{\gamma} (q', H_1, H'_1)
	\end{cases}
	\end{gather*}
	$\gamma ::= (\sigma, \sigma') \mid x!z \mid x?z \mid {\rec{x,z}}$
	
	\begin{caseof}
		\case{1}{$(p, H_0, H'_0)  \trans{\gamma} (p', H_1, H'_1)$}{
			
			The derivations of $p \oplus q$ are as follows:
			\begin{enumerate}[leftmargin=.5in,label=(\alph*)]
				\item{\label{enum::a2-a}
					\[
					\begin{array}{r@{}l@{}c@{}r@{}l@{}c@{}r@{}l@{}}
					\dedr{\cpoloplusl}\sosrule{(p, H_0, H'_0)  \trans{\gamma} (p', H_1, H'_1)  }{ (p \oplus q, H_0, H'_0)  \trans{\gamma} ( p', H_1, H'_1)}
					\end{array} 
					\] 
				}
			\end{enumerate}
		The derivations of $q \oplus p$ are as follows:
		\begin{enumerate}[leftmargin=.5in,label=(\alph*)]
			\setcounter{enumi}{1}
			\item{\label{enum::a2-b}
				\[
				\begin{array}{r@{}l@{}c@{}r@{}l@{}c@{}r@{}l@{}}
				\dedr{\cpoloplusr}
				\sosrule{(p, H_0, H'_0)  \trans{\gamma} (p', H_1, H'_1)  }{ (q \oplus p, H_0, H'_0)  \trans{\gamma} (p', H_1, H'_1)}
				\end{array} 
				\] 
			}
		\end{enumerate}
	
		As demonstrated in \ref{enum::a2-a} and \ref{enum::a2-b}, if $(p, H_0, H'_0)  \trans{\gamma} (p', H_1, H'_1)$ holds then both of the terms $p \oplus q$ and $q \oplus p$ converge to the same expression with the $\gamma$ transition:
		\begin{equation}
		\label{eq::a2-r1}
		\begin{alignedat}{2}
		(p \oplus q, H_0, H'_0) \trans{\gamma}& (p', H_1, H'_1)\\
		(q \oplus p, H_0, H'_0) \trans{\gamma}& (p', H_1, H'_1)
		\end{alignedat}
		\end{equation}
		}
		
		\case{2}{$(q, H_0, H'_0)  \trans{\gamma} (q', H_1, H'_1)$}{
			
			The derivations of $p \oplus q$ are as follows:
			\begin{enumerate}[leftmargin=.5in,label=(\alph*)]
				\setcounter{enumi}{2}
				\item{\label{enum::a2-c}
					\[
					\begin{array}{r@{}l@{}c@{}r@{}l@{}c@{}r@{}l@{}}
					\dedr{\cpoloplusr}\sosrule{(q, H_0, H'_0)  \trans{\gamma} (q', H_1, H'_1)  }{ (p \oplus q, H_0, H'_0)  \trans{\gamma} (q', H_1, H'_1)}
					\end{array} 
					\] 
				}
			\end{enumerate}
			The derivations of $q \oplus p$ are as follows:
			\begin{enumerate}[leftmargin=.5in,label=(\alph*)]
				\setcounter{enumi}{3}
				\item{\label{enum::a2-d}
					\[
					\begin{array}{r@{}l@{}c@{}r@{}l@{}c@{}r@{}l@{}}
					\dedr{\cpoloplusl}
					\sosrule{(q, H_0, H'_0)  \trans{\gamma} (q', H_1, H'_1)  }{ (q \oplus p, H_0, H'_0)  \trans{\gamma} (q', H_1, H'_1)}
					\end{array} 
					\] 
				}
			\end{enumerate}
		As demonstrated in \ref{enum::a2-c} and \ref{enum::a2-d}, if $(q, H_0, H'_0)  \trans{\gamma} (q', H_1, H'_1)$ holds then both of the terms $p \oplus q$ and $q \oplus p$ converge to the same expression with the $\gamma$ transition:
		\begin{equation}
		\label{eq::a2-r2}
		\begin{alignedat}{2}
		(p \oplus q, H_0, H'_0) \trans{\gamma}& (q', H_1, H'_1)\\
		(q \oplus p, H_0, H'_0) \trans{\gamma}& (q', H_1, H'_1)
		\end{alignedat}
		\end{equation}
		}
	\end{caseof}
	
	Therefore, by (\ref{eq::a2-r1}) and (\ref{eq::a2-r2}) it is straightforward to conclude that the following holds:
	\begin{equation}
	(p \oplus q) \sim (q \oplus p) 
	\end{equation}
	}

	\item{
	Axiom under consideration:
	\begin{equation}
	(p \oplus q) \oplus r \equiv p \oplus (q \oplus r) \quad  (A3)
	\end{equation}
	for $p, q, r \in {\DNK}$. According to the semantic rules of {\DNK}, the following are the possible transitions that can initially occur in the terms $(p \oplus q) \oplus r$ and $p \oplus (q \oplus r)$:
	\begin{gather*}
	\begin{cases}
	(1)\;(p, H_0, H'_0)  \trans{\gamma} (p', H_1, H'_1)\\
	(2)\;(q, H_0, H'_0)  \trans{\gamma} (q', H_1, H'_1)\\
	(3)\;(r, H_0, H'_0)  \trans{\gamma} (r', H_1, H'_1)
	\end{cases}
	\end{gather*}
	$\gamma ::= (\sigma, \sigma') \mid x!z \mid x?z \mid {\rec{x,z}}$
		\begin{caseof}
		\case{1}{$(p, H_0, H'_0)  \trans{\gamma} (p', H_1, H'_1)$}{
			
		The derivations of $(p \oplus q) \oplus r$ are as follows:
			\begin{enumerate}[leftmargin=.5in,label=(\alph*)]
				\item{\label{enum::a3-a}
					\[
					\begin{array}{r@{}l@{}c@{}r@{}l@{}c@{}r@{}l@{}}
					\dedr{\cpoloplusl}&\sosrule{(p, H_0, H'_0)  \trans{\gamma} (p', H_1, H'_1)  }{ (p \oplus q, H_0, H'_0)  \trans{\gamma} ( p', H_1, H'_1)}\\
					\dedr{\cpoloplusl}&\sosrule{}{ ((p \oplus q) \oplus r, H_0, H'_0)  \trans{\gamma} (p', H_1, H'_1)}
					\end{array} 
					\] 
				}
			\end{enumerate}
			The derivations of $p \oplus (q \oplus r)$ are as follows:
			\begin{enumerate}[leftmargin=.5in,label=(\alph*)]
				\setcounter{enumi}{1}
				\item{\label{enum::a3-b}
					\[
					\begin{array}{r@{}l@{}c@{}r@{}l@{}c@{}r@{}l@{}}
					\dedr{\cpoloplusl}
					\sosrule{(p, H_0, H'_0)  \trans{\gamma} (p', H_1, H'_1)  }{ (p \oplus (q \oplus r), H_0, H'_0)  \trans{\gamma} (p', H_1, H'_1)}
					\end{array} 
					\] 
				}
			\end{enumerate}
			
			As demonstrated in \ref{enum::a3-a} and \ref{enum::a3-b}, if $(p, H_0, H'_0)  \trans{\gamma} (p', H_1, H'_1)$ holds then both of the terms $(p \oplus q) \oplus r$ and $p \oplus (q \oplus r)$ converge to the same expression with the $\gamma$ transition:
			\begin{equation}
			\label{eq::a3-r1}
			\begin{alignedat}{2}
			((p \oplus q) \oplus r, H_0, H'_0) \trans{\gamma}& (p', H_1, H'_1)\\
			(p \oplus (q \oplus r), H_0, H'_0) \trans{\gamma}& (p', H_1, H'_1)
			\end{alignedat}
			\end{equation}
		}
	
		\case{2}{$(q, H_0, H'_0)  \trans{\gamma} (q', H_1, H'_1)$}{
		
		The derivations of $(p \oplus q) \oplus r$ are as follows:
		\begin{enumerate}[leftmargin=.5in,label=(\alph*)]
			\setcounter{enumi}{2}
			\item{\label{enum::a3-c}
				\[
				\begin{array}{r@{}l@{}c@{}r@{}l@{}c@{}r@{}l@{}}
				\dedr{\cpoloplusr}&\sosrule{(q, H_0, H'_0)  \trans{\gamma} (q', H_1, H'_1)  }{ (p \oplus q, H_0, H'_0)  \trans{\gamma} ( q', H_1, H'_1)}\\
				\dedr{\cpoloplusl}&\sosrule{}{ ((p \oplus q) \oplus r, H_0, H'_0)  \trans{\gamma} (q', H_1, H'_1)}
				\end{array} 
				\] 
			}
		\end{enumerate}
		The derivations of $p \oplus (q \oplus r)$ are as follows:
		\begin{enumerate}[leftmargin=.5in,label=(\alph*)]
			\setcounter{enumi}{3}
			\item{\label{enum::a3-d}
				\[
				\begin{array}{r@{}l@{}c@{}r@{}l@{}c@{}r@{}l@{}}
				\dedr{\cpoloplusl}&
				\sosrule{(q, H_0, H'_0)  \trans{\gamma} (q', H_1, H'_1)  }{ (q \oplus r, H_0, H'_0)  \trans{\gamma} (q', H_1, H'_1)}\\
				\dedr{\cpoloplusr}&
				\sosrule{}{ (p \oplus (q \oplus r), H_0, H'_0)  \trans{\gamma} (q', H_1, H'_1)}
				\end{array} 
				\] 
			}
		\end{enumerate}
		
		As demonstrated in \ref{enum::a3-c} and \ref{enum::a3-d}, if $(q, H_0, H'_0)  \trans{\gamma} (q', H_1, H'_1)$ holds then both of the terms $(p \oplus q) \oplus r$ and $p \oplus (q \oplus r)$ converge to the same expression with the $\gamma$ transition:
		\begin{equation}
		\label{eq::a3-r2}
		\begin{alignedat}{2}
		((p \oplus q) \oplus r, H_0, H'_0) \trans{\gamma}& (q', H_1, H'_1)\\
		(p \oplus (q \oplus r), H_0, H'_0) \trans{\gamma}& (q', H_1, H'_1)
		\end{alignedat}
		\end{equation}
	}

		\case{3}{$(r, H_0, H'_0)  \trans{\gamma} (r', H_1, H'_1)$}{
	
	The derivations of $(p \oplus q) \oplus r$ are as follows:
	\begin{enumerate}[leftmargin=.5in,label=(\alph*)]
		\setcounter{enumi}{4}
		\item{\label{enum::a3-e}
			\[
			\begin{array}{r@{}l@{}c@{}r@{}l@{}c@{}r@{}l@{}}
			\dedr{\cpoloplusr}&\sosrule{(r, H_0, H'_0)  \trans{\gamma} (r', H_1, H'_1)  }{ ((p \oplus q) \oplus r, H_0, H'_0)  \trans{\gamma} ( r', H_1, H'_1)}
			\end{array} 
			\] 
		}
	\end{enumerate}
	The derivations of $p \oplus (q \oplus r)$ are as follows:
	\begin{enumerate}[leftmargin=.5in,label=(\alph*)]
		\setcounter{enumi}{5}
		\item{\label{enum::a3-f}
			\[
			\begin{array}{r@{}l@{}c@{}r@{}l@{}c@{}r@{}l@{}}
			\dedr{\cpoloplusr}&
			\sosrule{(r, H_0, H'_0)  \trans{\gamma} (r', H_1, H'_1)  }{ (q \oplus r, H_0, H'_0)  \trans{\gamma} (r', H_1, H'_1)}\\
			\dedr{\cpoloplusr}&
			\sosrule{}{ (p \oplus (q \oplus r), H_0, H'_0)  \trans{\gamma} (r', H_1, H'_1)}
			\end{array} 
			\] 
		}
	\end{enumerate}
	
	As demonstrated in \ref{enum::a3-e} and \ref{enum::a3-f}, if $(r, H_0, H'_0)  \trans{\gamma} (r', H_1, H'_1)$ holds then both of the terms $(p \oplus q) \oplus r$ and $p \oplus (q \oplus r)$ converge to the same expression with the $\gamma$ transition:
	\begin{equation}
	\label{eq::a3-r3}
	\begin{alignedat}{2}
	((p \oplus q) \oplus r, H_0, H'_0) \trans{\gamma}& (r', H_1, H'_1)\\
	(p \oplus (q \oplus r), H_0, H'_0) \trans{\gamma}& (r', H_1, H'_1)
	\end{alignedat}
	\end{equation}
}
	Therefore, by (\ref{eq::a3-r1}), (\ref{eq::a3-r2}) and (\ref{eq::a3-r3}) it is straightforward to conclude that the following holds:
	\begin{equation}
	((p \oplus q) \oplus r) \sim (p \oplus (q \oplus r)) 
	\end{equation}
	\end{caseof}
	}

	\item{
	Axiom under consideration:
	\begin{equation}
	p \oplus p \equiv p \quad (A4) 
	\end{equation}
	for $p \in {\DNK}$. According to the semantic rules of {\DNK}, the following are the possible transitions that can initially occur in the terms $p \oplus p$ and $p$:
	\begin{gather*}
	\begin{cases}
	(1)\;(p, H_0, H'_0)  \trans{\gamma} (p', H_1, H'_1)
	\end{cases}
	\end{gather*}
	$\gamma ::= (\sigma, \sigma') \mid x!z \mid x?z \mid {\rec{x,z}}$
	
	\begin{caseof}
		\case{1}{$(p, H_0, H'_0)  \trans{\gamma} (p', H_1, H'_1)$}{
			
			The derivations of $p \oplus p$ are as follows:
			\begin{enumerate}[leftmargin=.5in,label=(\alph*)]
				\item{\label{enum::a4-a}
					\[
					\begin{array}{r@{}l@{}c@{}r@{}l@{}c@{}r@{}l@{}}
					\dedr{\cpoloplusl}\sosrule{(p, H_0, H'_0)  \trans{\gamma} (p', H_1, H'_1)  }{ (p \oplus p, H_0, H'_0)  \trans{\gamma} ( p', H_1, H'_1)}
					\end{array} 
					\] 
				}
			\item{\label{enum::a4-b}
				\[
				\begin{array}{r@{}l@{}c@{}r@{}l@{}c@{}r@{}l@{}}
				\dedr{\cpoloplusr}\sosrule{(p, H_0, H'_0)  \trans{\gamma} (p', H_1, H'_1)  }{ (p \oplus p, H_0, H'_0)  \trans{\gamma} ( p', H_1, H'_1)}
				\end{array} 
				\] 
			}
			\end{enumerate}
			
			As demonstrated in \ref{enum::a4-a} and \ref{enum::a4-b}, if $(p, H_0, H'_0)  \trans{\gamma} (p', H_1, H'_1)$ holds then it is also the case that the term $p \oplus p$ evolves into the same expression with the $\gamma$ transition:
			\begin{alignat}{2}
			(p \oplus p, H_0, H'_0) \trans{\gamma}& (p', H_1, H'_1)
			\end{alignat}
			Hence, it is straightforward to conclude that the following holds:
			\begin{equation}
			(p \oplus p) \sim p
			\end{equation}
		}
	\end{caseof}
	}

		\item{
		Axiom under consideration:
		\begin{equation}
		p \oplus \bot \equiv p \quad (A5)
		\end{equation}
		for $p \in {\DNK}$. According to the semantic rules of {\DNK}, the following are the possible transitions that can initially occur in the terms $p \oplus \bot$ and $p$:
		\begin{gather*}
		\begin{cases}
		(1)\;(p, H_0, H'_0)  \trans{\gamma} (p', H_1, H'_1)
		\end{cases}
		\end{gather*}
		$\gamma ::= (\sigma, \sigma') \mid x!z \mid x?z \mid {\rec{x,z}}$
		
		\begin{caseof}
			\case{1}{$(p, H_0, H'_0)  \trans{\gamma} (p', H_1, H'_1)$}{
				
				The derivations of $p \oplus \bot$ are as follows:
				\begin{enumerate}[leftmargin=.5in,label=(\alph*)]
					\item{\label{enum::a5-a}
						\[
						\begin{array}{r@{}l@{}c@{}r@{}l@{}c@{}r@{}l@{}}
						\dedr{\cpoloplusl}\sosrule{(p, H_0, H'_0)  \trans{\gamma} (p', H_1, H'_1)  }{ (p \oplus \bot, H_0, H'_0)  \trans{\gamma} ( p', H_1, H'_1)}
						\end{array} 
						\] 
					}
				\end{enumerate}
				
				As demonstrated in \ref{enum::a5-a}, if $(p, H_0, H'_0)  \trans{\gamma} (p', H_1, H'_1)$ holds then it is also the case that the term $p \oplus \bot$ evolves into the same expression with the $\gamma$ transition:
				\begin{alignat}{2}
				(p \oplus \bot, H_0, H'_0) \trans{\gamma}& (p', H_1, H'_1)
				\end{alignat}
				Hence, it is straightforward to conclude that the following holds:
				\begin{equation}
				(p \oplus \bot) \sim p
				\end{equation}
		}
	\end{caseof}
	}

    \item{
    Axiom under consideration:
		\begin{equation}
		p \Par q \equiv q \Par p \quad (A6)
		\end{equation}
		for $p, q \in {\DNK}$. The soundness proof of the axiom $(A6)$ follows by induction on the size of the syntactic tree associated to $p \Par q$. Without loss of generality, assume $p$ and $q$ are in n.f. The size of $p\Par q$ is then defined as follows:	\begin{alignat}{2}
		size(\bot) &= 1\\
		size(\alpha \cdot \pi \Seq t) &= 2 + size(t)\\
		size(x?z \Seq t) &= 2 + size(t)\\
		size(x!z \Seq t) &= 2 + size(t)\\
		size(\recp{x, z} \Seq t) &= 2 + size(t)\\
		size(p \oplus q) &= 1 + size(p) + size(q)\\
		size(p \Par q) &= 1 + size(p) + size(q)
		\end{alignat}
		\emph{Base case.}
		\begin{itemize}
		    \item $size(p \Par q) = 3$. It follows that $p \triangleq \bot$ and $q \triangleq \bot$. Therefore, the soundness of $(A6)$ trivially holds.
		\end{itemize}
			\emph{Induction step.}
		Assume $(A6)$ is sound for all $p$, $q$ such that $size(p \Par q) \leq M$, with $M \in \mathbb{N}$. We want to show that $(A6)$ is sound for all $p$, $q$ such that $size(p \Par q) > M$.
		\begin{enumerate}[label=(\roman*)]
		\setlength\itemsep{1em}
		    \item{$p \triangleq \bot$. Then, it is straightforward to observe that both $\bot \Par q$ and $q \Par \bot$ evolve according to the semantic rules corresponding to $q$. Hence, we can safely conclude that $(\bot \Par q) \sim (q \Par \bot)$ holds.}
		    
		    \item{\label{enum::a6-ii}$p \triangleq \alpha \cdot \pi ; p'$. Consider an arbitrary but fixed network packet $\sigma$, let $S_{\alpha\pi} \triangleq \llbracket \alpha \cdot \pi  \rrbracket(\sigma \!\!::\!\! \langle \rangle)$. The first step derivations entailed by $p$ in a context $p \Par t$ are as follows:
				\begin{enumerate}[leftmargin=.5in,label=(\alph*)]
					\item{\label{enum::a6-c1-a}
						\[
						\begin{array}{l@{}r@{}l@{}c@{}r@{}l@{}c@{}r@{}l@{}}
						\textnormal{For all } \sigma' \in S_{\alpha\pi}:\quad\quad&\dedr{\cpolseqsucc}&\sosrule{}{ (\alpha \cdot \pi \Seq p', \sigma :: H, H')  \trans{(\sigma, \sigma')} ( p', H, \sigma' :: H')}\\ &\dedr{\intl}&\sosrule{}{ ((\alpha \cdot \pi \Seq p') \Par t, \sigma :: H, H')  \trans{(\sigma, \sigma')} ( p' \Par t, H, \sigma' :: H')} 
						\end{array} 
						\] 
					}
				\end{enumerate}
			    The first step derivations entailed by $p$ in a context $t \Par p$ are as follows:
				\begin{enumerate}[leftmargin=.5in,label=(\alph*)]
					\setcounter{enumii}{1}
					\item{\label{enum::a6-c1-b}
						\[
						\begin{array}{l@{}r@{}l@{}c@{}r@{}l@{}c@{}r@{}l@{}}
						\textnormal{For all } \sigma' \in S_{\alpha\pi}:\quad\quad&\dedr{\cpolseqsucc}&\sosrule{}{ (\alpha \cdot \pi \Seq p', \sigma :: H, H')  \trans{(\sigma, \sigma')} ( p', H, \sigma' :: H')}\\ &\dedr{\intr}&\sosrule{}{ (t \Par (\alpha \cdot \pi \Seq p'), \sigma :: H, H')  \trans{(\sigma, \sigma')} ( t \Par p', H, \sigma' :: H')}
						\end{array} 
						\] 
					}
				\end{enumerate}
			 Hence, given that $q$ in n.f. always evolves into a term $t$ with simpler structure (according to the {\DNK} semantic rules), and based on the induction hypothesis, it is safe to conclude that $(p \Par q) \sim (q \Par p)$.
		    }

    \item{
		$p \triangleq \recp{x, z} \Seq p'$. The reasoning is similar to \ref{enum::a6-ii} above.
		}

		\item{\label{enum::a6-iv}$p \triangleq x?z ; p'$. The first step of asynchronous derivations entailed by $p$ in a context $p \Par t$ are as follows:
		\begin{enumerate}[leftmargin=.5in,label=(\alph*)]
			\item{\label{enum::a6-c2-a}
				\[
				\begin{array}{r@{}l@{}c@{}r@{}l@{}c@{}r@{}l@{}} \dedr{\cpolmsgrec}&\sosrule{}{ (x?z \Seq p', H, H')  \trans{x?z} ( p', H, H')}\\
				\dedr{\intl} & \sosrule{}{ ((x?z \Seq p') \Par t, H, H')  \trans{x?z} ( p' \Par t, H, H')}
				\end{array} 
				\] 
			}
		\end{enumerate}
		The first step of asynchronous derivations entailed by $p$ in a context $t \Par p$ are as follows:
		\begin{enumerate}[leftmargin=.5in,label=(\alph*)]
			\setcounter{enumii}{1}
			\item{\label{enum::a6-c2-b}
				\[
				\begin{array}{r@{}l@{}c@{}r@{}l@{}c@{}r@{}l@{}} \dedr{\cpolmsgrec}&\sosrule{}{ (x?z \Seq p', H, H')  \trans{x?z} ( p', H, H')}\\
				\dedr{\intr} & \sosrule{}{ (t \Par (x?z \Seq p'), H, H')  \trans{x?z} ( t \Par p', H, H')}
				\end{array} 
				\] 
			}
		\end{enumerate}
		Furthermore, if $q$ has a summand of shape $x!z \Seq q'$, then:\\
		The first step synchronous derivations of $p \Par q$ are as follows:
		\begin{enumerate}[leftmargin=.5in,label=(\alph*)]
	    \setcounter{enumii}{2}
    	\item{\label{enum::a6-c3-a}
		\[
			\begin{array}{r@{}l@{}c@{}r@{}l@{}c@{}r@{}l@{}}
			\dedr{\cpolmsgrec} & \sosrule{}{ (x?z ; p', H, H')  \trans{x?z} (p', H, H')} &
			\quad \quad &
			\dedr{\cpolmsgsend}&\sosrule{}{ (x!z \Seq q', H, H')  \trans{x!z} ( q', H, H')}\\
			
			\dedr{\reconfigrs} & \multicolumn{6}{@{}l@{}}{ \sosrule{\myquad[31]~}{((x?\mathit{z} \Seq p') \Par (x!z \Seq q'), H, H') \trans{\rec{x,z}} (p' \Par q', H, H')}}
			\end{array} 
		\]
		}
		
		\end{enumerate}

		The first step  synchronous derivations of $q \Par p$ are as follows:	\begin{enumerate}[leftmargin=.5in,label=(\alph*)]
	    \setcounter{enumii}{3}
    	\item{\label{enum::a6-c3-b}
		\[
		\begin{array}{r@{}l@{}c@{}r@{}l@{}c@{}r@{}l@{}}
			\dedr{\cpolmsgsend} & \sosrule{}{ (x!z \Seq q', H, H')  \trans{x!z} (q', H, H')} &
			\quad \quad &
			\dedr{\cpolmsgrec}&\sosrule{}{ (x?z ; p', H, H')  \trans{x?z} ( p', H, H')}\\
			
			\dedr{\reconfigsr} & \multicolumn{6}{@{}l@{}}{ \sosrule{\myquad[31]~}{((x!z \Seq q') \Par (x?\mathit{z} \Seq p'), H, H') \trans{\rec{x,z}} (q' \Par p', H, H')}}
			\end{array} 
		\]
		}
		\end{enumerate}	
		In connection with (iv)(a) and (iv)(b) above, note that $q$ in n.f. always evolves into a term $t$ with simpler structure (according to the {\DNK} semantic rules). This, together with the observations in (iv)(c) and (iv)(d), and based on the induction hypothesis, enable us to safely to conclude that $(p \Par q) \sim (q \Par p)$.
		}

		\item{$p \triangleq x!z \Seq p'$. The reasoning is similar to \ref{enum::a6-iv} above.}

		\item{
		$p \triangleq p_1 \oplus p_2$ where $p_1$ and $p_2$ are in n.f. Without loss of generality, assume $p_1 ::= \alpha \cdot \pi \Seq p_1' \mid  {\recp{x,z}} \Seq p_1' \mid x?z \Seq p_1'  \mid x!z \Seq p_1'$ and assume $(p_1, H_0, H'_0)  \trans{\gamma} (p_1', H_1, H'_1)$. The derivation entailed by $p_1$ in p is as follows:
				\[	\begin{array}{r@{}l@{}c@{}r@{}l@{}c@{}r@{}l@{}}
					\dedr{\cpoloplusl}&\sosrule{(p_1, H_0, H'_0)  \trans{\gamma} (p_1', H_1, H'_1)  }{ (p_1 \oplus p_2, H_0, H'_0)  \trans{\gamma} ( p_1', H_1, H'_1)}	\end{array} 
					\] 
			From this point onward, showing that the first step derivations entailed by $p_1$ in a context $p \Par t$ correspond to the first step derivations entailed by $p_1$ in a context $t \Par p$
			follows the reasoning in (ii)$-$(v), with $p_1$ ranging over terms of shape $(\alpha \cdot \pi \Seq p_1'),  ({\recp{x,z}} \Seq p_1'), (x!z \Seq p_1')$ and $(x?z \Seq p_1')$, respectively.  Hence, given that $q$ in n.f. always evolves into a term $t$ with simpler structure (according to the {\DNK} semantic rules), and based on the induction hypothesis, it is safe to conclude that $(p \Par q) \sim (q \Par p)$.    

		}

		\end{enumerate}
    }

	\item{
	Axiom under consideration:
	\begin{equation}
	p \Par \bot \equiv p \quad (A7)
	\end{equation}
	for $p \in {\DNK}$. According to the semantic rules of {\DNK}, observe that both $p \Par \bot$ and $p$ evolve according to the semantic rules corresponding to $p$. Hence, it is straightforward to conclude that the following holds:
	\begin{equation}
	(p \Par \bot) \sim p
	\end{equation}
	}
    
	\item{
	Axiom under consideration:
	\begin{equation}
	p \Par q \equiv p \llfloor q \oplus q \llfloor p \oplus p \mid q \quad (A8)
	\end{equation}
	for $p, q \in {\DNK}$. According to the semantic rules of {\DNK}, the following are the possible transitions that can initially occur in the terms $p \Par q$ and $p \llfloor q \oplus q \llfloor p \oplus p \mid q$:

		\begin{gather*}
	\begin{cases}
	(1)\;(p, H_0, H'_0)  \trans{\gamma} (p', H_1, H'_1)\\
	(2)\;(q, H_0, H'_0)  \trans{\gamma} (q', H_1, H'_1)\\
	(3)\;(p, H_0, H'_0)  \trans{x!z} (p', H_1, H'_1) \quad (q, H_0, H'_0)  \trans{x?z} (q', H_1, H'_1)\\
	(4)\;(p, H_0, H'_0)  \trans{x?z} (p', H_1, H'_1) \quad (q, H_0, H'_0)  \trans{x!z} (q', H_1, H'_1)
	\end{cases}
	\end{gather*}
	$\gamma ::= (\sigma, \sigma') \mid x!z \mid x?z \mid {\rec{x,z}}$
	
	\begin{caseof}
		\case{1}{$(p, H_0, H'_0)  \trans{\gamma} (p', H_1, H'_1)$}{
			
			The derivations of $p \Par q$ are as follows:
			\begin{enumerate}[leftmargin=.5in,label=(\alph*)]
				\item{\label{enum::a8-a}
					\[
					\begin{array}{r@{}l@{}c@{}r@{}l@{}c@{}r@{}l@{}}
					\dedr{\intl}\sosrule{(p, H_0, H'_0)  \trans{\gamma} (p', H_1, H'_1)  }{ (p \Par q, H_0, H'_0)  \trans{\gamma} ( p' \Par q, H_1, H'_1)}
					\end{array} 
					\] 
				}
			\end{enumerate}
			The derivations of $p \llfloor q \oplus q \llfloor p \oplus p \mid q$ are as follows:
			\begin{enumerate}[leftmargin=.5in,label=(\alph*)]
				\setcounter{enumi}{1}
				\item{\label{enum::a8-b}
					\[
					\begin{array}{r@{}l@{}c@{}r@{}l@{}c@{}r@{}l@{}}
					\dedr{\llfloor}&
					\sosrule{(p, H_0, H'_0)  \trans{\gamma} (p', H_1, H'_1)  }{ (p \llfloor q, H_0, H'_0)  \trans{\gamma} (p' \Par q, H_1, H'_1)}\\
					\dedr{\cpoloplusl}&
					\sosrule{}{ (p \llfloor q \oplus q \llfloor p \oplus p \mid q, H_0, H'_0)  \trans{\gamma} (p' \Par q, H_1, H'_1)}
					\end{array} 
					\] 
				}
			\end{enumerate}
			
			As demonstrated in \ref{enum::a8-a} and \ref{enum::a8-b}, if $(p, H_0, H'_0)  \trans{\gamma} (p', H_1, H'_1)$ holds then both of the terms $p \Par q$ and $p \llfloor q \oplus q \llfloor p \oplus p \mid q$ converge to the same expression with the $\gamma$ transition:
			\begin{equation}
			\label{eq::a8-r1}
			\begin{alignedat}{2}
			(p \Par q, H_0, H'_0) \trans{\gamma}& (p' \Par q, H_1, H'_1)\\
			(p \llfloor q \oplus q \llfloor p \oplus p \mid q, H_0, H'_0) \trans{\gamma}& (p' \Par q, H_1, H'_1)
			\end{alignedat}
			\end{equation}
		}

		\case{2}{$(q, H_0, H'_0)  \trans{\gamma} (p', H_1, H'_1)$}{
		
		The derivations of $p \Par q$ are as follows:
		\begin{enumerate}[leftmargin=.5in,label=(\alph*)]
			\setcounter{enumi}{2}
			\item{\label{enum::a8-c}
				\[
				\begin{array}{r@{}l@{}c@{}r@{}l@{}c@{}r@{}l@{}}
				\dedr{\intr}\sosrule{(q, H_0, H'_0)  \trans{\gamma} (q', H_1, H'_1)  }{ (p \Par q, H_0, H'_0)  \trans{\gamma} ( p \Par q', H_1, H'_1)}
				\end{array} 
				\] 
			}
		\end{enumerate}
		The derivations of $p \llfloor q \oplus q \llfloor p \oplus p \mid q$ are as follows:
		\begin{enumerate}[leftmargin=.5in,label=(\alph*)]
			\setcounter{enumi}{3}
			\item{\label{enum::a8-d}
				\[
				\begin{array}{r@{}l@{}c@{}r@{}l@{}c@{}r@{}l@{}}
				\dedr{\llfloor}&
				\sosrule{(q, H_0, H'_0)  \trans{\gamma} (q', H_1, H'_1)  }{ (q \llfloor p, H_0, H'_0)  \trans{\gamma} (q' \Par p, H_1, H'_1)}\\
				\dedr{\cpoloplusr}&
				\sosrule{}{ (p \llfloor q \oplus q \llfloor p, H_0, H'_0)  \trans{\gamma} (q' \Par p, H_1, H'_1)}\\
				\dedr{\cpoloplusl}&
				\sosrule{}{ (p \llfloor q \oplus q \llfloor p \oplus p \mid q, H_0, H'_0)  \trans{\gamma} (q' \Par p, H_1, H'_1)}
				\end{array} 
				\] 
			}
		\end{enumerate}
		
		As demonstrated in \ref{enum::a8-c} and \ref{enum::a8-d}, if $(p, H_0, H'_0)  \trans{\gamma} (p', H_1, H'_1)$ holds then both of the terms $p \Par q$ and $p \llfloor q \oplus q \llfloor p \oplus p \mid q$ are able to perform the $\gamma$ transition:
		\begin{equation}
		\label{eq::a8-r2}
		\begin{alignedat}{2}
		(p \Par q, H_0, H'_0) \trans{\gamma}& (p \Par q', H_1, H'_1)\\
		(p \llfloor q \oplus q \llfloor p \oplus p \mid q, H_0, H'_0) \trans{\gamma}& (q' \Par p, H_1, H'_1)
		\end{alignedat}
		\end{equation}
		Observe that the terms evolve into different expressions, however, according to the axiom $A6$ the ``$\Par$'' operator is commutative. Hence, the following holds:
		\begin{equation}\label{eq::a8-r2-2}
		(p \Par q') \sim (q' \Par p)
		\end{equation}
	}

		\case{3}{$(p, H_0, H'_0)  \trans{x!z} (p', H_1, H'_1) \quad (q, H_0, H'_0)  \trans{x?z} (q', H_1, H'_1)$}{
			
			The derivations of $p \Par q$ are as follows:
			\begin{enumerate}[leftmargin=.5in,label=(\alph*)]
				\setcounter{enumi}{4}
				\item{\label{enum::a8-e}
					\[
					\begin{array}{r@{}l@{}c@{}r@{}l@{}c@{}r@{}l@{}}
					\dedr{\reconfigsr}\sosrule{(p, H_0, H'_0)  \trans{x!z} (p', H_1, H'_1) \quad (q, H_0, H'_0)  \trans{x?z} (q', H_1, H'_1)  }{ (p \Par q, H_0, H'_0)  \trans{\rec{x,z}} ( p' \Par q', H_1, H'_1)}
					\end{array} 
					\] 
				}
			\end{enumerate}
			The derivations of $p \llfloor q \oplus q \llfloor p \oplus p \mid q$ are as follows:
			\begin{enumerate}[leftmargin=.5in,label=(\alph*)]
				\setcounter{enumi}{5}
				\item{\label{enum::a8-f}
					\[
					\begin{array}{r@{}l@{}c@{}r@{}l@{}c@{}r@{}l@{}}
					\dedr{\reconfigsr} &
					\sosrule{(p, H_0, H'_0)  \trans{x!z} (p', H_1, H'_1) \quad (q, H_0, H'_0)  \trans{x?z} (q', H_1, H'_1)  }{ (p \mid q, H_0, H'_0)  \trans{\rec{x,z}} (p' \Par q', H_1, H'_1)} \\
					\dedr{\cpoloplusr}&
					\sosrule{\myquad[25]~~}{ (p \llfloor q \oplus q \llfloor p \oplus p \mid q, H_0, H'_0)  \trans{\rec{x,z}} (p' \Par q', H_1, H'_1)}
					\end{array} 
					\] 
				}
			\end{enumerate}
			
			As demonstrated in \ref{enum::a8-e} and \ref{enum::a8-f}, if $(p, H_0, H'_0)  \trans{x!z} (p', H_1, H'_1)$ and $(q, H_0, H'_0)  \trans{x?z} (q',\allowbreak H_1, H'_1)$ hold then both of the terms $p \Par q$ and $p \llfloor q \oplus q \llfloor \oplus p \mid q$ converge to the same expression with the $\rec{x,z}$ transition:
			\begin{equation}
			\label{eq::a8-r3}
			\begin{alignedat}{2}
			(p \Par q, H_0, H'_0) \trans{\rec{x,z}}& (p' \Par q', H_1, H'_1)\\
			(p \llfloor q \oplus q \llfloor p \oplus p \mid q, H_0, H'_0) \trans{\rec{x,z}}& (p' \Par q', H_1, H'_1)
			\end{alignedat}
			\end{equation}
			
			}
		
		\case{4}{$(p, H_0, H'_0)  \trans{x?z} (p', H_1, H'_1) \quad (q, H_0, H'_0)  \trans{x!z} (q', H_1, H'_1)$}{
			
			The derivations of $p \Par q$ are as follows:
			\begin{enumerate}[leftmargin=.5in,label=(\alph*)]
				\setcounter{enumi}{6}
				\item{\label{enum::a8-g}
					\[
					\begin{array}{r@{}l@{}c@{}r@{}l@{}c@{}r@{}l@{}}
					\dedr{\reconfigrs}\sosrule{(p, H_0, H'_0)  \trans{x?z} (p', H_1, H'_1) \quad (q, H_0, H'_0)  \trans{x!z} (q', H_1, H'_1)  }{ (p \Par q, H_0, H'_0)  \trans{\rec{x,z}} ( p' \Par q', H_1, H'_1)}
					\end{array} 
					\] 
				}
			\end{enumerate}
			The derivations of $p \llfloor q \oplus q \llfloor p \oplus p \mid q$ are as follows:
			\begin{enumerate}[leftmargin=.5in,label=(\alph*)]
				\setcounter{enumi}{7}
				\item{\label{enum::a8-h}
					\[
					\begin{array}{r@{}l@{}c@{}r@{}l@{}c@{}r@{}l@{}}
					\dedr{\reconfigrs} &
					\sosrule{(p, H_0, H'_0)  \trans{x?z} (p', H_1, H'_1) \quad (q, H_0, H'_0)  \trans{x!z} (q', H_1, H'_1)  }{ (p \mid q, H_0, H'_0)  \trans{\rec{x,z}} (p' \Par q', H_1, H'_1)} \\
					\dedr{\cpoloplusr}&
					\sosrule{\myquad[25]~~}{ (p \llfloor q \oplus q \llfloor p \oplus p \mid q, H_0, H'_0)  \trans{\rec{x,z}} (p' \Par q', H_1, H'_1)}
					\end{array} 
					\] 
				}
			\end{enumerate}
			
			As demonstrated in \ref{enum::a8-g} and \ref{enum::a8-h}, if $(p, H_0, H'_0)  \trans{x?z} (p', H_1, H'_1)$ and $(q, H_0, H'_0) \trans{x!z} (q',\allowbreak H_1, H'_1)$ hold then both of the terms $p \Par q$ and $p \llfloor q \oplus q \llfloor \oplus p \mid q$ converge to the same expression with the $\rec{x,z}$ transition:
			\begin{equation}
			\label{eq::a8-r4}
			\begin{alignedat}{2}
			(p \Par q, H_0, H'_0) \trans{\rec{x,z}}& (p' \Par q', H_1, H'_1)\\
			(p \llfloor q \oplus q \llfloor p \oplus p \mid q, H_0, H'_0) \trans{\rec{x,z}}& (p' \Par q', H_1, H'_1)
			\end{alignedat}
			\end{equation}
			
		}
	\end{caseof}
	Therefore, by (\ref{eq::a8-r1}), (\ref{eq::a8-r2}), (\ref{eq::a8-r2-2}), (\ref{eq::a8-r3}) and (\ref{eq::a8-r4}) it is straightforward to conclude that the following holds:
	\begin{equation}
	(p \Par q) \sim (p \llfloor q \oplus q \llfloor p \oplus p \mid q)
	\end{equation}
}

\item{
Axiom under consideration:
\begin{equation}
\bot \llfloor p \equiv \bot \quad (A9)
\end{equation}
for $p \in {\DNK}$. Observe that according to the semantic rules of {\DNK}, the terms $\bot \llfloor p$ and $\bot$ do not afford any transition. Hence, the following trivially holds:
\begin{equation}
(\bot \llfloor p) \sim \bot
\end{equation}
}

\item{
	Axiom under consideration:
	\begin{equation}
	(a \Seq p) \llfloor q \equiv a \Seq (p \Par q) \quad (A10)
	\end{equation}
	for $a \in \{z, x?z, x!z, \recp{x,z} \}$, $z \in {\NetKATnoDup}$ and $p, q \in {\DNK}$. In the following, we make a case analysis on the shape of $a$ and show that the terms $(a \Seq p) \llfloor q$ and $a \Seq (p \Par q)$ are bisimilar.
	\begin{caseof}
		\case{1}{$a \triangleq z$}{
			
			Consider an arbitrary but fixed network packet $\sigma$, let $S_z \triangleq \llbracket z \rrbracket(\sigma \!\!::\!\! \langle \rangle)$. The derivations of $(z \Seq p) \llfloor q$ are as follows:
			\begin{enumerate}[leftmargin=.5in,label=(\alph*)]
				\item{\label{enum::a10-a}
					\[
					\begin{array}{r@{}r@{}l@{}c@{}r@{}l@{}c@{}r@{}l@{}}
					\textnormal{For all } \sigma' \in S_{z}:\quad\quad & \dedr{\cpolseqsucc}&\sosrule{}{ (z \Seq p, \sigma :: H, H')  \trans{(\sigma, \sigma')} ( p, H, \sigma' :: H')}\\
					& \dedr{\llfloor} & \sosrule{}{ ((z \Seq p) \llfloor q, \sigma :: H, H')  \trans{(\sigma, \sigma')} ( p \Par q, H, \sigma' :: H)}
					\end{array} 
					\] 
				}
			\end{enumerate}
			The derivations of $z \Seq (p \Par q)$ are as follows:
			\begin{enumerate}[leftmargin=.5in,label=(\alph*)]
				\setcounter{enumi}{1}
				\item{\label{enum::a10-b}
					\[
					\begin{array}{r@{}l@{}c@{}r@{}l@{}c@{}r@{}l@{}}
					\textnormal{For all } \sigma' \in S_{z}: \quad \dedr{\cpolseqsucc}
					\sosrule{}{ (z \Seq (p \Par q), \sigma :: H, H')  \trans{(\sigma, \sigma')} (p \Par q, H, \sigma' :: H')}
					\end{array} 
					\] 
				}
			\end{enumerate}
						
			As demonstrated in \ref{enum::a10-a} and \ref{enum::a10-b}, both of the terms $(z ; p) \llfloor q$ and $z \Seq (p \Par q)$ initially afford the same set of transitions of shape $(\sigma, \sigma')$ and they converge to the same expression after taking these transitions:
			\begin{equation}
			\label{eq::a10-r1}
			\begin{alignedat}{2}
			((z ; p) \llfloor q, \sigma :: H, H') \trans{(\sigma, \sigma')}& (p \Par q, H, \sigma' :: H') \\
			(z \Seq (p \Par q), \sigma :: H, H') \trans{(\sigma, \sigma')}& (p \Par q, H, \sigma' :: H')
			\end{alignedat}
			\end{equation}
		}
	
	\case{2}{$a \triangleq x?z$}{
		
		The derivations of $(x?z \Seq p) \llfloor q$ are as follows:
		\begin{enumerate}[leftmargin=.5in,label=(\alph*)]
			\setcounter{enumi}{2}
			\item{\label{enum::a10-c}
				\[
				\begin{array}{r@{}l@{}c@{}r@{}l@{}c@{}r@{}l@{}} \dedr{\cpolmsgrec}&\sosrule{}{ (x?z \Seq p, H, H')  \trans{x?z} ( p, H, H')}\\
				\dedr{\llfloor} & \sosrule{}{ ((x?z \Seq p) \llfloor q, H, H')  \trans{x?z} ( p \Par q, H, H')}
				\end{array} 
				\] 
			}
		\end{enumerate}
		The derivations of $x ? z \Seq (p \Par q)$ are as follows:
		\begin{enumerate}[leftmargin=.5in,label=(\alph*)]
			\setcounter{enumi}{3}
			\item{\label{enum::a10-d}
				\[
				\begin{array}{r@{}l@{}c@{}r@{}l@{}c@{}r@{}l@{}} 
				\dedr{\cpolmsgrec}
				\sosrule{}{ (x?z \Seq (p \Par q),  H, H')  \trans{x?z} (p \Par q, H, H')}
				\end{array} 
				\] 
			}
		\end{enumerate}
		
		As demonstrated in \ref{enum::a10-c} and \ref{enum::a10-d}, both of the terms $(x ? z ; p) \llfloor q$ and $x ? z \Seq (p \Par q)$ initially only afford the $x?z$ transition and they converge to the same expression after taking this transition:
		\begin{equation}
		\label{eq::a10-r2}
		\begin{alignedat}{2}
		((x ? z ; p) \llfloor q, H, H') \trans{x?z}& (p \Par q, H, H') \\
		(x ? z \Seq (p \Par q), H, H') \trans{x?z}& (p \Par q, H, H')
		\end{alignedat}
		\end{equation}
	}

	\case{3}{$a \triangleq x!z$}{
	
	The derivations of $(x!z \Seq p) \llfloor q$ are as follows:
	\begin{enumerate}[leftmargin=.5in,label=(\alph*)]
		\setcounter{enumi}{4}
		\item{\label{enum::a10-e}
			\[
			\begin{array}{r@{}l@{}c@{}r@{}l@{}c@{}r@{}l@{}} \dedr{\cpolmsgsend}&\sosrule{}{ (x!z \Seq p, H, H')  \trans{x!z} ( p, H, H')}\\
			\dedr{\llfloor} & \sosrule{}{ ((x!z \Seq p) \llfloor q, H, H')  \trans{x!z} ( p \Par q, H, H')}
			\end{array} 
			\] 
		}
	\end{enumerate}
	The derivations of $x ! z \Seq (p \Par q)$ are as follows:
	\begin{enumerate}[leftmargin=.5in,label=(\alph*)]
		\setcounter{enumi}{5}
		\item{\label{enum::a10-f}
			\[
			\begin{array}{r@{}l@{}c@{}r@{}l@{}c@{}r@{}l@{}} 
			\dedr{\cpolmsgsend}
			\sosrule{}{ (x!z \Seq (p \Par q),  H, H')  \trans{x!z} (p \Par q, H, H')}
			\end{array} 
			\] 
		}
	\end{enumerate}
	
	As demonstrated in \ref{enum::a10-e} and \ref{enum::a10-f}, both of the terms $(x ! z ; p) \llfloor q$ and $x ! z \Seq (p \Par q)$ initially only afford the $x!z$ transition and they converge to the same expression after taking this transition:
	\begin{equation}
	\label{eq::a10-r3}
	\begin{alignedat}{2}
	((x ! z ; p) \llfloor q, H, H') \trans{x!z}& (p \Par q, H, H') \\
	(x ! z \Seq (p \Par q), H, H') \trans{x!z}& (p \Par q, H, H')
	\end{alignedat}
	\end{equation}
}

	\case{4}{$a \triangleq \recp{x,z}$}{
	
	The derivations of $(\recp{x,z} \Seq p) \llfloor q$ are as follows:
	\begin{enumerate}[leftmargin=.5in,label=(\alph*)]
		\setcounter{enumi}{6}
		\item{\label{enum::a10-g}
			\[
			\begin{array}{r@{}l@{}c@{}r@{}l@{}c@{}r@{}l@{}} \dedr{\recp{x,z}}&\sosrule{}{ (\recp{x,z} \Seq p, H, H')  \trans{\rec{x,z}} ( p, H, H')}\\
			\dedr{\llfloor} & \sosrule{}{ ((\recp{x,z} \Seq p) \llfloor q, H, H')  \trans{\rec{x,z}} ( p \Par q, H, H')}
			\end{array} 
			\] 
		}
	\end{enumerate}
	The derivations of $\recp{x, z} \Seq (p \Par q)$ are as follows:
	\begin{enumerate}[leftmargin=.5in,label=(\alph*)]
		\setcounter{enumi}{7}
		\item{\label{enum::a10-h}
			\[
			\begin{array}{r@{}l@{}c@{}r@{}l@{}c@{}r@{}l@{}} 
			\dedr{\recp{x,z}}
			\sosrule{}{ (\recp{x,z} \Seq (p \Par q),  H, H')  \trans{\rec{x,z}} (p \Par q, H, H')}
			\end{array} 
			\] 
		}
	\end{enumerate}
	
	As demonstrated in \ref{enum::a10-g} and \ref{enum::a10-h}, both of the terms $(\recp{x,z} ; p) \llfloor q$ and $\recp{x,z} \Seq (p \Par q)$ initially only afford the $\rec{x,z}$ transition and they converge to the same expression after taking this transition:
	\begin{equation}
	\label{eq::a10-r4}
	\begin{alignedat}{2}
	((\recp{x,z} ; p) \llfloor q, H, H') \trans{\rec{x,z}}& (p \Par q, H, H') \\
	(\recp{x,z} \Seq (p \Par q), H, H') \trans{\rec{x,z}}& (p \Par q, H, H')
	\end{alignedat}
	\end{equation}
}

	\end{caseof}
	Therefore, by (\ref{eq::a10-r1}), (\ref{eq::a10-r2}), (\ref{eq::a10-r3}) and (\ref{eq::a10-r4}) it is straightforward to conclude that the following holds:
	\begin{equation}
	((a ; p) \llfloor q) \sim (a \Seq (p \Par q))
	\end{equation}
}

	\item{
	Axiom under consideration:
	\begin{equation}
	(p \oplus q) \llfloor r \equiv (p \llfloor r) \oplus (q \llfloor r) \quad (A11)
	\end{equation}
	for $p, q, r \in {\DNK}$. According to the semantic rules of {\DNK}, the following are the possible transitions that can initially occur in the terms $(p \oplus q) \llfloor r$ and $(p \llfloor r) \oplus (q \llfloor r)$:
	\begin{gather*}
	\begin{cases}
	(1)\;(p, H_0, H'_0)  \trans{\gamma} (p', H_1, H'_1)\\
	(2)\;(q, H_0, H'_0)  \trans{\gamma} (q', H_1, H'_1)
	\end{cases}
	\end{gather*}
	$\gamma ::= (\sigma, \sigma') \mid x!z \mid x?z \mid {\rec{x,z}}$
	\begin{caseof}
		\case{1}{$(p, H_0, H'_0)  \trans{\gamma} (p', H_1, H'_1)$}{
			
			The derivations of $(p \oplus q) \llfloor r$ are as follows:
			\begin{enumerate}[leftmargin=.5in,label=(\alph*)]
				\item{\label{enum::a11-a}
					\[
					\begin{array}{r@{}l@{}c@{}r@{}l@{}c@{}r@{}l@{}}
					\dedr{\cpoloplusl}&\sosrule{(p, H_0, H'_0)  \trans{\gamma} (p', H_1, H'_1)  }{ (p \oplus q, H_0, H'_0)  \trans{\gamma} ( p', H_1, H'_1)}\\
					\dedr{\llfloor}&\sosrule{}{ ((p \oplus q) \llfloor r, H_0, H'_0)  \trans{\gamma} ( p' \Par r, H_1, H'_1)}
					\end{array} 
					\] 
				}
			\end{enumerate}
			The derivations of $(p \llfloor r) \oplus (q \llfloor r)$ are as follows:
			\begin{enumerate}[leftmargin=.5in,label=(\alph*)]
				\setcounter{enumi}{1}
				\item{\label{enum::a11-b}
					\[
					\begin{array}{r@{}l@{}c@{}r@{}l@{}c@{}r@{}l@{}}
					\dedr{\llfloor}&
					\sosrule{(p, H_0, H'_0)  \trans{\gamma} (p', H_1, H'_1)  }{ (p \llfloor r, H_0, H'_0)  \trans{\gamma} (p' \Par r, H_1, H'_1)}\\
					\dedr{\cpoloplusl}&
					\sosrule{}{ ((p \llfloor r) \oplus (q \llfloor r), H_0, H'_0)  \trans{\gamma} (p' \Par r, H_1, H'_1)}
					\end{array} 
					\] 
				}
			\end{enumerate}
			
			As demonstrated in \ref{enum::a11-a} and \ref{enum::a11-b}, if $(p, H_0, H'_0) \trans{\gamma} (p', H_1, H'_1)$ holds then both of the terms $(p \oplus q) \llfloor r $ and $(p \llfloor r) \oplus (q \llfloor r)$ converge to the same expression with the $\gamma$ transition:
			\begin{equation}
			\label{eq::a11-r1}
			\begin{alignedat}{2}
			((p \oplus q) \llfloor r, H_0, H'_0) \trans{\gamma}& (p' \Par r, H_1, H'_1)\\
			((p \llfloor r) \oplus (q \llfloor r), H_0, H'_0) \trans{\gamma}& (p' \Par r, H_1, H'_1)
			\end{alignedat}
			\end{equation}
		}
	
		\case{2}{$(q, H_0, H'_0)  \trans{\gamma} (q', H_1, H'_1)$}{
		
		The derivations of $(p \oplus q) \llfloor r$ are as follows:
		\begin{enumerate}[leftmargin=.5in,label=(\alph*)]
			\setcounter{enumi}{2}
			\item{\label{enum::a11-c}
				\[
				\begin{array}{r@{}l@{}c@{}r@{}l@{}c@{}r@{}l@{}}
				\dedr{\cpoloplusr}&\sosrule{(q, H_0, H'_0)  \trans{\gamma} (q', H_1, H'_1)  }{ (p \oplus q, H_0, H'_0)  \trans{\gamma} ( q', H_1, H'_1)}\\
				\dedr{\llfloor}&\sosrule{}{ ((p \oplus q) \llfloor r, H_0, H'_0)  \trans{\gamma} (q' \Par r, H_1, H'_1)}
				\end{array} 
				\] 
			}
		\end{enumerate}
		The derivations of $(p \llfloor r) \oplus (q \llfloor r)$ are as follows:
		\begin{enumerate}[leftmargin=.5in,label=(\alph*)]
			\setcounter{enumi}{3}
			\item{\label{enum::a11-d}
				\[
				\begin{array}{r@{}l@{}c@{}r@{}l@{}c@{}r@{}l@{}}
				\dedr{\llfloor}&
				\sosrule{(q, H_0, H'_0)  \trans{\gamma} (q', H_1, H'_1)  }{ (q \llfloor r, H_0, H'_0)  \trans{\gamma} (q' \Par r, H_1, H'_1)}\\
				\dedr{\cpoloplusr}&
				\sosrule{}{ ((p \llfloor r) \oplus (q \llfloor r), H_0, H'_0)  \trans{\gamma} (q' \Par r, H_1, H'_1)}
				\end{array} 
				\] 
			}
		\end{enumerate}
		
		As demonstrated in \ref{enum::a11-c} and \ref{enum::a11-d}, if $(q, H_0, H'_0) \trans{\gamma} (q', H_1, H'_1)$ holds then both of the terms $(p \oplus q) \llfloor r $ and $(p \llfloor r) \oplus (q \llfloor r)$ converge to the same expression with the $\gamma$ transition:
		\begin{equation}
		\label{eq::a11-r2}
		\begin{alignedat}{2}
		((p \oplus q) \llfloor r, H_0, H'_0) \trans{\gamma}& (q' \Par r, H_1, H'_1)\\
		((p \llfloor r) \oplus (q \llfloor r), H_0, H'_0) \trans{\gamma}& (q' \Par r, H_1, H'_1)
		\end{alignedat}
		\end{equation}
	}
	\end{caseof}
	Therefore, by (\ref{eq::a11-r1}) and (\ref{eq::a11-r2}) it is straightforward to conclude that the following holds:
	\begin{equation}
	((p \oplus q) \llfloor r) \sim ((p \llfloor r) \oplus (q \llfloor r))
	\end{equation}
}

	\item{
	Axiom under consideration:
	\begin{equation}
	(p \oplus q) \mid r \equiv (p \mid r) \oplus (q \mid r) \quad (A13)
	\end{equation}
	for $p, q, r \in {\DNK}$. According to the semantic rules of {\DNK}, the following are the possible transitions that can initially occur in the terms $(p \oplus q) \mid r$ and $(p \mid r) \oplus (q \mid r)$:
	\begin{gather*}
	\begin{cases}
	(1)\;(p, H, H')  \trans{x!z} (p', H, H')\quad &(r, H, H')  \trans{x?z} (r', H, H')\\
	(2)\;(p, H, H')  \trans{x?z} (p', H, H')\quad &(r, H, H')  \trans{x!z} (r', H, H')\\
	(3)\;(q, H, H')  \trans{x!z} (q', H, H')\quad &(r, H, H')  \trans{x?z} (r', H, H')\\
	(4)\;(q, H, H')  \trans{x?z} (q', H, H')\quad &(r, H, H')  \trans{x!z} (r', H, H')\\
	\end{cases}
	\end{gather*}
	\begin{caseof}
		\case{1}{$(p, H, H')  \trans{x!z} (p', H, H')\quad (r, H, H')  \trans{x?z} (r', H, H')$}{
			
			The derivations of $(p \oplus q) \mid r$ are as follows:
			\begin{enumerate}[leftmargin=.5in,label=(\alph*)]
				\item{\label{enum::a13-a}
					\[
					\begin{array}{r@{}l@{}c@{}r@{}l@{}c@{}r@{}l@{}}
					\dedr{\cpoloplusl}&\sosrule{(p, H, H')  \trans{x!z} (p', H, H')  }{ (p \oplus q, H, H')  \trans{x!z} ( p', H, H')} \quad \quad & & \sosrule{}{(r, H, H')  \trans{x?z} (r', H, H')} \\
					\dedr{\mid^{!?}}&\multicolumn{3}{@{}l@{}}{\sosrule{\myquad[28]}{ (p \oplus q) \mid r, H, H')  \trans{\rec{x, z}} ( p' \Par r', H, H')}}
					\end{array} 
					\] 
				}
			\end{enumerate}
			The derivations of $(p \mid r) \oplus (q \mid r)$ are as follows:
			\begin{enumerate}[leftmargin=.5in,label=(\alph*)]
				\setcounter{enumi}{1}
				\item{\label{enum::a13-b}
					\[
					\begin{array}{r@{}l@{}c@{}r@{}l@{}c@{}r@{}l@{}}
					\dedr{\mid^{!?}}&
					\sosrule{(p, H, H')  \trans{x!z} (p', H, H') \quad\quad (r, H, H')  \trans{x?z} (r', H, H')}{ (p \mid r, H, H')  \trans{\rec{x, z}} (p' \Par r', H, H')}\\
					\dedr{\cpoloplusl}&\sosrule{\myquad[24]~}{ ((p \mid r) \oplus (q \mid r), H, H')  \trans{\rec{x, z}} ( p' \Par r', H, H')}
					\end{array} 
					\] 
				}
			\end{enumerate}
			
			As demonstrated in \ref{enum::a13-a} and \ref{enum::a13-b}, if $(p, H, H')  \trans{x!z} (p', H, H')$ and $(r, H, H') \trans{x!z} (r',\allowbreak H, H')$ hold then both of the terms $(p \oplus q) \mid r$ and $(p \mid r) \oplus (q \mid r)$ converge to the same expression with the $\rec{x, z}$ transition:
			\begin{equation}
			\label{eq::a13-r1}
			\begin{alignedat}{2}
			((p \oplus q) \mid r, H, H') \trans{\rec{x, z}}& (p' \Par r', H, H')\\
			((p \mid r) \oplus (q \mid r), H, H') \trans{\rec{x, z}}& (p' \Par r', H, H')
			\end{alignedat}
			\end{equation}
		}
	
		\case{2}{$(p, H, H')  \trans{x?z} (p', H, H')\quad (r, H, H')  \trans{x!z} (r', H, H')$}{
		
		The derivations of $(p \oplus q) \mid r$ are as follows:
		\begin{enumerate}[leftmargin=.5in,label=(\alph*)]
			\setcounter{enumi}{2}
			\item{\label{enum::a13-c}
				\[
				\begin{array}{r@{}l@{}c@{}r@{}l@{}c@{}r@{}l@{}}
				\dedr{\cpoloplusl}&\sosrule{(p, H, H')  \trans{x?z} (p', H, H')  }{ (p \oplus q, H, H')  \trans{x?z} ( p', H, H')} \quad \quad & & \sosrule{}{(r, H, H')  \trans{x!z} (r', H, H')} \\
				\dedr{\mid^{?!}}&\multicolumn{3}{@{}l@{}}{\sosrule{\myquad[28]}{ (p \oplus q) \mid r, H, H')  \trans{\rec{x, z}} ( p' \Par r', H, H')}}
				\end{array} 
				\] 
			}
		\end{enumerate}
		The derivations of $(p \mid r) \oplus (q \mid r)$ are as follows:
		\begin{enumerate}[leftmargin=.5in,label=(\alph*)]
			\setcounter{enumi}{3}
			\item{\label{enum::a13-d}
				\[
				\begin{array}{r@{}l@{}c@{}r@{}l@{}c@{}r@{}l@{}}
				\dedr{\mid^{?!}}&
				\sosrule{(p, H, H')  \trans{x?z} (p', H, H') \quad\quad (r, H, H')  \trans{x!z} (r', H, H')}{ (p \mid r, H, H')  \trans{\rec{x, z}} (p' \Par r', H, H')}\\
				\dedr{\cpoloplusl}&\sosrule{\myquad[24]~}{ ((p \mid r) \oplus (q \mid r), H, H')  \trans{\rec{x, z}} ( p' \Par r', H, H')}
				\end{array} 
				\] 
			}
		\end{enumerate}
		
		As demonstrated in \ref{enum::a13-c} and \ref{enum::a13-d}, if $(p, H, H')  \trans{x!z} (p', H, H')$ and $(r, H, H') \trans{x?z} (r',\allowbreak H, H')$ hold then both of the terms $(p \oplus q) \mid r$ and $(p \mid r) \oplus (q \mid r)$ converge to the same expression with the $\rec{x, z}$ transition:
		\begin{equation}
		\label{eq::a13-r2}
		\begin{alignedat}{2}
		((p \oplus q) \mid r, H, H') \trans{\rec{x, z}}& (p' \Par r', H, H')\\
		((p \mid r) \oplus (q \mid r), H, H') \trans{\rec{x, z}}& (p' \Par r', H, H')
		\end{alignedat}
		\end{equation}
	}

		\case{3}{$(q, H, H')  \trans{x!z} (q', H, H')\quad (r, H, H')  \trans{x?z} (r', H, H')$}{
	
	The derivations of $(p \oplus q) \mid r$ are as follows:
	\begin{enumerate}[leftmargin=.5in,label=(\alph*)]
		\setcounter{enumi}{4}
		\item{\label{enum::a13-e}
			\[
			\begin{array}{r@{}l@{}c@{}r@{}l@{}c@{}r@{}l@{}}
			\dedr{\cpoloplusr}&\sosrule{(q, H, H')  \trans{x!z} (q', H, H')  }{ (p \oplus q, H, H')  \trans{x!z} ( q', H, H')} \quad \quad & & \sosrule{}{(r, H, H')  \trans{x?z} (r', H, H')} \\
			\dedr{\mid^{!?}}&\multicolumn{3}{@{}l@{}}{\sosrule{\myquad[28]}{ (p \oplus q) \mid r, H, H')  \trans{\rec{x, z}} ( q' \Par r', H, H')}}
			\end{array} 
			\] 
		}
	\end{enumerate}
	The derivations of $(p \mid r) \oplus (q \mid r)$ are as follows:
	\begin{enumerate}[leftmargin=.5in,label=(\alph*)]
		\setcounter{enumi}{5}
		\item{\label{enum::a13-f}
			\[
			\begin{array}{r@{}l@{}c@{}r@{}l@{}c@{}r@{}l@{}}
			\dedr{\mid^{!?}}&
			\sosrule{(q, H, H')  \trans{x!z} (q', H, H') \quad\quad (r, H, H')  \trans{x?z} (r', H, H')}{ (q \mid r, H, H')  \trans{\rec{x, z}} (q' \Par r', H, H')}\\
			\dedr{\cpoloplusr}&\sosrule{\myquad[24]~}{ ((p \mid r) \oplus (q \mid r), H, H')  \trans{\rec{x, z}} ( q' \Par r', H, H')}
			\end{array} 
			\] 
		}
	\end{enumerate}
	
	As demonstrated in \ref{enum::a13-e} and \ref{enum::a13-f}, if $(q, H, H')  \trans{x!z} (q', H, H')$ and $(r, H, H') \trans{x!z} (r',\allowbreak H, H')$ hold then both of the terms $(p \oplus q) \mid r$ and $(p \mid r) \oplus (q \mid r)$ converge to the same expression with the $\rec{x, z}$ transition:
	\begin{equation}
	\label{eq::a13-r3}
	\begin{alignedat}{2}
	((p \oplus q) \mid r, H, H') \trans{\rec{x, z}}& (q' \Par r', H, H')\\
	((p \mid r) \oplus (q \mid r), H, H') \trans{\rec{x, z}}& (q' \Par r', H, H')
	\end{alignedat}
	\end{equation}
}

\case{4}{$(q, H, H')  \trans{x?z} (q', H, H')\quad (r, H, H')  \trans{x!z} (r', H, H')$}{
	
	The derivations of $(p \oplus q) \mid r$ are as follows:
	\begin{enumerate}[leftmargin=.5in,label=(\alph*)]
		\setcounter{enumi}{6}
		\item{\label{enum::a13-g}
			\[
			\begin{array}{r@{}l@{}c@{}r@{}l@{}c@{}r@{}l@{}}
			\dedr{\cpoloplusr}&\sosrule{(q, H, H')  \trans{x?z} (q', H, H')  }{ (p \oplus q, H, H')  \trans{x?z} ( q', H, H')} \quad \quad & & \sosrule{}{(r, H, H')  \trans{x!z} (r', H, H')} \\
			\dedr{\mid^{?!}}&\multicolumn{3}{@{}l@{}}{\sosrule{\myquad[28]}{ (p \oplus q) \mid r, H, H')  \trans{\rec{x, z}} ( p' \Par r', H, H')}}
			\end{array} 
			\] 
		}
	\end{enumerate}
	The derivations of $(p \mid r) \oplus (q \mid r)$ are as follows:
	\begin{enumerate}[leftmargin=.5in,label=(\alph*)]
		\setcounter{enumi}{7}
		\item{\label{enum::a13-h}
			\[
			\begin{array}{r@{}l@{}c@{}r@{}l@{}c@{}r@{}l@{}}
			\dedr{\mid^{?!}}&
			\sosrule{(q, H, H')  \trans{x?z} (q', H, H') \quad\quad (r, H, H')  \trans{x!z} (r', H, H')}{ (q \mid r, H, H')  \trans{\rec{x, z}} (q' \Par r', H, H')}\\
			\dedr{\cpoloplusr}&\sosrule{\myquad[24]~}{ ((p \mid r) \oplus (q \mid r), H, H')  \trans{\rec{x, z}} ( q' \Par r', H, H')}
			\end{array} 
			\] 
		}
	\end{enumerate}
	
	As demonstrated in \ref{enum::a13-g} and \ref{enum::a13-h}, if $(q, H, H')  \trans{x?z} (q', H, H')$ and $(r, H, H') \trans{x!z} (r',\allowbreak H, H')$ hold then both of the terms $(p \oplus q) \mid r$ and $(p \mid r) \oplus (q \mid r)$ converge to the same expression with the $\rec{x, z}$ transition:
	\begin{equation}
	\label{eq::a13-r4}
	\begin{alignedat}{2}
	((p \oplus q) \mid r, H, H') \trans{\rec{x, z}}& (q' \Par r', H, H')\\
	((p \mid r) \oplus (q \mid r), H, H') \trans{\rec{x, z}}& (q' \Par r', H, H')
	\end{alignedat}
	\end{equation}
}
	
	\end{caseof}
	
	Therefore, by (\ref{eq::a13-r1}), (\ref{eq::a13-r2}), (\ref{eq::a13-r3}) and (\ref{eq::a13-r4}) it is straightforward to conclude that the following holds:
	\begin{equation}
	((p \oplus q) \mid r) \sim ((p \mid r) \oplus (q \mid r))
	\end{equation}
}

	\item{
	Axiom under consideration:
	\begin{equation}
	p \mid q \equiv q \mid p \quad (A14)
	\end{equation}
	for $p, q \in {\DNK}$. According to the semantic rules of {\DNK}, the following are the possible transitions that can initially occur in the terms $p \mid q$ and $q \mid p$:
	\begin{gather*}
	\begin{cases}
	(1)\;(p, H, H')  \trans{x!z} (p', H, H')\quad &(q, H, H')  \trans{x?z} (q', H, H')\\
	(2)\;(p, H, H')  \trans{x?z} (p', H, H')\quad &(q, H, H')  \trans{x!z} (q', H, H')
	\end{cases}
	\end{gather*}
	\begin{caseof}
		\case{1}{$(p, H, H')  \trans{x!z} (p', H, H')\quad (q, H, H')  \trans{x?z} (q', H, H')$}{
			
			The derivations of $p \mid q$ are as follows:
			\begin{enumerate}[leftmargin=.5in,label=(\alph*)]
				\item{\label{enum::a14-a}
					\[
					\begin{array}{r@{}l@{}c@{}r@{}l@{}c@{}r@{}l@{}}
					\dedr{\mid^{!?}}\sosrule{(p, H, H')  \trans{x!z} (p', H, H') \quad\quad (q, H, H')  \trans{x?z} (q', H, H')}{ (p \mid q, H, H')  \trans{\rec{x, z}} ( p' \Par q', H, H')}
					\end{array} 
					\] 
				}
			\end{enumerate}
			The derivations of $q \mid p$ are as follows:
			\begin{enumerate}[leftmargin=.5in,label=(\alph*)]
				\setcounter{enumi}{1}
				\item{\label{enum::a14-b}
					\[
					\begin{array}{r@{}l@{}c@{}r@{}l@{}c@{}r@{}l@{}}
					\dedr{\mid^{?!}}
					\sosrule{(q, H, H')  \trans{x?z} (q', H, H') \quad\quad (p, H, H')  \trans{x!z} (p', H, H') }{ (q \mid p, H, H')  \trans{\rec{x, z}} (q' \Par p', H, H')}
					\end{array} 
					\] 
				}
			\end{enumerate}
			
			As demonstrated in \ref{enum::a14-a} and \ref{enum::a14-b}, if $(p, H, H')  \trans{x!z} (p', H, H')$ and $(q, H, H')  \trans{x?z} (q',\allowbreak H, H')$ hold then both of the terms $p \mid q$ and $q \mid p$ are able to perform the $\rec{x, z}$ transition:
			\begin{equation}
			\label{eq::a14-r1}
			\begin{alignedat}{2}
			(p \mid q, H, H') \trans{\rec{x, z}}& (p' \Par q' , H, H')\\
			(q \mid p, H, H') \trans{\rec{x, z}}& (q' \Par p', H, H')
			\end{alignedat}
			\end{equation}
			Observe that the terms evolve into different expressions and we would now need to check if these terms are bisimilar. According to the axiom $(A6)$, the ``$\Par$'' operator is commutative. Hence, the following holds:
			\begin{equation}\label{eq::a14-r1-2}
			(p' \Par q') \sim (q' \Par p')
			\end{equation} 
		}
	
		\case{2}{$(p, H, H')  \trans{x?z} (p', H, H')\quad (q, H, H')  \trans{x!z} (q', H, H')$}{
			
			The derivations of $p \mid q$ are as follows:
			\begin{enumerate}[leftmargin=.5in,label=(\alph*)]
				\setcounter{enumi}{2}
				\item{\label{enum::a14-c}
			\[
					\begin{array}{r@{}l@{}c@{}r@{}l@{}c@{}r@{}l@{}}
					\dedr{\mid^{?!}}\sosrule{(p, H, H')  \trans{x?z} (p', H, H') \quad\quad (q, H, H')  \trans{x!z} (q', H, H')}{ (p \mid q, H, H')  \trans{\rec{x, z}} ( p' \Par q', H, H')}
					\end{array} 
					\] 
				}
			\end{enumerate}
			The derivations of $q \mid p$ are as follows:
			\begin{enumerate}[leftmargin=.5in,label=(\alph*)]
				\setcounter{enumi}{3}
				\item{\label{enum::a14-d}
					\[
					\begin{array}{r@{}l@{}c@{}r@{}l@{}c@{}r@{}l@{}}
					\dedr{\mid^{?!}}
					\sosrule{(q, H, H')  \trans{x!z} (q', H, H') \quad\quad (p, H, H')  \trans{x?z} (p', H, H') }{ (q \mid p, H, H')  \trans{\rec{x, z}} (q' \Par p', H, H')}
					\end{array} 
					\] 
				}
			\end{enumerate}
			
			As demonstrated in \ref{enum::a14-c} and \ref{enum::a14-d}, if $(p, H, H')  \trans{x!z} (p', H, H')$ and $(q, H, H')  \trans{x?z} (q',\allowbreak H, H')$ hold then both of the terms $p \mid q$ and $q \mid p$ are able to perform the $\rec{x, z}$ transition:
			\begin{equation}
			\label{eq::a14-r2}
			\begin{alignedat}{2}
			(p \mid q, H, H') \trans{\rec{x, z}}& (p' \Par q' , H, H')\\
			(q \mid p, H, H') \trans{\rec{x, z}}& (q' \Par p', H, H')
			\end{alignedat}
			\end{equation}
			Observe that the terms evolve into different expressions and we would now need to check if these terms are bisimilar. According to the axiom $(A6)$, the ``$\Par$'' operator is commutative. Hence, the following holds:
        	\begin{equation}\label{eq::a14-r2-2}
			(p' \Par q') \sim (q' \Par p')
			\end{equation} 
		}
	\end{caseof}
Therefore, by (\ref{eq::a14-r1}), (\ref{eq::a14-r1-2}), (\ref{eq::a14-r2}) and (\ref{eq::a14-r2-2}) it is straightforward to conclude that the following holds:
\begin{equation}
(p \mid q) \sim (q \mid p)
\end{equation}
}

	\item{
	Axiom under consideration:
	\begin{equation}
	p \mid q \equiv \bot~[owise] \quad (A15)
	\end{equation}
	for $p, q \in {\DNK}$. Observe that the $[owise]$ condition implies that $p$ cannot be of shape $x?z \Seq r$ when $q$ is of shape $x!z \Seq r'$, as otherwise the axiom $(A12)$ would become applicable (or vice versa due to commutativity of $\mid$). Furthermore, note that if $p$ or $q$ contains operators other than sequential composition ($\Seq$), that is the operators ``$\oplus$'', ``$\llfloor$'' and ``$\Par$'', then the axioms such as $(A8)$, $(A10)$ and $(A13)$ would become applicable and hence the $[owise]$ condition would not be met. The axiom $(A15)$ can be written explicitly as follows:
	\begin{alignat}{4}
	(z \Seq p) \mid q &\equiv& \bot\\
	(x?z \Seq p) \mid (x'?z' \Seq q) &\equiv&\, \bot \\
	(x!z \Seq p) \mid (x'!z' \Seq q) &\equiv& \bot \\
	(x?z \Seq p) \mid (x'!z' \Seq q) &\equiv& \bot &~ \textnormal{if}~ x \neq x'~\textnormal{or}~z\neq z'\\
	(\recp{x, z} \Seq p) \mid q &\equiv& \bot
	\end{alignat} 
	for $z, z' \in {\NetKATnoDup}$. Observe that the term $\bot$ does not afford any transition and none of the terms on the left hand side of the equivalences above afford any transition as well. Therefore, the following holds if the $[owise]$ condition is met:
	\begin{equation}
	(p \mid q) \sim \bot 
	\end{equation}
}

	\item{
	Axiom under consideration:
	\begin{equation}
	\delta_{{\mathcal{L}}}(\bot) \equiv \bot \quad (\delta_\bot)
	\end{equation}
	Observe that according to the semantic rules of {\DNK}, the terms $\delta_{{\mathcal{L}}}(\bot)$ and $\bot$ do not afford any transition. Hence, the following trivially holds:
	\begin{equation}
	(\delta_{{\mathcal{L}}}(\bot)) \sim \bot
	\end{equation}
}

	\item{
	Axiom under consideration:
	\begin{equation}
	\delta_{{\mathcal{L}}}(\at \Seq p) \equiv \at \Seq \delta_{{\mathcal{L}}}(p)~{\textnormal{if}}~ \at \not \in {\mathcal{L}} \quad (\delta_{\Seq})
	\end{equation}
		for $\at \in \{\alpha \cdot \pi, x?z, x!z, \recp{x,z} \}$, $z \in {\NetKATnoDup}$ and $p \in {\DNK}$. In the following, we make a case analysis on the shape of $\at$ and show that the terms $\delta_{{\mathcal{L}}}(\at \Seq p)$ and $\at \Seq \delta_{{\mathcal{L}}}(p)$ are bisimilar. In our analysis we always assume that the condition $\at \not \in {\mathcal{L}}$ is satisfied, as otherwise this axiom is not applicable.
	\begin{caseof}
		\case{1}{$\at \triangleq \alpha \cdot \pi$}{
			
			Consider an arbitrary but fixed network packet $\sigma$, let $S_{\alpha\pi} \triangleq \llbracket \alpha \cdot \pi \rrbracket(\sigma \!\!::\!\! \langle \rangle)$. The derivations of $\delta_{{\mathcal{L}}}((\alpha \cdot \pi) \Seq p)$ are as follows:
			\begin{enumerate}[leftmargin=.5in,label=(\alph*)]
				\item{\label{enum::d1-a}
					\[
					\begin{array}{r@{}r@{}l@{}c@{}r@{}l@{}c@{}r@{}l@{}}
					\textnormal{For all } \sigma' \in S_{\alpha\pi}:\quad\quad & \dedr{\cpolseqsucc}&\sosrule{}{ ((\alpha \cdot \pi) \Seq p, \sigma :: H, H')  \trans{(\sigma, \sigma')} ( p, H, \sigma' :: H')}\\
					& \dedr{{\delta^{}}} & \sosrule{}{ (\delta_{{\mathcal{L}}}((\alpha \cdot \pi) \Seq p), \sigma :: H, H')  \trans{(\sigma, \sigma')} ( \delta_{{\mathcal{L}}}(p), H, \sigma' :: H)}
					\end{array} 
					\] 
				}
			\end{enumerate}
			The derivations of $(\alpha \cdot \pi) \Seq \delta_{{\mathcal{L}}}(p)$ are as follows:
			\begin{enumerate}[leftmargin=.5in,label=(\alph*)]
				\setcounter{enumi}{1}
				\item{\label{enum::d1-b}
					\[
					\begin{array}{r@{}l@{}c@{}r@{}l@{}c@{}r@{}l@{}}
					\textnormal{For all } \sigma' \in S_{\alpha \pi}: \quad \dedr{\cpolseqsucc}
					\sosrule{}{ ((\alpha \cdot \pi) \Seq \delta_{{\mathcal{L}}}(p), \sigma :: H, H')  \trans{(\sigma, \sigma')} (\delta_{{\mathcal{L}}}(p), H, \sigma' :: H')}
					\end{array} 
					\] 
				}
			\end{enumerate}
			
			As demonstrated in \ref{enum::d1-a} and \ref{enum::d1-b}, both of the terms $\delta_{{\mathcal{L}}}((\alpha \cdot \pi) \Seq p)$ and $(\alpha \cdot \pi) \Seq \delta_{{\mathcal{L}}}(p)$ initially afford the same set of transitions of shape $(\sigma, \sigma')$ and they converge to the same expression after taking these transitions:
			\begin{equation}
			\label{eq::d1-r1}
			\begin{alignedat}{2}
			(\delta_{{\mathcal{L}}}((\alpha \cdot \pi) \Seq p), \sigma :: H, H') \trans{(\sigma, \sigma')}& (\delta_{{\mathcal{L}}}(p), H, \sigma' :: H') \\
			((\alpha \cdot \pi) \Seq \delta_{{\mathcal{L}}}(p), \sigma :: H, H') \trans{(\sigma, \sigma')}& (\delta_{{\mathcal{L}}}(p), H, \sigma' :: H')
			\end{alignedat}
			\end{equation}
		}
	
		\case{2}{$\at \triangleq x?z$}{
		
		The derivations of $\delta_{{\mathcal{L}}}(x?z \Seq p)$ are as follows:
		\begin{enumerate}[leftmargin=.5in,label=(\alph*)]
			\setcounter{enumi}{2}
			\item{\label{enum::d1-c}
				\[
				\begin{array}{r@{}l@{}c@{}r@{}l@{}c@{}r@{}l@{}} \dedr{\cpolmsgrec}&\sosrule{}{ (x?z \Seq p, H, H')  \trans{x?z} ( p, H, H')}\\
				\dedr{{\delta^{}}} & \sosrule{}{ (\delta_{{\mathcal{L}}}(x?z \Seq p), H, H')  \trans{x?z} ( \delta_{{\mathcal{L}}}(p), H,  H')}
				\end{array} 
				\] 
			}
		\end{enumerate}
		The derivations of $x?z \Seq \delta_{{\mathcal{L}}}(p)$ are as follows:
		\begin{enumerate}[leftmargin=.5in,label=(\alph*)]
			\setcounter{enumi}{3}
			\item{\label{enum::d1-d}
				\[
				\begin{array}{l@{}c@{}r@{}l@{}c@{}r@{}l@{}}
				\dedr{\cpolmsgrec}
				\sosrule{}{ (x?z \Seq \delta_{{\mathcal{L}}}(p), H, H')  \trans{x?z} (\delta_{{\mathcal{L}}}(p), H,  H')}
				\end{array} 
				\] 
			}
		\end{enumerate}
		
		As demonstrated in \ref{enum::d1-c} and \ref{enum::d1-d}, both of the terms $\delta_{{\mathcal{L}}}(x?z \Seq p)$ and $x?z \Seq \delta_{{\mathcal{L}}}(p)$ initially only afford the $x?z$ transition and they converge to the same expression after taking this transition:
		\begin{equation}
		\label{eq::d1-r2}
		\begin{alignedat}{2}
		(\delta_{{\mathcal{L}}}(x?z \Seq p), H, H') \trans{x?z}& (\delta_{{\mathcal{L}}}(p), H, H') \\
		(x?z \Seq \delta_{{\mathcal{L}}}(p), H, H') \trans{x?z}& (\delta_{{\mathcal{L}}}(p), H, H')
		\end{alignedat}
		\end{equation}
	}

	\case{3}{$\at \triangleq x!z$}{
	
	The derivations of $\delta_{{\mathcal{L}}}(x!z \Seq p)$ are as follows:
	\begin{enumerate}[leftmargin=.5in,label=(\alph*)]
		\setcounter{enumi}{4}
		\item{\label{enum::d1-e}
			\[
			\begin{array}{r@{}l@{}c@{}r@{}l@{}c@{}r@{}l@{}} \dedr{\cpolmsgsend}&\sosrule{}{ (x!z \Seq p, H, H')  \trans{x!z} ( p, H, H')}\\
			\dedr{{\delta^{}}} & \sosrule{}{ (\delta_{{\mathcal{L}}}(x!z \Seq p), H, H')  \trans{x!z} ( \delta_{{\mathcal{L}}}(p), H, H')}
			\end{array} 
			\] 
		}
	\end{enumerate}
	The derivations of $x!z \Seq \delta_{{\mathcal{L}}}(p)$ are as follows:
	\begin{enumerate}[leftmargin=.5in,label=(\alph*)]
		\setcounter{enumi}{5}
		\item{\label{enum::d1-f}
			\[
			\begin{array}{l@{}c@{}r@{}l@{}c@{}r@{}l@{}}
			\dedr{\cpolmsgsend}
			\sosrule{}{ (x!z \Seq \delta_{{\mathcal{L}}}(p), H, H')  \trans{x!z} (\delta_{{\mathcal{L}}}(p), H, H')}
			\end{array} 
			\] 
		}
	\end{enumerate}
	
	As demonstrated in \ref{enum::d1-e} and \ref{enum::d1-f}, both of the terms $\delta_{{\mathcal{L}}}(x!z \Seq p)$ and $x!z \Seq \delta_{{\mathcal{L}}}(p)$ initially only afford the $x!z$ transition and they converge to the same expression after taking this transition:
	\begin{equation}
	\label{eq::d1-r3}
	\begin{alignedat}{2}
	(\delta_{{\mathcal{L}}}(x!z \Seq p), H, H') \trans{x!z}& (\delta_{{\mathcal{L}}}(p), H, H') \\
	(x!z \Seq \delta_{{\mathcal{L}}}(p), \sigma :: H, H') \trans{x!z}& (\delta_{{\mathcal{L}}}(p), H, H')
	\end{alignedat}
	\end{equation}
}

	\case{4}{$\at \triangleq \recp{x, z}$}{
	
	The derivations of $\delta_{{\mathcal{L}}}(\recp{x, z} \Seq p)$ are as follows:
	\begin{enumerate}[leftmargin=.5in,label=(\alph*)]
		\setcounter{enumi}{6}
		\item{\label{enum::d1-g}
			\[
			\begin{array}{r@{}l@{}c@{}r@{}l@{}c@{}r@{}l@{}} 
			\dedr{\recp{x,z}}&\sosrule{}{ (\recp{x, z} \Seq p, H, H')  \trans{\rec{x, z}} ( p, H,  H')}\\
			\dedr{{\delta^{}}} & \sosrule{}{ (\delta_{{\mathcal{L}}}(\recp{x, z} \Seq p), H, H')  \trans{\rec{x, z}} ( \delta_{{\mathcal{L}}}(p), H, H')}
			\end{array} 
			\] 
		}
	\end{enumerate}
	The derivations of $\recp{x,z} \Seq \delta_{{\mathcal{L}}}(p)$ are as follows:
	\begin{enumerate}[leftmargin=.5in,label=(\alph*)]
		\setcounter{enumi}{7}
		\item{\label{enum::d1-h}
			\[
			\begin{array}{l@{}c@{}r@{}l@{}c@{}r@{}l@{}}
			\dedr{\recp{x,z}}
			\sosrule{}{ (\recp{x, z} \Seq \delta_{{\mathcal{L}}}(p), H, H')  \trans{\rec{x, z}} (\delta_{{\mathcal{L}}}(p), H, H')}
			\end{array} 
			\] 
		}
	\end{enumerate}
	
	As demonstrated in \ref{enum::d1-g} and \ref{enum::d1-h}, both of the terms $\delta_{{\mathcal{L}}}(\recp{x, z} \Seq p)$ and $\recp{x, z} \Seq \delta_{{\mathcal{L}}}(p)$ initially only afford the $\rec{x, z}$ transition and they converge to the same expression after taking this transition:
	\begin{equation}
	\label{eq::d1-r4}
	\begin{alignedat}{2}
	(\delta_{{\mathcal{L}}}(\recp{x, z} \Seq p), H, H') \trans{\rec{x, z}}& (\delta_{{\mathcal{L}}}(p), H, H') \\
	(\recp{x, z} \Seq \delta_{{\mathcal{L}}}(p), H, H') \trans{\rec{x, z}}& (\delta_{{\mathcal{L}}}(p), H, H')
	\end{alignedat}
	\end{equation}
}
	\end{caseof}

	Therefore, if $\at \not \in {\mathcal{L}}$, by (\ref{eq::d1-r1}), (\ref{eq::d1-r2}), (\ref{eq::d1-r3}) and (\ref{eq::d1-r4}) it is straightforward to conclude that the following holds:
	\begin{equation}
	(\delta_{{\mathcal{L}}}(\at \Seq p)) \sim (\at \Seq \delta_{{\mathcal{L}}}(p))
	\end{equation}
	}

	\item{
		Axiom under consideration:
		\begin{equation}
		\delta_{{\mathcal{L}}}(\at \Seq p) \equiv \bot ~{\textnormal{if}}~ \at \in {\mathcal{L}} \quad (\delta_{\Seq}^{\bot})
		\end{equation}
		Observe that according to the semantic rules of {\DNK}, the term $\bot$ do not afford any transition. Furthermore, if the condition $\at \in {\mathcal{L}}$ is satisfied, then the term $\delta_{{\mathcal{L}}}(\at \Seq p)$ also does not afford any transition. Therefore, if $\at \in {\mathcal{L}}$, the following trivially holds:
		\begin{equation}
		\delta_{{\mathcal{L}}}(\at \Seq p) \sim \bot
		\end{equation}
	}

	\item{
	Axiom under consideration:
	\begin{equation}
	\delta_{{\mathcal{L}}}(p\oplus q) \equiv \delta_{{\mathcal{L}}}(p) \oplus \delta_{{\mathcal{L}}}(q) \quad (\delta_{\oplus})
	\end{equation}
	for $p, q \in {\DNK}$. According to the semantic rules of {\DNK}, the following are the possible transitions that can initially occur in the terms $\delta_{{\mathcal{L}}}(p\oplus q)$ and $\delta_{{\mathcal{L}}}(p) \oplus \delta_{{\mathcal{L}}}(q) $:
	\begin{gather*}
	\begin{cases}
	(1)\;(p, H_0, H'_0)  \trans{\gamma} (p', H_1, H'_1)\\
	(2)\;(q, H_0, H'_0)  \trans{\gamma} (q', H_1, H'_1)
	\end{cases}
	\end{gather*}
	$\gamma ::= (\sigma, \sigma') \mid x!z \mid x?z \mid {\rec{x,z}}$
	\begin{caseof}
	\case{1}{$(p, H_0, H'_0)  \trans{\gamma} (p', H_1, H'_1)$}{
			
			The derivations of $\delta_{{\mathcal{L}}}(p\oplus q)$ are as follows:
			\begin{enumerate}[leftmargin=.5in,label=(\alph*)]
				\item{\label{enum::d2-a}
					\[
					\begin{array}{r@{}l@{}c@{}r@{}l@{}c@{}r@{}l@{}}
					\dedr{\cpoloplusl}&\sosrule{(p, H_0, H'_0)  \trans{\gamma} (p', H_1, H'_1)  }{ (p \oplus q, H_0, H'_0)  \trans{\gamma} ( p', H_1, H'_1)}\\
					\dedr{{\delta^{}}}&\sosrule{}{ (\delta_{{\mathcal{L}}}(p\oplus q), H_0, H'_0)  \trans{\gamma} ( \delta_{{\mathcal{L}}}(p'), H_1, H'_1)}
					\end{array} 
					\] 
				}
			\end{enumerate}
			The derivations of $\delta_{{\mathcal{L}}}(p) \oplus \delta_{{\mathcal{L}}}(q)$ are as follows:
			\begin{enumerate}[leftmargin=.5in,label=(\alph*)]
				\setcounter{enumi}{1}
				\item{\label{enum::d2-b}
					\[
					\begin{array}{r@{}l@{}c@{}r@{}l@{}c@{}r@{}l@{}}
					\dedr{{\delta^{}}} &
					\sosrule{(p, H_0, H'_0)  \trans{\gamma} (p', H_1, H'_1)  }{ (\delta_{{\mathcal{L}}}(p), H_0, H'_0)  \trans{\gamma} (\delta_{{\mathcal{L}}}(p'), H_1, H'_1)}\\
					\dedr{\cpoloplusl} &
					\sosrule{}{ (\delta_{{\mathcal{L}}}(p) \oplus \delta_{{\mathcal{L}}}(q), H_0, H'_0)  \trans{\gamma} (\delta_{{\mathcal{L}}}(p'), H_1, H'_1)}
					\end{array} 
					\] 
				}
			\end{enumerate}
			
			As demonstrated in \ref{enum::d2-a} and \ref{enum::d2-b}, if $(p, H_0, H'_0)  \trans{\gamma} (p', H_1, H'_1)$ holds then both of the terms $\delta_{{\mathcal{L}}}(p\oplus q)$ and $\delta_{{\mathcal{L}}}(p) \oplus \delta_{{\mathcal{L}}}(q)$ converge to the same expression with the $\gamma$ transition:
			\begin{equation}
			\label{eq::d2-r1}
			\begin{alignedat}{2}
			(\delta_{{\mathcal{L}}}(p\oplus q), H_0, H'_0) \trans{\gamma}& (\delta_{{\mathcal{L}}}(p'), H_1, H'_1)\\
			(\delta_{{\mathcal{L}}}(p) \oplus \delta_{{\mathcal{L}}}(q), H_0, H'_0) \trans{\gamma}& (\delta_{{\mathcal{L}}}(p'), H_1, H'_1)
			\end{alignedat}
			\end{equation}
		}
	
		\case{2}{$(q, H_0, H'_0)  \trans{\gamma} (q', H_1, H'_1)$}{
		
		The derivations of $\delta_{{\mathcal{L}}}(p\oplus q)$ are as follows:
		\begin{enumerate}[leftmargin=.5in,label=(\alph*)]
			\setcounter{enumi}{2}
			\item{\label{enum::d2-c}
				\[
				\begin{array}{r@{}l@{}c@{}r@{}l@{}c@{}r@{}l@{}}
				\dedr{\cpoloplusr}&\sosrule{(q, H_0, H'_0)  \trans{\gamma} (q', H_1, H'_1)  }{ (p \oplus q, H_0, H'_0)  \trans{\gamma} ( q', H_1, H'_1)}\\
				\dedr{{\delta^{}}}&\sosrule{}{ (\delta_{{\mathcal{L}}}(p\oplus q), H_0, H'_0)  \trans{\gamma} ( \delta_{{\mathcal{L}}}(q'), H_1, H'_1)}
				\end{array} 
				\] 
			}
		\end{enumerate}
		The derivations of $\delta_{{\mathcal{L}}}(p) \oplus \delta_{{\mathcal{L}}}(q)$ are as follows:
		\begin{enumerate}[leftmargin=.5in,label=(\alph*)]
			\setcounter{enumi}{3}
			\item{\label{enum::d2-d}
				\[
				\begin{array}{r@{}l@{}c@{}r@{}l@{}c@{}r@{}l@{}}
				\dedr{{\delta^{}}} &
				\sosrule{(q, H_0, H'_0)  \trans{\gamma} (q', H_1, H'_1)  }{ (\delta_{{\mathcal{L}}}(q), H_0, H'_0)  \trans{\gamma} (\delta_{{\mathcal{L}}}(q'), H_1, H'_1)}\\
				\dedr{\cpoloplusr} &
				\sosrule{}{ (\delta_{{\mathcal{L}}}(p) \oplus \delta_{{\mathcal{L}}}(q), H_0, H'_0)  \trans{\gamma} (\delta_{{\mathcal{L}}}(q'), H_1, H'_1)}
				\end{array} 
				\] 
			}
		\end{enumerate}
		
		As demonstrated in \ref{enum::d2-c} and \ref{enum::d2-d}, if $(q, H_0, H'_0)  \trans{\gamma} (q', H_1, H'_1)$ holds then both of the terms $\delta_{{\mathcal{L}}}(p\oplus q)$ and $\delta_{{\mathcal{L}}}(p) \oplus \delta_{{\mathcal{L}}}(q)$ converge to the same expression with the $\gamma$ transition:
		\begin{equation}
		\label{eq::d2-r2}
		\begin{alignedat}{2}
		(\delta_{{\mathcal{L}}}(p\oplus q), H_0, H'_0) \trans{\gamma}& (\delta_{{\mathcal{L}}}(q'), H_1, H'_1)\\
		(\delta_{{\mathcal{L}}}(p) \oplus \delta_{{\mathcal{L}}}(q), H_0, H'_0) \trans{\gamma}& (\delta_{{\mathcal{L}}}(q'), H_1, H'_1)
		\end{alignedat}
		\end{equation}
	}
	\end{caseof}
	Therefore, by (\ref{eq::d2-r1}), and (\ref{eq::d2-r2}) it is straightforward to conclude that the following holds:
	\begin{equation}
	(\delta_{{\mathcal{L}}}(p\oplus q)) \sim (\delta_{{\mathcal{L}}}(p) \oplus \delta_{{\mathcal{L}}}(q))
	\end{equation}
}

	\item{
	Axiom under consideration:
	\begin{equation}
	\pi_0(p) \equiv \bot \quad (\Pi_0)
	\end{equation}
	for $p \in {\DNK}$. Observe that according to the semantic rules of {\DNK}, the terms $\pi_0(p)$ and $\bot$ do not afford any transition. Hence, the following trivially holds:
	\begin{equation}
	\pi_0(p) \sim \bot
	\end{equation}
	}

	\item{
	Axiom under consideration:
	\begin{equation}
	\pi_{n}(\bot) \equiv \bot \quad (\Pi_\bot)
	\end{equation}
	for $n \in \mathbb{N}$.  Observe that according to the semantic rules of {\DNK}, the terms $\pi_0(\bot)$ and $\bot$ do not afford any transition. Hence, the following trivially holds:
	\begin{equation}
	\pi_{n}(\bot) \sim \bot
	\end{equation}
	}

	\item{
		Axiom under consideration:
		\begin{equation}
		\pi_{n+1}(\at \Seq p) \equiv \at \Seq \pi_{n}(p) \quad (\Pi_{\Seq})
		\end{equation}
		for $\at \in \{\alpha \cdot \pi, x?z, x!z, \recp{x,z} \}$, $z \in {\NetKATnoDup}$, $n \in \mathbb{N}$ and $p \in {\DNK}$. In the following, we make a case analysis on the shape of $\at$ and show that the terms $\pi_{n+1}(\at \Seq p)$ and $\at \Seq \pi_{n}(p)$ are bisimilar. 
		\begin{caseof}
			\case{1}{$\at \triangleq \alpha \cdot \pi$}{
				
				Consider an arbitrary but fixed network packet $\sigma$, let $S_{\alpha\pi} \triangleq \llbracket \alpha \cdot \pi \rrbracket(\sigma \!\!::\!\! \langle \rangle)$. The derivations of $\pi_{n+1}((\alpha \cdot \pi) \Seq p)$ are as follows:
				\begin{enumerate}[leftmargin=.5in,label=(\alph*)]
					\item{\label{enum::p1-a}
						\[
						\begin{array}{r@{}r@{}l@{}c@{}r@{}l@{}c@{}r@{}l@{}}
						\textnormal{For all } \sigma' \in S_{\alpha\pi}:\quad\quad & \dedr{\cpolseqsucc}&\sosrule{}{ ((\alpha \cdot \pi) \Seq p, \sigma :: H, H')  \trans{(\sigma, \sigma')} ( p, H, \sigma' :: H')}\\
						& \dedr{{\pi^{}}} & \sosrule{}{ (\pi_{n+1}((\alpha \cdot \pi) \Seq p), \sigma :: H, H')  \trans{(\sigma, \sigma')} ( \pi_{n}(p), H, \sigma' :: H)}
						\end{array} 
						\] 
					}
				\end{enumerate}
				The derivations of $(\alpha \cdot \pi) \Seq \pi_{n}(p)$ are as follows:
				\begin{enumerate}[leftmargin=.5in,label=(\alph*)]
					\setcounter{enumi}{1}
					\item{\label{enum::p1-b}
						\[
						\begin{array}{r@{}l@{}c@{}r@{}l@{}c@{}r@{}l@{}}
						\textnormal{For all } \sigma' \in S_{\alpha \pi}: \quad \dedr{\cpolseqsucc}
						\sosrule{}{ ((\alpha \cdot \pi) \Seq \pi_{n}(p), \sigma :: H, H')  \trans{(\sigma, \sigma')} (\pi_{n}(p), H, \sigma' :: H')}
						\end{array} 
						\] 
					}
				\end{enumerate}
				
				As demonstrated in \ref{enum::p1-a} and \ref{enum::p1-b}, both of the terms $\pi_{n+1}((\alpha \cdot \pi) \Seq p)$ and $(\alpha \cdot \pi) \Seq \pi_{n}(p)$ initially only afford the same set of transitions of shape $(\sigma, \sigma')$ and they converge to the same expression after taking these transitions:
				\begin{equation}
				\label{eq::p1-r1}
				\begin{alignedat}{2}
				(\pi_{n+1}((\alpha \cdot \pi) \Seq p), \sigma :: H, H') \trans{(\sigma, \sigma')}& (\pi_{n}(p), H, \sigma' :: H') \\
				((\alpha \cdot \pi) \Seq \pi_{n}(p), \sigma :: H, H') \trans{(\sigma, \sigma')}& (\pi_{n}(p), H, \sigma' :: H')
				\end{alignedat}
				\end{equation}
			}
			
			\case{2}{$\at \triangleq x?z$}{
				
				The derivations of $\pi_{n+1}(x?z \Seq p)$ are as follows:
				\begin{enumerate}[leftmargin=.5in,label=(\alph*)]
					\setcounter{enumi}{2}
					\item{\label{enum::p1-c}
						\[
						\begin{array}{r@{}l@{}c@{}r@{}l@{}c@{}r@{}l@{}} \dedr{\cpolmsgrec}&\sosrule{}{ (x?z \Seq p, \sigma :: H, H')  \trans{x?z} ( p, H, H')}\\
						\dedr{{\pi^{}}} & \sosrule{}{ (\pi_{n+1}(x?z \Seq p), H, H')  \trans{x?z} ( \pi_{n}(p), H, H')}
						\end{array} 
						\] 
					}
				\end{enumerate}
				The derivations of $x?z \Seq \delta_{{\mathcal{L}}}(p)$ are as follows:
				\begin{enumerate}[leftmargin=.5in,label=(\alph*)]
					\setcounter{enumi}{3}
					\item{\label{enum::p1-d}
						\[
						\begin{array}{l@{}c@{}r@{}l@{}c@{}r@{}l@{}}
						\dedr{\cpolmsgrec}
						\sosrule{}{ (x?z \Seq \pi_{n}(p), H, H')  \trans{x?z} (\pi_{n}(p), H, H')}
						\end{array} 
						\] 
					}
				\end{enumerate}
				
				As demonstrated in \ref{enum::p1-c} and \ref{enum::p1-d}, both of the terms $\pi_{n+1}(x?z \Seq p)$ and $x?z \Seq \pi_{n}(p)$ initially only afford the $x?z$ transition and they converge to the same expression after taking this transition:
				\begin{equation}
				\label{eq::p1-r2}
				\begin{alignedat}{2}
				(\pi_{n+1}(x?z \Seq p), H, H') \trans{x?z}& (\pi_{n}(p), H, H') \\
				(x?z \Seq \pi_{n}(p), H, H') \trans{x?z}& (\pi_{n}(p), H, H')
				\end{alignedat}
				\end{equation}
			}
			
			\case{3}{$\at \triangleq x!z$}{
				
				The derivations of $\pi_{n+1}(x!z \Seq p)$ are as follows:
				\begin{enumerate}[leftmargin=.5in,label=(\alph*)]
					\setcounter{enumi}{4}
					\item{\label{enum::p1-e}
						\[
						\begin{array}{r@{}l@{}c@{}r@{}l@{}c@{}r@{}l@{}} \dedr{\cpolmsgsend}&\sosrule{}{ (x!z \Seq p, H, H')  \trans{x!z} ( p, H, H')}\\
						\dedr{{\pi^{}}} & \sosrule{}{ (\pi_{n+1}(x!z \Seq p), H, H')  \trans{x!z} ( \pi_{n}(p), H, H')}
						\end{array} 
						\] 
					}
				\end{enumerate}
				The derivations of $x!z \Seq \pi_{n}(p)$ are as follows:
				\begin{enumerate}[leftmargin=.5in,label=(\alph*)]
					\setcounter{enumi}{5}
					\item{\label{enum::p1-f}
						\[
						\begin{array}{l@{}c@{}r@{}l@{}c@{}r@{}l@{}}
						\dedr{\cpolmsgsend}
						\sosrule{}{ (x!z \Seq \pi_{n}(p), H, H')  \trans{x!z} (\pi_{n}(p), H, H')}
						\end{array} 
						\] 
					}
				\end{enumerate}
				
				As demonstrated in \ref{enum::p1-e} and \ref{enum::p1-f}, both of the terms $\pi_{n+1}(x!z \Seq p)$ and $x!z \Seq \pi_{n}(p)$ initially only afford the $x!z$ transition and they converge to the same expression after taking this transition:
				\begin{equation}
				\label{eq::p1-r3}
				\begin{alignedat}{2}
				(\pi_{n+1}(x!z \Seq p), H, H') \trans{x!z}& (\pi_{n}(p), H, H') \\
				(x!z \Seq \pi_{n}(p), H, H') \trans{x!z}& (\pi_{n}(p), H, H')
				\end{alignedat}
				\end{equation}
			}
			
			\case{4}{$\at \triangleq \recp{x, z}$}{
				
				The derivations of $\pi_{n+1}(\recp{x, z} \Seq p)$ are as follows:
				\begin{enumerate}[leftmargin=.5in,label=(\alph*)]
					\setcounter{enumi}{6}
					\item{\label{enum::p1-g}
						\[
						\begin{array}{r@{}l@{}c@{}r@{}l@{}c@{}r@{}l@{}} 
						\dedr{\recp{x,z}}&\sosrule{}{ (\recp{x, z} \Seq p, H, H')  \trans{\rec{x, z}} ( p, H, H')}\\
						\dedr{{\delta^{}}} & \sosrule{}{ (\pi_{n+1}(\recp{x, z} \Seq p), H, H')  \trans{\rec{x, z}} ( \pi_{n}(p), H, H')}
						\end{array} 
						\] 
					}
				\end{enumerate}
				The derivations of $\recp{x,z} \Seq \pi_{n}(p)$ are as follows:
				\begin{enumerate}[leftmargin=.5in,label=(\alph*)]
					\setcounter{enumi}{7}
					\item{\label{enum::p1-h}
						\[
						\begin{array}{l@{}c@{}r@{}l@{}c@{}r@{}l@{}}
						\dedr{\recp{x,z}}
						\sosrule{}{ (\recp{x, z} \Seq \pi_{n}(p), H, H')  \trans{\rec{x, z}} (\pi_{n}(p), H, H')}
						\end{array} 
						\] 
					}
				\end{enumerate}
				
				As demonstrated in \ref{enum::p1-g} and \ref{enum::p1-h}, both of the terms $\pi_{n+1}(\recp{x, z} \Seq p)$ and $\recp{x, z} \Seq \pi_{n}(p)$ initially only afford the $\rec{x, z}$ transition and they converge to the same expression after taking this transition:
				\begin{equation}
				\label{eq::p1-r4}
				\begin{alignedat}{2}
				(\pi_{n+1}(\recp{x, z} \Seq p), H, H') \trans{\rec{x, z}}& (\pi_{n}(p), H, H') \\
				(\recp{x, z} \Seq \pi_{n}(p), H, H') \trans{\rec{x, z}}& (\pi_{n}(p), H, H')
				\end{alignedat}
				\end{equation}
			}
		\end{caseof}
		
		Therefore, if $\at \not \in {\mathcal{L}}$, by (\ref{eq::p1-r1}), (\ref{eq::p1-r2}), (\ref{eq::p1-r3}) and (\ref{eq::p1-r4}) it is straightforward to conclude that the following holds:
		\begin{equation}
		(\pi_{n+1}(\at \Seq p)) \sim (\at \Seq \pi_{n}(p))
		\end{equation}
	}
	
	\item{
	Axiom under consideration:
	\begin{equation}
	\pi_{n}(p\oplus q) \equiv \pi_{n}(p) \oplus \pi_{n}(q) \quad (\pi{\oplus})
	\end{equation}
	for $p, q \in {\DNK}$. Observe that if $n=0$, then both of the terms do not afford any transition and bisimilarity holds trivially. If $n > 0$, according to the semantic rules of {\DNK}, the following are the possible transitions that can initially occur in the terms $\pi_{n}(p\oplus q)$ and $\pi_{n}(p) \oplus \pi_{n}(q)$:
	\begin{gather*}
	\begin{cases}
	(1)\;(p, H_0, H'_0)  \trans{\gamma} (p', H_1, H'_1)\\
	(2)\;(q, H_0, H'_0)  \trans{\gamma} (q', H_1, H'_1)
	\end{cases}
	\end{gather*}
	$\gamma ::= (\sigma, \sigma') \mid x!z \mid x?z \mid {\rec{x,z}}$
	\begin{caseof}
		\case{1}{$(p, H_0, H'_0)  \trans{\gamma} (p', H_1, H'_1)$}{
			
			The derivations of $\pi_{n}(p\oplus q)$ are as follows:
			\begin{enumerate}[leftmargin=.5in,label=(\alph*)]
				\item{\label{enum::p2-a}
					\[
					\begin{array}{r@{}l@{}c@{}r@{}l@{}c@{}r@{}l@{}}
					\dedr{\cpoloplusl}&\sosrule{(p, H_0, H'_0)  \trans{\gamma} (p', H_1, H'_1)  }{ (p \oplus q, H_0, H'_0)  \trans{\gamma} ( p', H_1, H'_1)}\\
					\dedr{{\pi^{}}}&\sosrule{}{ (\pi_{n}(p\oplus q), H_0, H'_0)  \trans{\gamma} ( \pi_{n-1}(p'), H_1, H'_1)}
					\end{array} 
					\] 
				}
			\end{enumerate}
			The derivations of $\pi_{n}(p) \oplus \pi_{n}(q)$ are as follows:
			\begin{enumerate}[leftmargin=.5in,label=(\alph*)]
				\setcounter{enumi}{1}
				\item{\label{enum::p2-b}
					\[
					\begin{array}{r@{}l@{}c@{}r@{}l@{}c@{}r@{}l@{}}
					\dedr{{\pi^{}}} &
					\sosrule{(p, H_0, H'_0)  \trans{\gamma} (p', H_1, H'_1)  }{ (\pi_{n}(p), H_0, H'_0)  \trans{\gamma} (\pi_{n-1}(p'), H_1, H'_1)}\\
					\dedr{\cpoloplusl} &
					\sosrule{}{ (\pi_{n}(p) \oplus \pi_{n}(q), H_0, H'_0)  \trans{\gamma} (\pi_{n-1}(p'), H_1, H'_1)}
					\end{array} 
					\] 
				}
			\end{enumerate}
			
			As demonstrated in \ref{enum::p2-a} and \ref{enum::p2-b}, if $(p, H_0, H'_0)  \trans{\gamma} (p', H_1, H'_1)$ holds then both of the terms $\pi_{n}(p\oplus q)$ and $\pi_{n}(p) \oplus \pi_{n}(q)$ converge to the same expression with the $\gamma$ transition:
			\begin{equation}
			\label{eq::p2-r1}
			\begin{alignedat}{2}
			(\pi_{n}(p\oplus q), H_0, H'_0) \trans{\gamma}& (\pi_{n-1}(p'), H_1, H'_1)\\
			(\pi_{n}(p) \oplus \pi_{n}(q), H_0, H'_0) \trans{\gamma}& (\pi_{n-1}(p'), H_1, H'_1)
			\end{alignedat}
			\end{equation}
		}
		
		\case{2}{$(q, H_0, H'_0)  \trans{\gamma} (q', H_1, H'_1)$}{
			
			The derivations of $\pi_{n}(p\oplus q)$ are as follows:
			\begin{enumerate}[leftmargin=.5in,label=(\alph*)]
				\setcounter{enumi}{2}
				\item{\label{enum::p2-c}
					\[
					\begin{array}{r@{}l@{}c@{}r@{}l@{}c@{}r@{}l@{}}
					\dedr{\cpoloplusr}&\sosrule{(q, H_0, H'_0)  \trans{\gamma} (q', H_1, H'_1)  }{ (p \oplus q, H_0, H'_0)  \trans{\gamma} ( q', H_1, H'_1)}\\
					\dedr{{\pi^{}}}&\sosrule{}{ (\pi_{n}(p\oplus q), H_0, H'_0)  \trans{\gamma} ( \pi_{n-1}(q'), H_1, H'_1)}
					\end{array} 
					\] 
				}
			\end{enumerate}
			The derivations of $\pi_{n}(p) \oplus \pi_{n}(q)$ are as follows:
			\begin{enumerate}[leftmargin=.5in,label=(\alph*)]
				\setcounter{enumi}{3}
				\item{\label{enum::p2-d}
					\[
					\begin{array}{r@{}l@{}c@{}r@{}l@{}c@{}r@{}l@{}}
					\dedr{{\pi^{}}} &
					\sosrule{(q, H_0, H'_0)  \trans{\gamma} (q', H_1, H'_1)  }{ (\pi_{n}(q), H_0, H'_0)  \trans{\gamma} (\pi_{n-1}(q'), H_1, H'_1)}\\
					\dedr{\cpoloplusr} &
					\sosrule{}{ (\pi_{n}(p) \oplus \pi_{n}(q), H_0, H'_0)  \trans{\gamma} (\pi_{n-1}(q'), H_1, H'_1)}
					\end{array} 
					\] 
				}
			\end{enumerate}
			
			As demonstrated in \ref{enum::p2-c} and \ref{enum::p2-d}, if $(q, H_0, H'_0)  \trans{\gamma} (q', H_1, H'_1)$ holds then both of the terms $\pi_{n}(p\oplus q)$ and $\pi_{n}(p) \oplus \pi_{n}(q)$ converge to the same expression with the $\gamma$ transition:
			\begin{equation}
			\label{eq::p2-r2}
			\begin{alignedat}{2}
			(\pi_{n}(p\oplus q), H_0, H'_0) \trans{\gamma}& (\pi_{n-1}(q'), H_1, H'_1)\\
			(\pi_{n}(p) \oplus \pi_{n}(q), H_0, H'_0) \trans{\gamma}& (\pi_{n-1}(q'), H_1, H'_1)
			\end{alignedat}
			\end{equation}
		}
	\end{caseof}
	Therefore, by (\ref{eq::p2-r1}) and (\ref{eq::p2-r2}) it is straightforward to conclude that the following holds:
	\begin{equation}
	(\pi_{n}(p\oplus q)) \sim (\pi_{n}(p) \oplus \pi_{n}(q))
	\end{equation}
}
\end{itemize}

\end{ARXIV}
\end{ARXIV}
\end{document}